\newcommand{\rrc}{\color{black}}
\newcommand{\rmargin}[1]{%

  \marginpar{%
    \rrc%
    \raggedright%
    \fontsize{7}{7.5}%
    \selectfont%

\ifthenelse{\isodd{\thepage}}
   {
\begin{tabular}{cc}
~~~~& \begin{minipage}{2.2cm}#1\end{minipage}
\end{tabular}
	}
  {#1}
  }
}
\newcommand{\rnew}[1]{{\rrc #1}}
\def\qed {{\parfillskip=0pt \widowpenalty=10000 \displaywidowpenalty=10000 \finalhyphendemerits=0 \leavevmode \unskip \nobreak \hfil\penalty50 \hskip.2em \null \hfill $\square$ \par}}
\newcommand{\Nz}{\ensuremath{\mathbb{N}_0}}
\newcommand{\Paths}{\ensuremath{\mathit{Paths}}}
\newcommand{\LabPaths}{\ensuremath{\mathit{LabPaths}}}
\newcommand{\shuffle}{\mathbin{|\hspace{-0.1em}|}}
\newcommand{\dblslash}{\ensuremath{/\!/}}
\newcommand{\wc}{\mathord{\star}}
\newcommand{\lab}{\mathit{lab}}
\renewcommand{\root}{\mathit{root}}
\newcommand{\child}{\mathit{child}}
\newcommand{\ch}{\mathit{ch}}
\newcommand{\desc}{\mathit{desc}}
\newcommand{\simbu}{\rnew{\mathit{symbols}^\forall}}
\newcommand{\simbe}{\rnew{\mathit{symbols}^\exists}}
\newcommand{\minnb}{\mathit{min\_nb}}
\newcommand{\true}{\mathit{true}}
\newcommand{\false}{\mathit{false}}
\newcommand{\open}{\mathit{open}}
\newcommand{\close}{\mathit{close}}
\newcommand{\Twig}{\ensuremath{\mathit{Twig}}}
\newcommand{\Tree}{\ensuremath{\mathit{Tree}}}
\newcommand{\CNT}{\mathrm{CNT}}
\newcommand{\SAT}{\mathrm{SAT}}
\newcommand{\IMPL}{\mathrm{IMPL}}
\newcommand{\MEMB}{\mathrm{MEMB}}
\newcommand{\DTD}{\mathit{DTD}}
\newcommand{\sims}{\mathit{IMS}}
\newcommand{\dims}{\mathit{DIMS}}
\newcommand{\counta}{\mathit{count}}
\newcommand{\readb}{\mathit{read}}
\newcommand{\impl}{\mathit{impl}}
\newcommand{\pmin}{\mathit{P^{\subseteq_{\tiny\min}}_E}}
\newcommand{\pminp}{\mathit{P^{\subseteq_{\tiny\min}}_{E'}}}
\newcommand{\pminpp}{\mathit{P^{\subseteq_{\tiny\min}}_{E_1}}}
\newcommand{\satt}{\mathrm{SAT}_{1\mbox{-}\textrm{in}\mbox{-}3}}
\newcommand{\pip}{\Pi_2^{\mathrm P}}
\newcommand{\qbf}{\forall^*\exists^*\mathrm{QBF}}
\newcounter{LineCounter@algorithm} 
\newenvironment{BasicCommands@algorithm}{%
  \newcommand{\TAB}{\makebox[4ex][r]{}}%
  \newcommand{\ALGORITHM}{\textbf{algorithm}\xspace}%
  \newcommand{\INPUT}{\textbf{Input}\xspace}%
  \newcommand{\OUTPUT}{\textbf{Output}\xspace}%
  \newcommand{\LET}{\textbf{let}\xspace}%
  \newcommand{\IF}{\textbf{if}\xspace}%
  \newcommand{\THEN}{\textbf{then}\xspace}%
  \newcommand{\WHILE}{\textbf{while}\xspace}%
  \newcommand{\FOR}{\textbf{for}\xspace}%
  \newcommand{\DO}{\textbf{do}\xspace}%
  \newcommand{\RETURN}{\textbf{return}\xspace}%
}{%
}
\newenvironment{algorithm*}%
{%
\begin{BasicCommands@algorithm}%
\list{}{\itemindent 0em%
        \listparindent\itemindent
        \rightmargin  \leftmargin}%
\item\relax
}{%
\endlist
\end{BasicCommands@algorithm}%
}
\newenvironment{algorithm}%
{%
  \newcommand{\ResetLineCounter}{%
    \setcounter{LineCounter@algorithm}{0}%
  }%
  \newcommand{\LN}{%
    \makebox[0pt][l]{%
      \makebox[0pt][l]{%
        \addtocounter{LineCounter@algorithm}{1}%
      }%
      \makebox[10.75pt][r]{
        \fontsize{7}{10}%
        \selectfont%
        \arabic{LineCounter@algorithm}:%
      }%
    }%
    \TAB%
  }%
  \begin{BasicCommands@algorithm}%
  \begin{float@algorithm}%
    \ResetLineCounter%
}%
{%
  \end{float@algorithm}%
  \end{BasicCommands@algorithm}%
}
\begin{document}
\title{Schemas for Unordered XML on a DIME\thanks{A preliminary version of this article has appeared in the Proceedings of the 16th International Workshop on the Web and Databases (WebDB), 2013~\cite{BoCiSt13}.}}
\author{Iovka~Boneva \and Radu~Ciucanu \and S\l awek~Staworko}
\authorrunning{Iovka~Boneva, Radu~Ciucanu, S\l awek~Staworko}
\institute{University of Lille \& INRIA.
\email{\{iovka.boneva,radu.ciucanu,slawomir.staworko\}@inria.fr}.
\\
Parc scientifique de la Haute Borne - 40, avenue Halley - B\^at B - Park Plaza, 59650 Villeneuve d'Ascq - France.}

\maketitle
\thispagestyle{empty}
\begin{abstract}
We investigate schema languages for unordered XML having no relative order among siblings.
First, we propose \emph{unordered regular expressions} (UREs), essentially regular expressions with \emph{unordered concatenation} instead of standard concatenation, that define languages of unordered words to model the allowed content of a node (i.e., collections of the labels of children). 
However, unrestricted UREs are computationally too expensive as we show the intractability of two fundamental decision problems for UREs: membership of an unordered word to the language of a URE and containment of two UREs. 
Consequently, we propose a practical and tractable restriction of UREs, \emph{disjunctive interval multiplicity expressions} (DIMEs).

Next, we employ DIMEs to define languages of unordered trees and propose two schema languages: \emph{disjunctive interval multiplicity schema} (DIMS), and its restriction, \emph{disjunction-free interval multiplicity schema} (IMS). 
We study the complexity of the following static analysis problems: schema satisfiability, membership of a tree to the language of a schema, schema containment, as well as twig query satisfiability, implication, and containment in the presence of schema. 
Finally, we study the expressive power of the proposed schema languages and compare them with yardstick languages of unordered trees (FO, MSO, and Presburger constraints) and DTDs under commutative closure. 
Our results show that the proposed schema languages are capable of expressing many practical languages of unordered trees and enjoy desirable computational properties.
\end{abstract}

\keywords{Schemas for XML, Unordered XML, Regular expressions, Twig queries, Semi-structured data.}

\section{Introduction}
When XML is used for \emph{document-centric} applications, the
relative order among the elements is typically important e.g., the relative order of paragraphs and chapters in a book. On the other hand, in case of \emph{data-centric} XML applications, the order among the elements may be unimportant~\cite{AbBoVi12}. 
In this paper we focus on the latter use case. 
As an example, take a trivialized fragment of an XML document containing the DBLP repository in Figure~\ref{fig:dblp}. 
While the order of the elements \rnew{title}, \rnew{author}, and \rnew{year} may differ from one publication to
another, it has no impact on the semantics of the data stored in this semi-structured database.
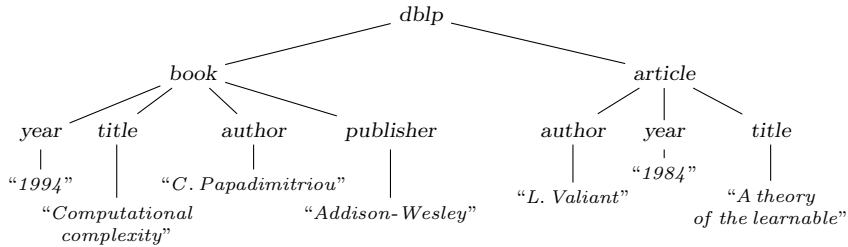
\begin{figure}[h]
  \centering
  \begin{tikzpicture}[yscale=0.75,xscale=2]\small
		\node at (0,0) (n0) {\sl dblp};
		\node at (-1.5,-1) (n10) {\sl book}edge[-] (n0);
		\node at (-2.5, -2.1) (m111) {\sl year}edge[-] (n10);
		\node at (-2, -2) (m112) {\sl title}edge[-] (n10);
		\node at (-1.1, -2) (m113) {\sl author}edge[-] (n10);
		\node at (-0.2, -2.05) (m114) {\sl publisher}edge[-] (n10);
		\node at (-2.5,-3) {\scriptsize ``$\mathit{1994}$''} edge[-](m111);
		\node at (-2,-3.5) {\scriptsize ``$\mathit{Computational}$} edge[-] (m112);
		\node at (-2,-3.9) {\scriptsize $\mathit{complexity}$''};
		\node at (-1.1,-3) {\scriptsize ``$\mathit{C.\,Papadimitriou}$''} edge[-] (m113);
		\node at(-0.2, -3.5) {\scriptsize ``$\mathit{Addison}$-$\mathit{Wesley}$''} edge[-] (m114);
		\node at (1.6,-1) (n11) {\sl article}edge[-] (n0);
		\node at (1, -2) (m111) {\sl author}edge[-] (n11);
		\node at (1.6, -2.1) (m112) {\sl year}edge[-] (n11);
		\node at (2.3, -2) (m113) {\sl title}edge[-] (n11);
		\node at (1,-3.2) {\scriptsize``$\mathit{L.\,Valiant}$''} edge[-](m111);
		\node at (1.6,-2.8) {\scriptsize``$\mathit{1984}$''} edge[-] (m112);
		\node at (2.3,-3.2) {\scriptsize``$\mathit{A\,theory}$} edge[-] (m113);
		\node at (2.3,-3.6) {\scriptsize $\mathit{of\,the\,learnable}$''};
  \end{tikzpicture}
  \caption{\label{fig:dblp}A trivialized DBLP repository.}
\end{figure}

Typically, a \emph{schema} for XML defines for every node its \emph{content model} i.e., the children nodes it must, may, and cannot contain. 
For instance, in the DBLP example, one would require every \rnew{article} to have exactly one \rnew{title}, one \rnew{year}, and one or more \rnew{authors}. 
A \rnew{book} may additionally contain one \rnew{publisher} and may also have one or more \rnew{editors} instead of \rnew{authors}. 
A schema has numerous important uses. 
For instance, it allows to validate a document against a schema and identify potential errors. 
A schema also serves as a reference for a user who does not know yet the structure of the XML document and attempts to query or modify its \rnew{content}.

The \emph{Document Type Definition} (DTD), the most widespread XML schema formalism for (ordered) XML \cite{BeNeVa04,GrMa11}, is
essentially a set of rules associating with each label a regular
expression that defines the admissible sequences of children. 
The DTDs are best fitted for ordered content because they use regular expressions, a formalism that defines sequences of labels. 
However, when unordered content model needs to be defined, there is a tendency to use \emph{over-permissive} regular expressions.
For instance, the DTD below corresponds to the one used in practice for the DBLP repository\rnew{\footnote{\rnew{\url{http://dblp.uni-trier.de/xml/dblp.dtd}}}}:
\begin{flalign*}
\textsl{dblp} \rightarrow &~~ (\textsl{article} \mid \textsl{book})^*\\
\textsl{article} \rightarrow &~~ (\textsl{title} \mid \textsl{year} \mid \textsl{author})^* \\
\textsl{book} \rightarrow &~~ (\textsl{title} \mid \textsl{year} \mid \textsl{author} \mid \textsl{editor} \mid \textsl{publisher})^*
\end{flalign*}
This DTD allows an \rnew{article} to contain any number of \rnew{titles}, \rnew{years}, and \rnew{authors}. 
A \rnew{book} may also have any number of \rnew{titles}, \rnew{years}, \rnew{authors}, \rnew{editors}, and \rnew{publishers}. These regular expressions are clearly over-permissive because they allow XML documents that do not follow the intuitive guidelines set out earlier e.g., an XML document containing an \rnew{article} with two \rnew{titles} and no \rnew{author}
should not be valid.

While it is possible to capture unordered content models with regular expressions, a simple pumping argument shows that their size may need to be exponential in the number of possible labels of the children. 
In case of the DBLP repository, this number reaches values up to 12, which basically precludes any practical use of such regular expressions. 
This suggests that over-permissive regular expressions may be employed for the reasons of conciseness and readability, a consideration of great practical importance.

The use of over-permissive regular expressions, apart from allowing documents that do not follow the guidelines, has other negative consequences e.g., in static analysis tasks that involve the schema. 
Take for example the following two twig
queries~\cite{ACLS02,XPath1}:
\begin{align*}
&/\textsl{dblp}/\textsl{book}[\textsl{author}=\mbox{``$\mathit{C.\,Papadimitriou}$''}]\\
&/\textsl{dblp}/\textsl{book}[\textsl{author}=\mbox{``$\mathit{C.\,Papadimitriou}$''}][\textsl{title}]
\end{align*}
The first query selects the elements labeled \rnew{book}, children of \rnew{dblp} and having an \rnew{author} containing the text ``\emph{C.\ Papadimitriou.}'' 
The second query additionally requires that \rnew{book} has a \rnew{title}. 
Naturally, these two queries should be equivalent because every \rnew{book} should have a \rnew{title}. 
However, the DTD above does not capture properly this requirement, and consequently the two queries are not equivalent w.r.t.\ this DTD.

\rnew{In this paper, we investigate schema languages for unordered XML.
First, we study languages of \emph{unordered words}, where an unordered word can be seen as a multiset of symbols.
We consider \emph{unordered regular expressions} (UREs), which are essentially regular expressions with \emph{unordered concatenation} ``$\shuffle$'' instead of standard concatenation.
The unordered concatenation can be seen as union of multisets, and consequently, the star ``$*$'' can be seen as the Kleene closure of unordered languages.
}
Similarly to a DTD which associates to each label a regular expression to define its (ordered) content model, an unordered schema uses UREs to define for each label its unordered content model. 
For instance, take the following schema (satisfied by the tree in Figure~\ref{fig:dblp}):
\begin{flalign*}
\textsl{dblp} \rightarrow &~~ \textsl{article}^* \shuffle \textsl{book}^*\\
\textsl{article} \rightarrow &~~ \textsl{title} \shuffle \textsl{year} \shuffle \textsl{author}^+\\
\textsl{book} \rightarrow &~~ \textsl{title} \shuffle \textsl{year} \shuffle \textsl{publisher}^? \shuffle (\textsl{author}^+\mid\textsl{editor}^+)
\end{flalign*}
The above schema uses UREs and captures the intuitive requirements for the DBLP repository. 
In particular, an \rnew{article} must have exactly
one \rnew{title}, exactly one \rnew{year}, and at least one \rnew{author}. 
A \rnew{book} may additionally have a \rnew{publisher} and
may have one or more \rnew{editors} instead of \rnew{authors}.  Note that, unlike the DTD defined earlier, this schema does not allow documents having an \rnew{article} with several \rnew{titles} or without any \rnew{author}.

Using UREs is equivalent to using DTDs with regular expressions
interpreted under the \emph{commutative closure}~\cite{BeMi99,NeSc99}: essentially, a word matches the commutative closure of a regular expression if there exists a permutation of the word that matches the regular expression in the standard way.
Deciding this problem is known to be NP-complete~\cite{KoTo10} for arbitrary regular expressions. 
We show that the problem of testing the membership of an unordered word to the language of a URE is NP-complete even for a restricted subclass of UREs that allows unordered concatenation and the option operator ``$?$'' only. 
Not surprisingly, testing the containment of two UREs is also intractable. 
These results are of particular interest because they are novel and do not follow from complexity results for regular expressions, where the order plays typically an essential role~\cite{StMe73,MaSt94}. 
Consequently, we focus on finding restrictions rendering UREs tractable and capable of capturing practical languages in a simple and concise manner.

The first restriction is to disallow repetitions of a symbol in a URE, thus banning expressions of the form $a\shuffle a^?$ because the symbol $a$ is used twice. 
Instead we add general interval multiplicities $a^{[1,2]}$ which offer a way to specify a range of occurrences of a symbol in an unordered word without repeating a symbol in the URE. 
While the complexity of the membership of an unordered word to the language of a URE with interval multiplicities and without symbol repetitions has recently been shown to be in PTIME~\cite{BGHPSS14}, testing containment of two such UREs remains intractable. 
We, therefore, add limitations on the nesting of the disjunction and the unordered concatenation operators and the use of intervals, which yields the proposed class of \emph{disjunctive interval multiplicity expressions} (DIMEs).
DIMEs enjoy good computational properties: both the membership and the containment problems become tractable. 
Also, we believe that despite the imposed restriction DIMEs remain a practical class of UREs. 
For instance, all UREs used in the schema for the DBLP repository above are DIMEs.

\begin{table*}
{\scriptsize
\setlength{\tabcolsep}{.48em}
\begin{tabular}{|l|l|l|l|l|}
  \hline
  \emph{Problem of interest} & $\DTD$ & $\dims$& \em disj.-free $\DTD$ & $\sims$\\\hline
  Schema satisfiability & PTIME \cite{BuWo98,Schwentick04b} & PTIME (Pr. \ref{dims:ptime}) & PTIME \cite{BuWo98,Schwentick04b} & PTIME (Pr. \ref{dims:ptime})  \\\hline
  Membership & PTIME \cite{BuWo98,Schwentick04b} & PTIME (Pr. \ref{dims:ptime2}) & PTIME \cite{BuWo98,Schwentick04b} & PTIME (Pr. \ref{dims:ptime2}) \\\hline
  Schema containment& 
\begin{minipage}{1.9cm}
PSPACE-c${}^\dag$\cite{Schwentick04b} PTIME~\cite{BuWo98}
\end{minipage} 
& PTIME (Pr. \ref{dims:ptime}) & 
\begin{minipage}{1.9cm}
coNP-h${}^\dag$\cite{MaNeSc09} PTIME~\cite{BuWo98}
\end{minipage}
& PTIME (Pr. \ref{dims:ptime}) \\\hline
Query satisfiability${}^\ddag$ & NP-c \cite{BeFaGe08} & NP-c (Pr. \ref{propsatqnphard}) & PTIME \cite{BeFaGe08} & PTIME (Th. \ref{sat-impl-ptime}) \\\hline
Query implication${}^\ddag$ & EXPTIME-c \cite{NeSc06}& EXPTIME-c (Pr. \ref{propexptimec})&PTIME (Th. \ref{thcor})& PTIME (Th. \ref{sat-impl-ptime})\\\hline
Query containment${}^\ddag$ & EXPTIME-c \cite{NeSc06} &EXPTIME-c (Pr. \ref{propexptimec}) &coNP-c (Th. \ref{thcor})&coNP-c (Th. \ref{thconp})  \\\hline
\multicolumn{5}{l}{{${}^\dag$ when non-deterministic regular expressions are used. ${}^\ddag$ for twig queries.}}
\end{tabular}
}
\caption{\label{tab:complexity_summary}Summary of complexity results.}

\end{table*}
Next, we employ DIMEs to define languages of unordered trees and propose two schema languages: \emph{disjunctive interval multiplicity   schema} (DIMS), and its restriction, \emph{disjunction-free interval multiplicity schema} (IMS). 
Naturally, the above schema for the DBLP repository is a DIMS. We study the complexity of several basic decision problems: schema satisfiability, membership of a tree to the language of a schema, containment of two schemas, twig query satisfiability, implication, and containment in the presence of schema. 
We present in Table~\ref{tab:complexity_summary} a summary of the complexity results and we observe that DIMSs and IMSs enjoy the
same computational properties as general DTDs and disjunction-free DTDs, respectively.  

The lower bounds for the decision problems for DIMSs and IMSs are generally obtained with easy adaptations of their counterparts for
general DTDs and disjunction-free DTDs. 
To obtain the upper bounds we develop several new tools. 
We propose to represent DIMEs with \emph{characterizing tuples} that can be efficiently computed and allow deciding in polynomial time the membership of a tree to the language of a DIMS and the containment of two DIMSs.
Also, we develop \emph{dependency graphs} for IMSs and a generalized definition of an \emph{embedding} of a query. 
These two tools help us to reason about query satisfiability, query implication, and query containment in the presence of IMSs. 
Our constructions and results for IMSs allow also to characterize the complexity of query implication and query containment in the presence of disjunction-free DTDs, which, to the best of our knowledge, have not been previously studied.

Finally, we compare the expressive power of the proposed schema languages with yardstick languages of unordered trees (FO, MSO, and
Presburger constraints) and DTDs under commutative closure.
We show that the proposed schema languages are capable of expressing many practical languages of unordered trees.

It is important to mention that this paper is a substantially extended version of a preliminary work presented in~\cite{BoCiSt13}.
More precisely, in this paper we show novel intractability results for some subclasses of unordered regular expressions and we extend the expressibility of the tractable subclasses.
While in~\cite{BoCiSt13} we have considered only simple multiplicities ($*,+,?$), in this paper we deal with arbitrary interval multiplicities of the form $[n,m]$.

{\bf Organization.}
In Section~\ref{sec:prelim} we introduce some preliminary notions.
In Section~\ref{sec:ure} we study the reasons of intractability of unordered regular expressions while in Section~\ref{subsec:dime} we present the tractable subclass of \emph{disjunctive interval multiplicity expressions} (DIMEs).
In Section~\ref{sec:static} we define two schema languages: the \emph{disjunctive interval multiplicity schemas} (DIMSs) and its restriction, the \emph{disjunction-free interval multiplicity schemas} (IMSs), and the related problems of interest.
In Section~\ref{dims} and Section~\ref{ms} we analyze the complexity of the problems of interest for DIMSs and IMSs, respectively.
In Section~\ref{sec:expressiveness} we discuss the expressiveness of the proposed formalisms.
\rnew{In Section~\ref{sec:related:work} we present related work.}
In Section~\ref{sec:conclusions} we summarize our results and outline further directions.

\section{Preliminaries}\label{sec:prelim}
Throughout this paper we assume an alphabet $\Sigma$ that is a finite set of symbols.
We also assume that $\Sigma$ has a total order $<_\Sigma$ that can be tested in constant time.

{\bf Trees.} 
We model XML documents with unordered labeled trees.
Formally, a {\em tree} $t$ is a tuple $(N_t,\root_t,\lab_t$, $\child_t)$, where $N_t$ is a finite set of nodes, $\root_t\in N_t$ is a distinguished root node, $\lab_t:N_t\rightarrow\Sigma$ is a labeling function, and $\child_t\subseteq N_t\times N_t$ is the parent-child relation.
We assume that the relation $\child_t$ is acyclic and require every non-root node to have exactly one predecessor in this relation.
By $\Tree$ we denote the set of all trees.

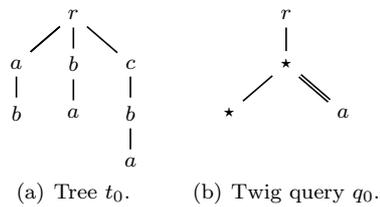
\begin{figure}[htb]
  \centering
  \subfigure[Tree $t_0$.]{\label{fig:tree}
    \centering
    \begin{tikzpicture}[yscale=0.65]
      \path[use as bounding box] (-1.25,.25) rectangle (1.25,-3.25);
      \node at (0,0) (n0) {$r$};
      \node at (-0.75,-1) (n1) {$a$};
      \node at (0,-1) (n4) {$b$};
      \node at (0,-2) (n5) {$a$};
      \node at (0.75,-1) (n2) {$c$};
      \node at (0.75,-2) (n3) {$b$};
      \node at (0.75,-3) (n6) {$a$};
      \node at (-0.75,-2) (n7) {$b$};
      \draw[-,semithick] (n1) -- (n7);      
      \draw[-,semithick] (n0) -- (n1);
      \draw[-,semithick] (n0) -- (n1);
      \draw[-,semithick] (n0) -- (n2);
      \draw[-,semithick] (n2) -- (n3);
      \draw[-,semithick] (n3) -- (n6);
      \draw[-,semithick] (n0) -- (n4);
      \draw[-,semithick] (n4) -- (n5);
    \end{tikzpicture}
  }
  \subfigure[Twig query $q_0$.]{\label{fig:twig0}
    \centering
    \begin{tikzpicture}[yscale=0.65]
      \path[use as bounding box] (-1.25,.25) rectangle (1.25,-3.25);
      \node at (0,0) (n0) {$r$};
      \node at (0,-1) (n2) {$\wc$};
      \node at (0.75,-2) (n3) {$a$};
      \node at (-0.75,-2) (n1) {$\wc$};
      \draw[-,semithick] (n2) -- (n1);
      \draw[-,semithick] (n0) -- (n2);
      \draw[-,double,semithick] (n2) -- (n3);
    \end{tikzpicture}
  }
  \caption{A tree and a twig query.}
  \label{fig:trees-twigs}
\end{figure}
{\bf Queries.}  
We work with the class of twig queries, which are essentially unordered trees whose nodes may be additionally labeled with a distinguished wildcard symbol $\wc\not\in\Sigma$ and that use two types of edges, child ($/$) and descendant ($\dblslash$),
corresponding to the standard XPath axes. 
Note that the \rnew{semantics of the $\dblslash$-edge} is that of a proper descendant (and not that of descendant-or-self).
Formally, a \emph{twig query} $q$ is a tuple $(N_q,\root_q,\lab_q,\child_q,$ $\desc_q)$, where $N_q$ is a finite set of nodes, $\root_q\in N_q$ is the root node, $\lab_q:N_q\rightarrow\Sigma\cup\{\wc\}$ is a labeling function, $\child_q\subseteq N_q\times N_q$ is a set of child edges, and $\desc_q\subseteq N_q\times N_q$ is a set of descendant edges.  
We assume that $\child_q\cap\desc_q=\emptyset$ and that the relation $\child_q\cup\desc_q$ is acyclic and we require every non-root node to have exactly one predecessor in this relation. 
By $\Twig$ we denote the set of all twig queries. 
Twig queries are often presented using the abbreviated XPath syntax~\cite{XPath1} e.g., the query $q_0$ in Figure~\ref{fig:twig0} can be written as $r/\wc[\wc]\dblslash{}a$.

{\bf Embeddings.} 
We define the semantics of twig queries using the notion of embedding which is essentially a mapping of nodes of a query to the nodes of a tree that respects the semantics of the edges of the query. 
Formally, for a query $q\in \Twig$ and a tree $t\in \Tree$, an \emph{embedding} of $q$ in $t$ is a function $\lambda : N_q \rightarrow N_t$ such that:
\begin{enumerate}
\itemsep0pt
\item[$1$.] $\lambda(\root_q)=\root_t$,
\item[$2$.] for every $(n,n')\in \child_q$,
  $(\lambda(n),\lambda(n'))\in \child_t$,
\item[$3$.] for every $(n,n')\in \desc_q$,
  $(\lambda(n),\lambda(n'))\in( \child_t)^+$ (the transitive closure of $\child_t$),
\item[$4$.] for every $n\in N_q$, $\lab_q(n) = \wc$ or $\lab_q(n) = \lab_t(\lambda(n))$.
\end{enumerate}
\rnew{We write $t\preccurlyeq q$ if there exists an embedding of $q$ in $t$.
Later on, in Section~\ref{subsec:embed} we generalize this definition of embedding as a tool that permits us characterizing the problems of interest.

As already mentioned, we use the notion of embedding to define the semantics of twig queries.
In particular, we say that $t$ \emph{satisfies} $q$ if there exists an embedding of $q$ in $t$ and we write $t\models q$.
By $ L(q)$ we denote the set of all trees satisfying $q$. 
}

Note that we do not require the embedding to be injective i.e., two nodes of the query may be mapped to the same node of the tree.  
Figure~\ref{fig:embeddings} presents all embeddings of the query $q_0$ in the tree $t_0$ from Figure~\ref{fig:trees-twigs}.

\begin{figure}[htb]
  \centering
  \begin{tikzpicture}[yscale=0.65]
    \node at (0,0) (n0) {$r$};
    \node at (-0.75,-1) (n1) {$a$};
    \node at (0,-1) (n4) {$b$};
    \node at (0,-2) (n5) {$a$};
    \node at (0.75,-1) (n2) {$c$};
    \node at (0.75,-2) (n3) {$b$};
    \node at (0.75,-3) (n6) {$a$};
    \node at (-0.75,-2) (n7) {$b$};
    \draw[-,semithick] (n1) -- (n7);      
    \draw[-,semithick] (n0) -- (n1);
    \draw[-,semithick] (n0) -- (n2);
    \draw[-,semithick] (n2) -- (n3);
    \draw[-,semithick] (n3) -- (n6);
    \draw[-,semithick] (n0) -- (n4);
    \draw[-,semithick] (n4) -- (n5);
    \begin{scope}[xshift=-2.75cm]
      \node at (0,0) (m0) {$r$};
      \node at (0,-1) (m2) {$\wc$};
      \node at (0.75,-2) (m3) {$a$};
      \node at (-0.75,-2) (m1) {$\wc$};
      \draw[-,semithick] (m2) -- (m1);
      \draw[-,semithick] (m0) -- (m2);
      \draw[-,double,semithick] (m2) -- (m3);
    \draw (m0) edge[->,bend left,densely dotted] (n0);
    \draw (m1) edge[->,bend right,densely dotted] (n5);
    \draw (m2) edge[->,bend left,densely dotted] (n4);
    \draw (m3) edge[->,bend left,densely dotted] (n5);
    \end{scope}
    \begin{scope}[xshift=2.75cm]
      \node at (0,0) (m0) {$r$};
      \node at (0,-1) (m2) {$\wc$};
      \node at (0.75,-2) (m3) {$a$};
      \node at (-0.75,-2) (m1) {$\wc$};
      \draw[-,semithick] (m2) -- (m1);
      \draw[-,semithick] (m0) -- (m2);
      \draw[-,double,semithick] (m2) -- (m3);
    \draw (m0) edge[->,bend right,densely dotted] (n0);
    \draw (m1) edge[->,bend right,densely dotted] (n3);
    \draw (m2) edge[->,bend right,densely dotted] (n2);
    \draw (m3) edge[->,bend left,densely dotted] (n6);
    \end{scope}
  \end{tikzpicture}
  \caption{Embeddings of $q_0$ in $t_0$.}
  \label{fig:embeddings}
\end{figure}
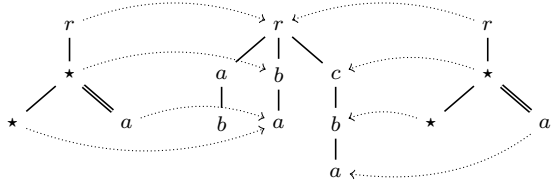
{\bf Unordered words.} 
An \emph{unordered word} is essentially a multiset of symbols i.e., a function $w:\Sigma\rightarrow\mathbb N_0$ mapping symbols from the alphabet to natural numbers. 
We call $w(a)$ the number of occurrences of the symbol $a$ in $w$. 
We also write $a\in w$ as a shorthand for $w(a)\neq 0$. 
An empty word $\varepsilon$ is an unordered word that has $0$ occurrences of every symbol i.e., $\varepsilon(a)=0$ for every
$a\in\Sigma$. 
We often use a simple representation of unordered words, writing each symbol in the alphabet the number of times it occurs in
the unordered word. For example, when the alphabet is $\Sigma=\{a,b,c\}$, $w_0=aaacc$ stands for the function $w_0(a) = 3$,
$w_0(b) = 0$, and $w_0(c) = 2$. 
Additionally, we may write $w_0=a^3c^2$ instead of $w_0=aaacc$.

We use unordered words to model collections of children of XML nodes. 
\rnew{
As it is usually done in the context of XML validation~\cite{SeVi02,SeSi07}, we assume that the XML document is encoded in unary i.e., every node takes the same amount of memory.
Thus, we use a unary representation of unordered words, where each occurrence of a symbol occupies the same amount of space.
However, we point out that none of the results presented in this paper changes with a binary representation.
In particular, the intractability of the membership of an unordered word to the language of a URE (Theorem~\ref{th:memb:npc}) also holds with a binary representation of unordered words.
}

Consequently, the \emph{size} of an unordered word $w$, denoted $|w|$, is the sum of the numbers of occurrences in $w$ of all symbols in the alphabet. 
For instance, the size of $w_0=aaacc$ is $|w_0|=5$.

The (\emph{unordered}) \emph{concatenation} of two unordered words $w_1$ and $w_2$ is defined as the multiset union $w_1\uplus w_2$ i.e., the function defined as $(w_1\uplus w_2)(a) = w_1(a)+w_2(a)$ for every $a\in\Sigma$. 
For instance, $aaacc\uplus{}abbc=aaaabbccc$. 
Note that $\varepsilon$ is the identity element of the unordered concatenation $\varepsilon\uplus w = w\uplus \varepsilon = w$ \rnew{for every unordered word} $w$. 
Also, given an unordered word $w$, by $w^i$ we denote the concatenation $w\uplus\ldots\uplus w$ ($i$ times).

A \emph{language} is a set of unordered words. 
The unordered concatenation of two languages $L_1$ and $L_2$ is a language $L_1\uplus L_2 = \{w_1\uplus w_2\mid w_1\in L_1, w_2\in L_2\}$. 
For instance, if $L_1 = \{a, aac\}$ and $L_2 = \{ac, b, \varepsilon\}$, then $L_1\uplus L_2 = \{a,ab,aac,aabc,aaacc\}$.

{\bf Unordered regular expressions.} 
Analogously to regular expressions, which are used to define languages of ordered words, we propose unordered regular expressions to define languages of unordered words.
Essentially, an \emph{unordered regular expression} (URE) defines unordered words by using Kleene star ``$*$'', disjunction ``$\mid$'', and unordered concatenation ``$\shuffle$''.
Formally, we have the following grammar:
\[
E ::= \epsilon \mid a  \mid E^* \mid (E\mbox{``$\mid$''} E) \mid (E\mbox{``$\shuffle$''} E),
\]
where $a\in\Sigma$.  
The semantics of UREs is defined as follows:
\begin{align*}
&L(\epsilon) = \{ \varepsilon \}, \\
&L(a) = \{a\}, \\
&L(E_1\mid E_2) = L(E_1) \cup L(E_2),\\
&L(E_1\shuffle E_2) = L(E_1) \uplus L(E_2),\\
&L(E^*) = \{w_1\uplus\ldots\uplus w_i\mid w_1,\ldots,w_i\in L(E)\wedge i\geq0\}.
\end{align*}
For instance, the URE $(a\shuffle (b\mid c))^*$ accepts the unordered words having the number of occurrences of $a$ equal to the total number of $b$'s and $c$'s.

The grammar above uses only one \emph{multiplicity} $*$ and we introduce macros for two other standard and commonly used multiplicities:
\begin{gather*}
E^+ \colonequals E\shuffle E^*, \qquad 
E^? \colonequals E\mid\epsilon.
\end{gather*}
The URE $(a\shuffle b^?)^+ \shuffle (a\mid c)^?$ accepts the unordered words having at least one $a$, at most one $c$, and a number of $b$'s less or equal than the number of $a$'s.

{\bf Interval multiplicities.} 
While the multiplicities $*$, $+$, and $?$ allow to specify unordered words with multiple occurrences of a symbol, we additionally introduce \emph{interval multiplicities} to allow to specify a range of allowed occurrences of a symbol in an unordered word. 
More precisely, we extend the grammar of UREs by allowing expressions of the form $E^{[n,m]}$ and \rnew{$E^{[n,m]^?}$}, where $n\in\mathbb N_0$ and $m\in \mathbb N_0\cup\{\infty\}$. 
Their semantics is defined as follows:
\begin{align*}  
  &L(E^{[n,m]}) = \{w_1\uplus\ldots\uplus w_i\mid {}
  w_1,\ldots,w_i\in L(E) \land n \leq i\leq m\},
  \\
  & \rnew{L(E^{[n,m]^?})} = L(E^{[n,m]}) \cup \{\varepsilon\}.
\end{align*}
In the rest of the paper, we write simply \emph{interval} instead of \emph{interval multiplicity}. 
Furthermore, we view the following standard multiplicities as macros for intervals:
\begin{gather*}
* := [0,\infty], \qquad + := [1,\infty], \qquad ? := [0,1].
\end{gather*}
Additionally, we introduce the single occurrence multiplicity $1$ as a macro for the interval $[1,1]$.

Note that the intervals do not add expressibility to general UREs, but they become useful if we impose some restrictions.
For example, if we disallow repetitions of a symbol in a URE and ban expressions of the form $a\shuffle a^?$, we can however write $a^{[1,2]}$ to specify a range of occurrences of a symbol in an unordered word without repeating a symbol in the URE.

\section{Intractability of unordered regular expressions}\label{sec:ure}
In this section, we study the reasons of the intractability of UREs w.r.t.\ the following two fundamental decision problems: \emph{membership} and \emph{containment}.
In Section~\ref{subsec:memb} we show that membership is NP-complete even under significant restrictions on the UREs while in Section~\ref{subsec:cnt} we show that the containment is $\pip$-hard (and in 3-EXPTIME).
We notice that the proofs of both results rely on UREs allowing repetitions of the same symbol.
Consequently, we disallow such repetitions and we show that this restriction does not avoid intractability of the containment (Section~\ref{subsec:interval}).
We observe that the proof of this result employs UREs with arbitrary use of disjunction and intervals, and therefore, in Section~\ref{subsec:dime} we impose further restrictions and define the disjunctive interval multiplicity expressions (DIMEs), a subclass for which we show that the two problems of interest become tractable.

\subsection{Membership}\label{subsec:memb}
In this section, we study the problem of deciding the \emph{membership} of an unordered word to the language of a URE.
First of all, note that this problem can be easily reduced to testing the membership of a vector to the Parikh image of a regular language, known to be NP-complete~\cite{KoTo10}, and vice versa. 
We show that deciding the membership of an unordered word to the language a URE remains NP-complete even under significant restrictions on the class of UREs, a result which does not follow from~\cite{KoTo10}.

\begin{theorem}\label{th:memb:npc}
Given an unordered word $w$ and an expression $E$ of the grammar $E ::= a\mid E^?\mid (E\text{\emph{``$\shuffle$''}} E)$, deciding whether $w\in L(E)$ is NP-complete. 
\end{theorem}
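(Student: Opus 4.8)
The plan is to prove both membership in NP and NP-hardness. The upper bound is the easy direction. Since the grammar contains neither the star nor the plus, the language $L(E)$ is finite and every word in it has size at most the number of symbol occurrences in $E$; in particular, if $|w|$ exceeds this number we reject immediately. A word belongs to $L(E)$ precisely when, at each subexpression of the form $F^?$, one can choose to keep $F$ or to replace it by $\varepsilon$ so that the resulting multiset equals $w$. Hence a certificate is simply a Boolean choice (``on''/``off'') for each occurrence of the $^?$ operator in $E$. Given such a guess, the induced unordered word is computed in linear time---a symbol leaf contributes its symbol exactly when every $^?$-operator on the path from the root to that leaf is set to ``on''---and we accept iff this word equals $w$. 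The certificate is of size linear in $|E|$ and verification is polynomial, so the problem lies in NP.

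For NP-hardness I would reduce from $3$-SAT. Let $\varphi$ have variables $x_1,\dots,x_n$ and clauses $c_1,\dots,c_m$, and use one alphabet symbol $x_i$ per variable and one symbol $c_j$ per clause. For each variable build the two gadgets $T_i = x_i \shuffle c_{j_1} \shuffle \cdots \shuffle c_{j_p}$, where $c_{j_1},\dots,c_{j_p}$ are the clauses in which $x_i$ occurs positively, and $F_i = x_i \shuffle c_{k_1} \shuffle \cdots \shuffle c_{k_q}$, where $c_{k_1},\dots,c_{k_q}$ are the clauses in which $x_i$ occurs negatively. Define
\[
E = (T_1^? \shuffle F_1^?) \shuffle \cdots \shuffle (T_n^? \shuffle F_n^?) \shuffle (c_1^? \shuffle c_1^?) \shuffle \cdots \shuffle (c_m^? \shuffle c_m^?),
\]
and let the target word $w$ satisfy $w(x_i)=1$ for every $i$ and $w(c_j)=3$ for every $j$. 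This $E$ uses only symbols, $^?$, and $\shuffle$, and has size polynomial in $|\varphi|$.

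The correctness argument has two parts. First, since $T_i$ and $F_i$ each contribute exactly one occurrence of $x_i$ and no other gadget produces $x_i$, the constraint $w(x_i)=1$ forces exactly one of $T_i^?, F_i^?$ to be ``on''; turning $T_i^?$ on is read as setting $x_i$ to $\true$ and turning $F_i^?$ on as setting it to $\false$. Second, under a fixed assignment the variable gadgets produce exactly as many copies of $c_j$ as there are literals satisfying $c_j$, a number $k\in\{0,1,2,3\}$, while the two slack gadgets $c_j^? \shuffle c_j^?$ can add between $0$ and $2$ further copies. Hence the required count $w(c_j)=3$ is reachable iff $k\ge 1$, i.e.\ iff the assignment satisfies $c_j$. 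Combining the two observations, $w\in L(E)$ iff $\varphi$ is satisfiable, which together with the upper bound yields NP-completeness.

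The main obstacle is the design of the clause gadget. Exact multiset matching is intrinsically an ``equal to'' condition, whereas clause satisfaction is an ``at least one'' condition whose witness count varies between $1$ and $3$; the slack gadgets $c_j^? \shuffle c_j^?$ are precisely what absorb this slack and convert the one condition into the other. Note that these gadgets rely essentially on repeating the symbol $c_j$ within the expression, exactly the feature that the subsequent restriction to repetition-free expressions (and ultimately to DIMEs) is designed to eliminate.
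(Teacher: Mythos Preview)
Your proposal is correct, and the overall shape matches the paper's proof, but both halves differ in their concrete realisation.

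For the upper bound, your certificate of one bit per $^?$-occurrence is more direct than the paper's argument, which instead guesses a permutation of~$w$ and checks it against the NFA obtained by reading ``$\shuffle$'' as ordered concatenation. Your version exploits the star-free grammar explicitly and avoids the detour through ordered words; the paper's argument is slightly more generic but less transparent for this restricted class.

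For the lower bound, the paper reduces from $\satt$ (monotone 1-in-3 SAT) rather than plain 3-SAT. With 1-in-3 SAT the exact-multiset semantics lines up with the problem for free: the target word has one occurrence of each clause symbol and one of each variable symbol, and the expression is just $\bigparallel_j (v_j\shuffle d_{t_1}\shuffle\cdots)^?\shuffle(v_j\shuffle d_{f_1}\shuffle\cdots)^?$, with no slack needed. Your reduction from 3-SAT is equally valid, and your analysis of the slack gadgets $c_j^?\shuffle c_j^?$ turning ``at least one satisfied literal'' into the exact count $3$ is clean; the trade-off is a slightly larger gadget in exchange for starting from the more familiar 3-SAT. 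Your closing remark that the hardness crucially relies on repeating symbols inside the expression is exactly the point the paper goes on to exploit when motivating the repetition-free restriction.
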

\begin{proof}
To show that this problem is in NP, we point out that a nondeterministic Turing machine guesses a permutation of $w$ and checks whether it is accepted by the NFA corresponding to $E$ with the unordered concatenation replaced by standard concatenation. 
We recall that $w$ has unary representation.

Next, we prove the NP-hardness by reduction from $\satt$ i.e., given a 3CNF formula, determine whether there exists a valuation such that each clause has exactly one true literal (and exactly two false literals).
The $\satt$ \rnew{problem} is known to be NP-complete~\cite{Schaefer78}.
The reduction works as follows.
We take a 3CNF formula $\varphi=c_1\wedge\ldots\wedge c_k$ over the variables $\{x_1,\ldots,x_n\}$.
We take the alphabet $\{d_1,\ldots,d_k,v_1,\ldots,v_n\}$. Each $d_i$ corresponds to a clause $c_i$ (for $1\leq i\leq k$) and each $v_j$ corresponds to a variable $x_j$ (for $1\leq j\leq n$).
We construct the unordered word $w_\varphi=d_1\ldots d_kv_1\ldots v_n$ and the expression $E_\varphi = X_1\shuffle\ldots\shuffle X_n$, where for $1\leq j\leq n$:
\[
 X_j = (v_j\shuffle d_{t_1}\shuffle\ldots\shuffle d_{t_l})^?\shuffle(v_j\shuffle d_{f_1}\shuffle \ldots\shuffle d_{f_m})^?,
\]
and $d_{t_1},\ldots,d_{t_l}$ (with $1\leq t_1,\ldots,t_l\leq k$) correspond to the clauses that use the literal $x_j$, and $d_{f_1},\ldots d_{f_m}$  (with $1\leq f_1,\ldots,f_m\leq k$) correspond to the clauses that use the literal $\neg x_j$.
For example, for the formula $\varphi_0=(x_1\vee \neg x_2\vee x_3)\wedge (\neg x_1\vee x_3\vee \neg x_4)$, we construct $w_{\varphi_0}=d_1d_2v_1v_2v_3v_4$ and
\[
 E_{\varphi_0}=(v_1\shuffle d_1)^?\shuffle (v_1\shuffle d_2)^?\shuffle v_2^?\shuffle (v_2\shuffle d_1)^?\shuffle(v_3\shuffle d_1\shuffle d_2)^?\shuffle v_3^?\shuffle v_4^?\shuffle (v_4\shuffle d_2)^?.
\]
We claim that $\varphi\in\satt$ iff $w_\varphi\in L(E_\varphi)$.
For the \emph{only if} case, let $V:\{x_1,\ldots,x_n\}\rightarrow\{\true,\false\}$ be the $\satt$ valuation of $\varphi$.
We use $V$ to construct the derivation of $w_\varphi$ in $L(E_\varphi)$: for $1\leq j\leq n$, we take $(v_j\shuffle d_{t_1}\shuffle\ldots\shuffle d_{t_l})$ from $X_j$ if $V(x_j)=\true$, and $(v_j\shuffle d_{f_1}\shuffle \ldots\shuffle d_{f_m})$ from $X_j$ otherwise.
Since $V$ is a $\satt$ valuation of $\varphi$, each $d_i$ (with $1\leq i\leq k$) occurs exactly once, hence $w_\varphi\in L(E_\varphi)$.
For the \emph{if} case, we assume that $w_\varphi\in L(E_\varphi)$.
Since $w_\varphi(v_j) = 1$, we infer that $w_\varphi$ uses exactly one of the expressions of the form $(v_j\shuffle\ldots)^?$.
Moreover, since $w_\varphi(d_i) = 1$, we infer that the valuation encoded in the derivation of $w_\varphi$ in $L(E_\varphi)$ validates exactly one literal of each clause in $\varphi$, and therefore, $\varphi\in\satt$.
Clearly, the described reduction works in polynomial time.
\qed\end{proof}

\subsection{Containment}\label{subsec:cnt}
In this section, we study the problem of deciding the \emph{containment} of two UREs.
It is well known that regular expression containment is a PSPACE-complete problem~\cite{StMe73}, but we cannot adapt this result to characterize the complexity of the containment of UREs because the order plays an essential role in the reduction.
In this section, we prove that deciding the containment of UREs is $\pip$-hard and we show an upper bound which follows from the
complexity of deciding the satisfiability of Presburger logic formulas~\cite{Oppen78,SeScMu08}.

\begin{theorem}\label{th:pip}
Given two UREs $E_1$ and $E_2$, deciding $L(E_1)\subseteq L(E_2)$ is \emph{1)} $\pip$-hard and \emph{2)} in 3-EXPTIME.
\end{theorem}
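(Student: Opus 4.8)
For part (2), the upper bound, the plan is to reduce URE containment to the satisfiability of Presburger formulas, whose complexity is known~\cite{Oppen78,SeScMu08}. The key observation is that a URE $E$ effectively constrains only the Parikh image (multiset) of accepted words, so $L(E)$ corresponds to a set of vectors in $\mathbb{N}_0^{|\Sigma|}$. First I would show, by structural induction on $E$, how to build a Presburger formula $\psi_E(x_1,\ldots,x_{|\Sigma|})$ with free variables counting the occurrences of each symbol, such that a vector satisfies $\psi_E$ iff the corresponding unordered word lies in $L(E)$. The atomic cases ($\epsilon$, $a$) and the operators $\mid$, $\shuffle$ translate directly into disjunction, existentially-quantified sums, and the star into an existentially-quantified multiplier. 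Then $L(E_1)\subseteq L(E_2)$ holds iff the sentence $\forall \bar{x}\,(\psi_{E_1}(\bar{x})\rightarrow\psi_{E_2}(\bar{x}))$ is valid, i.e.\ its negation is unsatisfiable. The main subtlety here is bookkeeping the formula's size and quantifier alternation to land in 3-EXPTIME; I would track that each star/concatenation introduces a bounded block of quantifiers and that the resulting formula has size polynomial in $|E_1|+|E_2|$, so that the triply-exponential bound for Presburger satisfiability yields the claimed time bound.

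For part (1), the $\pip$-hardness, the plan is a reduction from a canonical $\pip$-complete problem, namely $\qbf$ (validity of quantified Boolean formulas of the form $\forall^*\exists^*$), for which the paper has a dedicated macro \qbf. Given an instance $\forall y_1\cdots\forall y_p\,\exists z_1\cdots\exists z_q\,\varphi$ with $\varphi$ a Boolean combination (say in 3CNF) of the variables, I would construct two UREs $E_1$ and $E_2$ over a suitable alphabet so that $L(E_1)\subseteq L(E_2)$ encodes exactly the quantifier alternation. The idea is to let $E_1$ generate, over symbols coding the universally quantified variables, all truth assignments to the $y_i$ (each $y_i$ contributing an independent $\shuffle$-factor with two options), so that the universal quantifier becomes the universal quantification hidden inside the containment $\subseteq$; and to let $E_2$ generate, on top of any such $y$-assignment, all extensions by truth assignments to the existentially quantified $z_j$ that make $\varphi$ true, with the clause-satisfaction gadget built much as in the proof of Theorem~\ref{th:memb:npc}, using option multiplicities to record which literal satisfies each clause.

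The correctness argument then reads: every word in $L(E_1)$ (a $y$-assignment) must appear in $L(E_2)$, which happens iff there is a completing $z$-assignment satisfying $\varphi$; quantifying over all words of $E_1$ realizes the outer $\forall$, and membership in $L(E_2)$ realizes the inner $\exists$. I expect the \textbf{main obstacle} to be the gadget design in $E_2$: I must ensure that the counting machinery (the shared occurrences of clause-symbols and variable-symbols) forces a \emph{consistent} assignment to every variable — a variable cannot be simultaneously true and false across clauses — while still allowing free choice of the $z_j$, and that spurious derivations in $L(E_2)$ cannot accidentally cover a $y$-word without a genuine satisfying assignment. Controlling this requires careful use of the interplay between the $\shuffle$-factors and option multiplicities and a padding scheme (extra occurrence counts) that makes the only legal way to match an $E_1$-word be through a bona fide satisfying assignment. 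Once that gadget is verified, the polynomial-time bound on the reduction is immediate, establishing $\pip$-hardness.
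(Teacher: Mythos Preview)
Your plan matches the paper's proof on both parts: the upper bound via translation to a Presburger sentence $\forall\bar x\,(\varphi_{E_1}(\bar x)\Rightarrow\varphi_{E_2}(\bar x))$, and the lower bound via reduction from $\qbf$ with $E_1$ encoding all assignments to the universal variables and $E_2$ encoding the existence of a satisfying extension. The consistency obstacle you anticipate is handled in the paper not by option multiplicities alone but by introducing explicit marker symbols $t_i,f_i$ for each universal variable $x_i$ (present in both $E_1$ and $E_2$), which pin down the universal choice inside $E_2$, while clause symbols carry $*$ so their counts can be absorbed by whichever literals are selected.
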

\begin{proof}
1) We prove the $\pip$-hardness by reduction from the problem of checking the satisfiability of $\qbf$ formulas, a classical $\pip$-complete problem. 
We take a $\qbf$ formula 
\[
 \psi = \forall x_1,\ldots,x_n.\ \exists y_1,\ldots,y_m.\ \varphi,
\]
where $\varphi=c_1\wedge\ldots\wedge c_k$ is a quantifier-free CNF formula. 
We call the variables $x_1,\ldots,x_n$ \emph{universal} and the variables $y_1,\ldots,y_m$ \emph{existential}.

We take the alphabet $\{d_1,\ldots,d_k,t_1,f_1,\ldots,t_n,f_n\}$ and we construct two expressions, $E_\psi$ and $E_\psi'$. 
First, $E_\psi = d_1\shuffle\ldots\shuffle d_k\shuffle X_1\shuffle\ldots\shuffle X_n$, where for $1\leq i\leq n$ $X_i = ((t_i\shuffle d_{a_1}\shuffle\ldots\shuffle d_{a_l})\mid(f_i\shuffle d_{b_1}\shuffle\ldots\shuffle d_{b_s}))$, and $d_{a_1},\ldots d_{a_l}$ (with $1\leq a_1,\ldots,a_l\leq k$) correspond to the clauses which use the literal $x_i$, and $d_{b_1},\ldots,d_{b_s}$ (with $1\leq b_1,\ldots,b_s\leq k$) correspond to the clauses which use the literal $\neg x_i$.
For example, for the formula 
\[
 \psi_0 = \forall x_1,x_2.\ \exists y_1,y_2.\ (x_1\vee\neg x_2\vee y_1)\wedge (\neg x_1\vee y_1\vee\neg y_2)\wedge (x_2\vee\neg y_1),
\]
we construct:
\[
 E_{\psi_0} = d_1\shuffle d_2\shuffle d_3\shuffle ((t_1\shuffle d_1)\mid(f_1\shuffle d_2) )\shuffle ((t_2\shuffle d_3)\mid(f_2\shuffle d_1)).
\]
Note that there is an one-to-one correspondence between the unordered words in $L(E_\psi)$ and the valuations of the universal  variables. 
For example, given the formula $\psi_0$, the unordered word $d_1^3d_2d_3t_1f_2$ corresponds to the valuation $V$ such that $V(x_1) = \true$ and $V(x_2)=\false$.

Next, we construct $E_\psi' = X_1\shuffle\ldots\shuffle X_n\shuffle Y_1\shuffle\ldots\shuffle Y_m$, where:
\begin{itemize}
\item $X_i = ((t_i\shuffle d_{a_1}^*\shuffle\ldots\shuffle d_{a_l}^*)\mid(f_i\shuffle d_{b_1}^*\shuffle\ldots\shuffle d_{b_s}^*))$, and $d_{a_1},\ldots d_{a_l}$ (with $1\leq a_1,\ldots,a_l\leq k$) correspond to the clauses which use the literal $x_i$, and $d_{b_1},\ldots,d_{b_s}$ (with $1\leq b_1,\ldots,b_s\leq k$) correspond to the clauses which use the literal $\neg x_i$ (for $1\leq i\leq n$), 
\item $Y_j = ((d_{a_1}^*\shuffle\ldots\shuffle d_{a_l}^*)\mid(d_{b_1}^*\shuffle\ldots\shuffle d_{b_s}^*))$, and $d_{a_1},\ldots d_{a_l}$ (with $1\leq a_1,\ldots,a_l\leq k$) correspond to the clauses which use the literal $y_j$, and $d_{b_1},\ldots,d_{b_s}$ (with $1\leq b_1,\ldots,b_s\leq k$) correspond to the clauses which use the literal $\neg y_j$ (for $1\leq j\leq m$).
\end{itemize}
For example, for $\psi_0$ above we construct:
\[
 E_{\psi_0}' = ((t_1\shuffle d_1^*)\mid(f_1\shuffle d_2^*))\shuffle ((t_2\shuffle d_3^*)\mid(f_2\shuffle d_1^*))\shuffle ((d_1^*\shuffle d_2^*)\mid d_3^*)\shuffle (\epsilon\mid d_2^*).
\]
We claim that $\models\psi$ iff $E_\psi\subseteq E_\psi'$. 
For the \emph{only if} case, for each valuation of the universal variables, we take the corresponding unordered word $w\in L(E_\psi)$. 
Since there exists a valuation of the existential variables which satisfies $\varphi$, we use this valuation to construct a derivation of $w$ in $L(E_\psi')$. 
For the \emph{if} case, for every unordered word from $L(E_\psi)$, we take its derivation in $L(E_\psi')$ and we use it to construct a valuation of the existential variables which satisfies $\varphi$. 
Clearly, the described reduction works in polynomial time.

2) The membership of the problem to 3-EXPTIME follows from the complexity of deciding the satisfiability of Presburger logic formulas, which is in 3-EXPTIME~\cite{Oppen78}. 
Given two UREs $E_1$ and $E_2$, we compute in linear time~\cite{SeScMu08} two existential Presburger formulas for their Parikh images: $\varphi_{E_1}$ and $\varphi_{E_2}$, respectively.
Next, we test the satisfiability of the following closed Presburger logic formula: $\forall \overline{x}.\ \varphi_{E_1}(\overline{x})\Rightarrow \varphi_{E_2}(\overline{x})$.
\qed\end{proof}
\rnew{While the complexity gap for the containment of UREs (as in Theorem~\ref{th:pip}) is currently quite important, we believe that this gap may be reduced by working on quantifier elimination for the Presburger formula obtained by translating the containment of UREs (as shown in the second part of the proof of Theorem~\ref{th:pip}).
Although we believe that this problem is $\pip$-complete, its exact complexity remains an open question.
}

\subsection{Disallowing repetitions}\label{subsec:interval}
The proofs of Theorem~\ref{th:memb:npc} and Theorem~\ref{th:pip} rely on UREs allowing repetitions of the same symbol, which might be one of the causes of the intractability. 
Consequently, from now on we disallow repetitions of the same symbol in a URE. 
Similar restrictions are commonly used for the regular expressions to maintain practical aspects: \emph{single occurrence regular expressions} (\emph{SOREs})~\cite{BNSV10}, \emph{conflict-free types}~\cite{CGPS13,CoGhSa09,GhCoSa08}, and \emph{duplicate-free DTDs}~\cite{MoWoMo07}.
While the complexity of the membership of an unordered word to the language of a URE without symbol repetitions
has recently been shown to be in PTIME~\cite{BGHPSS14}, testing containment of two such UREs continues to be intractable.
\begin{theorem}\label{th:conphwithoutrepetitions}
Given two UREs $E_1$ and $E_2$ not allowing repetitions of symbols, deciding $L(E_1)\subseteq L(E_2)$ is coNP-hard.
\end{theorem}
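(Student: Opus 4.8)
The plan is to prove coNP-hardness by reduction from the complement of a known NP-complete problem, most naturally $\satt$ (1-in-3 SAT), exactly the problem already used in the proof of Theorem~\ref{th:memb:npc}. The intuition is that containment $L(E_1)\subseteq L(E_2)$ fails precisely when there is a witness word in $L(E_1)\setminus L(E_2)$, so if we can engineer $E_1$ and $E_2$ so that such a witness corresponds to a satisfying $1$-in-$3$ assignment, then non-containment encodes satisfiability and hence containment encodes \emph{unsatisfiability}, giving coNP-hardness. Concretely, given a $3$CNF formula $\varphi=c_1\wedge\ldots\wedge c_k$ over variables $x_1,\ldots,x_n$, I would reuse the alphabet $\{d_1,\ldots,d_k,v_1,\ldots,v_n\}$ of Theorem~\ref{th:memb:npc}, and try to let $E_1$ generate the single candidate word $w_\varphi=d_1\cdots d_k v_1\cdots v_n$ (or a small family of candidate words encoding valuations), while $E_2$ is designed so that $w_\varphi\in L(E_2)$ exactly when the encoded valuation fails the $1$-in-$3$ condition. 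The key obstacle, and the reason this is more delicate than Theorem~\ref{th:memb:npc}, is the \textbf{no-repetition constraint}: the construction of Theorem~\ref{th:memb:npc} freely repeats each $v_j$ and each $d_i$ across the option blocks, which is now forbidden. So the main design effort is to redistribute the counting of clause-symbols and variable-symbols across disjuncts without ever reusing a symbol.

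The natural way around the no-repetition restriction is to exploit disjunction ``$\mid$'' inside $E_2$ instead of repeated options: rather than writing $(v_j\shuffle\ldots)^?\shuffle(v_j\shuffle\ldots)^?$ with $v_j$ appearing twice, I would bundle the two polarity choices for variable $x_j$ into a single disjunction $X_j=(v_j\shuffle d_{t_1}\shuffle\cdots\shuffle d_{t_l})\mid(v_j\shuffle d_{f_1}\shuffle\cdots\shuffle d_{f_m})$, so $v_j$ and each $d_i$ occur only once within $X_j$; taking $E=X_1\shuffle\cdots\shuffle X_n$ still repeats the $d_i$ across different $X_j$, so I would instead have each clause symbol $d_i$ appear in exactly one designated place and handle multiplicities through interval multiplicities $d_i^{[n,m]}$ on a single occurrence of $d_i$. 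This is where the intervals used in Theorem~\ref{th:conphwithoutrepetitions}'s setting (``arbitrary use of disjunction and intervals'', as flagged in the text preceding the theorem) become essential: intervals let me express ``the number of $d_i$'s lies in a forbidden range'' using a single non-repeated occurrence of $d_i$, replacing the repetitions that the $1$-in-$3$ encoding previously relied on.

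The correctness argument then splits into the two standard directions. For one direction I would show that if $\varphi$ has no $1$-in-$3$ satisfying assignment, then every word generated by $E_1$ is also generated by $E_2$, i.e.\ containment holds; for the other I would show that any $1$-in-$3$ satisfying valuation $V$ yields a specific word $w_V\in L(E_1)$ whose per-symbol counts violate the interval/disjunction constraints of $E_2$, so $w_V\notin L(E_2)$ and containment fails. The bookkeeping is entirely about matching occurrence counts $w(d_i)$ and $w(v_j)$ against the disjuncts selected, as in the earlier proof, so the ``each clause gets exactly one true literal'' condition translates into ``each $d_i$ is produced exactly once.'' I expect the \textbf{main obstacle} to be precisely making the two expressions repetition-free while preserving this exact-count semantics: the clause symbols $d_i$ are shared by up to three variables, so I must avoid writing $d_i$ in three different $X_j$ blocks. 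Resolving this will likely require either a factored encoding in which each $d_i$ sits in a single block with an interval multiplicity capturing all clauses touching it, or an auxiliary padding symbol per clause that absorbs the multiplicity, and verifying that this factoring still faithfully separates the $1$-in-$3$-satisfiable instances from the unsatisfiable ones is the crux of the argument. The final step is to check, routinely, that the whole construction is polynomial in $|\varphi|$ and that $E_1,E_2$ indeed contain no repeated symbol, which completes the reduction from $\overline{\satt}$ and hence establishes coNP-hardness.
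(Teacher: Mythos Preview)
Your proposal correctly identifies the central obstacle --- avoiding symbol repetition while still letting multiple variables interact with the same clause --- but it does not actually overcome it. You end with ``resolving this will likely require either a factored encoding \ldots\ or an auxiliary padding symbol per clause,'' and leave the construction unspecified; that is precisely the step where the proof lives. With the alphabet $\{d_1,\ldots,d_k,v_1,\ldots,v_n\}$ you have chosen, any expression that links a clause $c_i$ to each of its three variables will mention $d_i$ three times, and no amount of interval multiplicities on a \emph{single} occurrence of $d_i$ can simultaneously track which of three independent truth choices produced it. So as written the plan does not yield repetition-free $E_1,E_2$.

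The paper sidesteps the difficulty by changing the alphabet rather than the operators: it introduces a symbol $a_{ij}$ for each \emph{pair} (clause $c_i$, variable $x_j$ occurring in $c_i$). Then $E_\varphi = X_1\shuffle\cdots\shuffle X_n$ with $X_j$ a disjunction of the positive-occurrence $a_{ij}$'s versus the negative-occurrence $a_{ij}$'s is automatically repetition-free, since each $a_{ij}$ lives in exactly one $X_j$. On the other side, $E_\varphi' = (C_1\mid\cdots\mid C_k)^{[0,k-1]}$ with $C_i = (a_{ij_1}\mid\cdots\mid a_{ij_p})^+$ is also repetition-free, since each $a_{ij}$ lives in exactly one $C_i$. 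The reduction is from plain 3SAT, not $\satt$: a word $w_V\in L(E_\varphi)$ encodes a valuation $V$, and $w_V\in L(E_\varphi')$ iff the symbols of $w_V$ avoid at least one entire group $C_i$, i.e.\ iff some clause $c_i$ is left unsatisfied by $V$. Thus $L(E_\varphi)\subseteq L(E_\varphi')$ iff $\varphi$ is unsatisfiable. Note that the single interval $[0,k-1]$ on the outer disjunction is what encodes ``at most $k-1$ clauses are touched''; this is the ``arbitrary use of disjunction and intervals'' the paper alludes to. Switching from $\satt$ to 3SAT is what makes this clean: you only need ``some clause is missed,'' not an exact-count condition.
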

\begin{proof}
We show the coNP-hardness by reduction from the complement of 3SAT.
Take a 3CNF formula $\varphi=c_1\wedge\ldots\wedge c_k$ over the variables $\{x_1,\ldots,x_n\}$.  
We assume w.l.o.g.\ that each variable occurs at most once in a clause.	
Take the alphabet $\{a_{ij}\mid 1\leq i\leq k, 1\leq j\leq n,$ $c_i$ uses $x_j$ or $\neg x_j\}$. 
We construct the expression $E_\varphi = X_1\shuffle\ldots\shuffle X_n$, where $X_j=((a_{t_1j}\shuffle\ldots\shuffle a_{t_lj})\mid
(a_{f_1j}\shuffle\ldots\shuffle a_{f_mj})$ (for $1\leq j\leq n$), and $c_{t_1},\ldots,c_{t_l}$ (with $1\leq t_1,\ldots,t_l\leq k$) are the clauses which use the literal $x_j$, and $c_{f_1},\ldots,c_{f_m}$ (with $1\leq f_1,\ldots,f_m\leq k$) are the clauses which use the literal $\neg x_j$. 
Next, we construct $E_\varphi'= (C_1\mid\ldots\mid C_k)^{[0,k-1]}$, where $C_i=(a_{ij_{1}}\mid \ldots\mid a_{ij_{p}})^+$ (for $1\leq i\leq k$), and $x_{j_1},\ldots,x_{j_p}$ (with $1\leq j_1,\ldots,j_p\leq n$) are the variables used by the clause $c_i$. 
For example, for
\[
 \varphi_0=(x_1\vee \neg x_2\vee x_3)\wedge (\neg x_1\vee x_3\vee \neg x_4)\wedge (x_2\vee \neg x_3\vee \neg x_4),
\]
we obtain:
\begin{flalign*}
 E_{\varphi_0} =&(a_{11}\mid a_{21})\shuffle (a_{32}\mid a_{12})\shuffle ((a_{13}\shuffle a_{23})\mid a_{33})\shuffle (\epsilon\mid (a_{24}\shuffle a_{34})),\\
 E_{\varphi_0}' =& ((a_{11}\mid a_{12}\mid a_{13})^+\mid (a_{21}\mid a_{23}\mid a_{24})^+\mid (a_{32}\mid a_{33}\mid a_{34})^+)^{[0,2]}.
\end{flalign*}
Note that there is an one-to-one correspondence between the unordered words $w_V$ in $L(E_\varphi)$ and the valuations $V$ of the  variables $x_1,\ldots,x_n$ (*). 
For example, for above $\varphi_0$ and the valuation $V$ such that $V(x_1)=V(x_2)=V(x_3)=\true$ and $V(x_4)=\false$, the unordered word $w_V=a_{11}a_{32}a_{13}a_{23}a_{24}a_{34}$ \rnew{is in} $L(E_{\varphi_0})$. 
Moreover, given an $w_V\in L(E_{\varphi})$, one can easily obtain the valuation.

We observe that the interval $[0,k-1]$ is used above a disjunction of $k$ expressions of the form $C_i$ and there is no repetition of symbols among the expressions of the form $C_i$.
This allows us to state an instrumental property (**):
$w\in L(E_\varphi')$ iff there exists an $i\in\{1,\ldots,k\}$ such that none of the symbols used in $C_i$ occurs in $w$.
From (*) and (**), we infer that given a valuation $V$, $V\models\varphi$ iff $w_V\in L(E_\varphi)\,\backslash\,L(E_\varphi')$, that yields $\varphi\in$ 3SAT iff $L(E_\varphi)\varnot\subseteq L(E_\varphi')$.
Clearly, the described reduction works in polynomial time.
\qed\end{proof}
Theorem~\ref{th:conphwithoutrepetitions} shows that disallowing repetitions of symbols in a URE does not avoid the intractability of the containment.
\rnew{
Additionally, we observe that the proof of Theorem~\ref{th:conphwithoutrepetitions} employs UREs with arbitrary use of disjunction and intervals.}
Consequently, in the next section we impose further restrictions that yield a class of UREs with desirable computational properties.

\section{Disjunctive interval multiplicity expressions (DIMEs)}\label{subsec:dime}

In this section, we present the DIMEs, a subclass of UREs for which membership and containment become tractable.
\rnew{First, we present an intuitive representation of DIMEs with characterizing tuples (Section~\ref{subsec:characterizing-tuple}).
Next, we formally define DIMEs and show that they are precisely captured by their characterizing tuples (Section~\ref{subsec:definition}).
Finally, we use a compact representation of the characterizing tuples to show the tractability of DIMEs (Section~\ref{subsec:tractability-dime}).}

\subsection{Characterizing tuples}\label{subsec:characterizing-tuple}
In this section, we introduce the notion of \emph{characterizing tuple} that is an alternative, more intuitive representation of DIMEs, the subclass of UREs that we formally define in Section~\ref{subsec:definition}.
Recall that by $a\in w$ we denote $w(a)\neq 0$.
Given a DIME $E$, the \emph{characterizing tuple} $\Delta_E=(C_E,N_E,P_E,K_E)$ is as follows.
\begin{itemize}
\item The \emph{conflicting pairs of siblings} $C_E$ consisting of all pairs of symbols in $\Sigma$ such that $E$ defines no word using both symbols simultaneously: 
\[
 C_E = \{(a,b)\in\Sigma\times\Sigma\mid \varnot\exists w\in L(E).\ a\in w\wedge b\in w\}.
\]
\item The \emph{extended cardinality map} $N_E$ capturing for each symbol in the alphabet the possible numbers of its occurrences in the unordered words defined by $E$:
\[
 N_{E} = \{(a, w(a))\in \Sigma\times\mathbb N_0\mid w\in  L(E)\}.
\]
\item The \emph{collections of required symbols} $P_E$ capturing symbols that must be present in every word; essentially, a set of symbols $X$ belongs to $P_E$ if every word defined by $E$ contains at least one element from $X$:
\[
 P_E=\{X\subseteq\Sigma\mid\forall w\in L(E).\ \exists a\in  X.\ a \in w\}.
\]
\item The \emph{counting dependencies} $K_E$ consisting of pairs of symbols $(a,b)$ such that \rnew{in every word defined by $E$, the number of $b$s is at most the number of $a$s}.
Note that if both $(a,b)$ and $(b,a)$ belong to $K_E$, then all unordered words defined by $E$ should have the same number of $a$'s and $b$'s.
\begin{gather*}
K_E=\{(a,b)\in\Sigma\times\Sigma\mid\forall w\in L(E).\ w(a)\geq w(b)\}.
\end{gather*}
\end{itemize}
As an example we take $E_0= a^+ \shuffle ((b \shuffle c^?)^+\mid d^{[5,\infty]})$ and we illustrate its characterizing tuple $\Delta_{E_0}$.
Because $P_E$ is closed under supersets, we list only its minimal elements:
\begin{align*}
 & C_{E_0} = \{ (b,d), (c,d), (d,b), (d,c) \}, \\
 & N_{E_0}= \{(a,i)\mid i\geq 1\} \cup \{(b,i)\mid i\geq 0\} \cup{}\{(c,i)\mid i\geq 0\} \cup \{(d,i)\mid i = 0 \lor i\geq 5\},\\
 & P_{E_0} =\{\{a\},\{b,d\},\ldots\},\\
 &	K_{E_0} = \{(b,c)\}.
\end{align*}
We point out that $N_E$ may be infinite and $P_E$ exponential in the size of $E$. 
Later on we discuss how to represent both sets in a compact manner while allowing efficient \rnew{manipulation}.

Then, an unordered word $w$ \emph{satisfies} a characterizing tuple $\Delta_E$ corresponding to a DIME $E$, denoted $w\models \Delta_E$, if the following conditions are satisfied: 
\begin{enumerate}
\item $w\models C_E$ i.e., $\forall (a, b) \in C_E.\ (a\in w \Rightarrow b \notin w) \wedge (b\in w \Rightarrow a\notin w)$,
\item $w\models N_E$ i.e., $\forall a \in \Sigma.\ (a,w(a))\in N_E$,
\item $w\models P_E$ i.e., $\forall X\in P_E.\ \exists a\in X.\ a\in w$,
\item $w\models K_E$ i.e., $\forall (a,b)\in K_E.\ w(a)\geq w(b)$.
\end{enumerate}
For instance, the unordered word $\mathit{aabbc}$ satisfies the characterizing tuple $\Delta_{E_0}$ corresponding to the aforementioned DIME $E_0= a^+ \shuffle ((b \shuffle c^?)^+\mid d^{[5,\infty]})$ since it satisfies all the four conditions imposed by $\Delta_{E_0}$. 
On the other hand, note that the following unordered words do not satisfy $\Delta_{E_0}$:
\begin{itemize}
\item $\mathit{abddddd}$ because it contains at the same time $b$ and $d$, and $(b,d)\in C_{E_0}$,
\item $\mathit{add}$ because it has two $d$'s and $(d,2)\notin N_{E_0}$, 
\item $\mathit{aa}$ because it does not contain any $b$ or $d$ and $\{b,d\}\in P_{E_0}$, 
\item $\mathit{abbccc}$ because it has more $c$'s than $b$'s and $(b,c)\in K_{E_0}$.
\end{itemize}
In the next section, we define the DIMEs and show that they are precisely captured by characterizing tuples.

\subsection{Grammar of DIMEs}\label{subsec:definition}
An \emph{atom} is $(a_1^{I_1}\shuffle\ldots\shuffle a_k^{I_k})$, where all $I_i$'s are $?$ or $1$.
For example, $(a\shuffle b^?\shuffle c)$ is an atom, but $(a^{[3,4]} \shuffle b)$ is not an atom.	
A \emph{clause} is $(A_1^{I_1}\mid\ldots\mid A_k^{I_k})$, where all $A_i$'s are atoms and all $I_i$'s are intervals.
A clause is \emph{simple} if all $I_i$'s are $?$ or $1$.
For example, $(a^{[2,3]}\mid (b^?\shuffle c)^*)$ is a clause (which is not simple), $((a^?\shuffle b)\mid c^?)$ is a simple clause while $((a^?\shuffle b^+)\mid c)$ is not a clause.

A \emph{disjunctive interval multiplicity expression} (DIME) is $(D_1^{I_1}\shuffle\ldots\shuffle D_k^{I_k})$, where for $1\leq i\leq k$ either \emph{1)} $D_i$ is a simple clause and $I_i\in\{+,*\}$, or \emph{2)} $D_i$ is a clause and $I_i\in\{1,?\}$.  
Moreover, a symbol can occur at most once in a DIME. 
For example, $(a\mid (b\shuffle c^?)^+)\shuffle (d^{[3,4]}\mid e^*)$ is a DIME while $(a\shuffle b^?)^+\shuffle (a\mid c)$ is not a DIME because it uses the symbol $a$ twice. 
A \emph{disjunction-free interval multiplicity expression} (IME) is a DIME which does not use the disjunction operator. 
An example of IME is $a\shuffle (b\shuffle c^?)^+\shuffle d^{[3,4]}$. 
For more practical examples of DIMEs see Examples~\ref{ex:1} and~\ref{ex:2} from Section~\ref{sec:static}.

We have tailored DIMEs to be able to capture them with characterizing tuples that permit deciding membership and containment in polynomial time (cf.\ Section~\ref{subsec:tractability-dime}).
As we have already pointed out Section~\ref{subsec:interval}, a slightly more relaxed restriction on the nesting of disjunction and intervals leads to intractability of the containment (Theorem~\ref{th:conphwithoutrepetitions}).
Even though DIMEs may look very complex, the imposed restrictions are necessary to obtain lower complexity while considering fragments with practical relevance (cf.\ Section~\ref{sec:expressiveness}).

Next, we show that each DIME can be rewritten as an equivalent \emph{reduced DIME}. 
Reduced DIMEs may also seem complex, but they are a building block for (i) proving that the language of a DIME is precisely captured by its characterizing tuple (Lemma~\ref{lemmawordtuple}), and (ii) computing the compact representation of the characterizing tuples that yield the tractability of DIMEs (cf.\ Section~\ref{subsec:tractability-dime}).

Before defining the reduced DIMEs, we need to introduce some additional notations.
Given an atom $A$ (resp. a clause $D$), we denote by $\Sigma_A$ (resp. $\Sigma_D$) the set of symbols occurring in $A$ (resp. $D$).
Given a DIME $E$, by $I_E^a$ (resp.\ $I_E^A$ or $I_E^D$) we denote the interval associated in $E$ to the symbol $a$ (resp. atom $A$ or clause $D$).
Because we consider only expressions without repetitions, this interval is \rnew{well-defined}.
Moreover, if $E$ is clear from the context, we write simply $I^a$ (resp.\ $I^A$ or $I^D$) instead of $I_E^a$ (resp.\ $I_E^A$ or $I_E^D$).
Furthermore, given an interval $I$  which can be either $[n,m]$ or $[n,m]^?$, by $I^?$ we understand the interval $[n,m]^?$.
In a reduced DIME $E$, each clause with interval $D^I$ has one of the following three types:

\begin{enumerate}
\item $D^I = (A_1\mid\ldots\mid A_k)^+$, where $k\geq 2$ \rnew{and, for every $i\in\{1,\ldots,k\}$, $A_i$ is} an atom such that there exists $a\in\Sigma_{A_i}$ such that $I^a=1$. 

For example, $((a\shuffle b^?)\mid c)^+$ has type 1, but $a^+$ and $((a^?\shuffle b^?)\mid c)^+$ do not.

\item $(A_1^{I_1}\mid\ldots\mid A_k^{I_k})$, where for every $i\in\{1,\ldots, k\}$ \emph{1)} $A_i$ is an atom such that there exists $a\in\Sigma_{A_i}$ such that $I^a=1$ and \emph{2)} $0$ does not belong to the set represented by the interval $I_i$.

For example, $(a\mid (b^?\shuffle c)^{[5,\infty]})$ and $a^+$ have type 2, but $(a\mid (b^?\shuffle c^?)^{[5,\infty]})$ and $(a^*\mid (b^?\shuffle c)^{[5,\infty]})$ do not.

\item $(A_1^{I_1}\mid\ldots\mid A_k^{I_k})$, where for every $i\in\{1,\ldots, k\}$ $A_i$ is an atom and $I_i$ is an interval such that $0$ belongs to the set represented by the interval $I_i$.

For example, $(a^*\mid (b\shuffle c)^{[3,4]^?})$ and $(a^?\shuffle b^?)^*$ have type 3, but $(a^?\shuffle b^?)^{[3,4]}$ does not.
\end{enumerate}
\rnew{
The reduced DIMEs easily yield the construction of their characterizing tuples.
Take a clause with interval $D^{I}$ from a DIME $E$ and observe that the symbols from $\Sigma_{D}$ are present in the characterizing tuple $\Delta_E$ as follows.

\begin{itemize}
\item If $D^{I}$ is of type 1, then there is no symbol in $\Sigma_{D}$ that occurs in a conflict in $C_E$.
Otherwise, $C_E$ consists of all pairs of distinct symbols $(a,b)$ from $\Sigma_{D}$ that appear in different atoms from $D^I$.

\item If $D^{I}$ is of type 1, then we have $(a,n)\in N_E$ for every $(a,n)\in \Sigma_{D}\times\Nz$.
Otherwise, the possible number of occurrences of every symbol $a$ from $\Sigma_D$ can be obtained directly from the two intervals above it: the interval of $D$ and the interval of the atom containing $a$.
We explain in Section~\ref{subsec:tractability-dime} how to precisely construct a compact representation of the potentially infinite set $N_E$.

\item If $D^I$ is of type 1 or 2, then every unordered word defined by $E$ contains at least one of the symbols $a$ from $\Sigma_{D}$ having interval $I^a=1$.
More precisely, $P_E$ contains all sets of symbols $X\subseteq\Sigma$ containing, for every atom of $D$, at least one symbol $a$ with $I^a =1$.
For example, for $((a\shuffle b\shuffle c^?)\mid (d\shuffle e))^+$, $P_E$ consists of the sets $\{a,d\}, \{a,e\}, \{b,d\}, \{b,e\}$ and all their supersets.
Otherwise, if $D^I$ is of type 3, then there is no set in $P_E$ containing only symbols from $\Sigma_D$.

\item Regardless of the type of $D^I$, the counting dependencies $K_E$ consist of all pairs of symbols $(a,b)$ such that they appear in the same atom in $D$ and $I^a=1$.
\end{itemize}
}
To obtain reduced DIMEs, we use the following rules:
\begin{itemize}
\item Take a simple clause $(A_1^{I_1}\mid\ldots\mid A_k^{I_k})$.
\begin{itemize}
\item $(A_1^{I_1}\mid\ldots\mid A_k^{I_k})^*$ goes to $A_1^*\shuffle\ldots\shuffle A_k^*$ ($k$ clauses of type 3).
\rnew{Essentially, we distribute the $*$ of a disjunction of atoms with intervals to each of the atoms.
For example, $(a\mid (b\shuffle c^?))^*$ goes to $a^*\shuffle (b\shuffle c^?)^*$.
}

\item $(A_1^{I_1}\mid\ldots\mid A_k^{I_k})^+$ goes to $A_1^*\shuffle\ldots\shuffle A_k^*$ ($k$ clauses of type 3) if \rnew{there exists an atom with interval $A_i^{I_i}$ ($i\in\{1,\ldots,k\}$) that defines the empty word i.e., $I_i= ?$ or \rnew{$I^a =?$} for every symbol $a\in\Sigma_{A_i}$.
If the empty word is defined, then we can basically transform the $+$ into $*$ and then distribute the $*$ as for the previous case.
For example, $((a\shuffle b^?)\mid (c\shuffle d)^?)^+$ goes to $(a\shuffle b^?)^*\shuffle (c\shuffle d)^*$.
}

\end{itemize}
\item Take a clause $(A_1^{I_1}\mid\ldots\mid A_k^{I_k})$.
\begin{itemize}
\item $(A_1^{I_1}\mid\ldots\mid A_k^{I_k})^?$ goes to $(A_1^{I_1^?}\mid\ldots\mid A_k^{I_k^?})$ (type 3).
\rnew{We essentially distribute the $?$ of a disjunction of atoms with intervals to each of the atoms.
For example, $(a^{[2,3]}\mid b^+)^?$ goes to $(a^{[2,3]^?}\mid b^*)$.
}

\item $(A_1^{I_1}\mid\ldots\mid A_k^{I_k})$ goes to $(A_1^{I_1^?}\mid\ldots\mid A_k^{I_k^?})$ (type 3) if \rnew{there exists an atom with interval $A_i^{I_i}$ ($i\in\{1,\ldots,k\}$) that defines the empty word i.e., $0$ belongs to the set represented by $I_i$ or \rnew{$I^a =?$} for every symbol $a\in\Sigma_{A_i}$. 
If the empty word is defined by one of the atoms, then we can basically distribute $?$ to all of them.
For example, $(a\mid (b\shuffle c)^{[0,5]})$ goes to $(a^?\mid (b\shuffle c)^{[0,5]})$.
}
\end{itemize}

\item \rnew{Take an atom $(a_1^?\shuffle \ldots \shuffle a_k^?)$ and an interval $I$.
Then, $(a_1^?\shuffle \ldots \shuffle a_k^?)^I$ goes to $(a_1^?\shuffle\ldots\shuffle a_k^?)^{[0,\max(I)]}$, where by $\max(I)$ we denote the maximum value from the set represented by the interval $I$.
This step may be combined with one of the previous ones to rewrite a clause with interval as one of type 3.
For example, $((a^?\shuffle b^?)^{[3,6]}\mid c)$ goes to $((a^?\shuffle b^?)^{[0,6]}\mid c^?)$.
}

\item Remove symbols $a$ (resp.\ atoms $A$ or clauses $D$) such that $I^a$ (resp.\ $I^A$ or $I^D$) is $[0,0]$.
\end{itemize}
Note that each of the rewriting steps gives an equivalent reduced expression. 

Next, we assume that we work with reduced DIMEs only and show that the language defined by a DIME $E$ comprises of all unordered words satisfying the characterizing tuple $\Delta_E$.

\begin{lemma}\label{lemmawordtuple}
Given an unordered word $w$ and a DIME $E$, $w\in L(E)$ iff $w\models\Delta_E$.
\end{lemma}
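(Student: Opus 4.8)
The plan is to prove both directions of the equivalence, leveraging the reduced DIME form and the clause-by-clause analysis of the characterizing tuple already laid out in the excerpt. Since a DIME $E = (D_1^{I_1} \shuffle \ldots \shuffle D_k^{I_k})$ is an unordered concatenation of clauses over \emph{pairwise disjoint} symbol sets (because symbols occur at most once), both the language $L(E)$ and each of the four components of $\Delta_E$ decompose along the clauses. My strategy is therefore to reduce the problem to a single clause with interval: I would first observe that for an unordered word $w$, writing $w = w_1 \uplus \ldots \uplus w_k$ where each $w_i$ is the restriction of $w$ to the symbols $\Sigma_{D_i}$, we have $w \in L(E)$ iff $w_i \in L(D_i^{I_i})$ for every $i$, and similarly $w \models \Delta_E$ iff each $w_i$ satisfies the constraints that the clause $D_i^{I_i}$ contributes to $C_E, N_E, P_E, K_E$. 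The disjointness of symbol sets guarantees that no cross-clause conflict pair, required set, or counting dependency is generated, so the global satisfaction reduces to the conjunction of per-clause satisfactions.

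Next I would prove the single-clause equivalence by a case analysis on the three clause types of a reduced DIME. For each type I would exhibit exactly which pairs enter $C_E$, which pairs enter $K_E$, which sets enter $P_E$, and what $N_E$ looks like, matching the bulleted description already given in the excerpt. For the \textbf{forward direction} ($w_i \in L(D_i^{I_i}) \Rightarrow w_i \models$ the clause constraints), I would take an arbitrary derivation of $w_i$ and read off that the four conditions hold: for instance, in a type-1 clause $(A_1 \mid \ldots \mid A_k)^+$ every derivation picks at least one atom, so some symbol with $I^a=1$ occurs (giving $P_E$), and within a chosen atom the $?$-symbols occur at most as often as the mandatory $1$-symbols (giving $K_E$); since the $+$ allows arbitrarily many repetitions there are no conflicts and $N_E$ is unconstrained. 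The type-2 and type-3 cases are analogous but with the disjunction forcing a conflict structure (at most one atom's symbols appear) and with the intervals dictating the cardinality sets $N_E$.

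The \textbf{converse direction} ($w_i \models$ the clause constraints $\Rightarrow w_i \in L(D_i^{I_i})$) is the one I expect to be the main obstacle, because here I must \emph{construct} a derivation witnessing membership from the purely combinatorial data in $\Delta_E$. The delicate point is showing that the conflict constraints $C_E$ force all occurring symbols of $w_i$ to lie within a single atom $A_j$ (in types 2 and 3), that $P_E$ guarantees the mandatory symbol of that atom is present when required, that $K_E$ lets me consistently choose multiplicities for the optional $?$-symbols so that they do not exceed the mandatory one, and that $N_E$ pins down a valid overall repetition count compatible with the interval $I_j$. I would argue that these four conditions are not merely necessary but jointly sufficient by explicitly building the witnessing multiset-decomposition: choose the atom dictated by the support of $w_i$, use $N_E$ to fix the number of repetitions of that atom and $K_E$ to distribute the optional symbols across those repetitions. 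The crux is verifying that no combination of conditions can be simultaneously satisfiable while membership fails, which I would handle by checking that the reduced-form type distinctions leave no gap; the intervals' role in $N_E$ (especially the type-3 case where $0$ is allowed, admitting $\varepsilon$) must be tracked carefully to avoid off-by-one errors at the boundaries of $I_j$.

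Finally I would assemble the global result: since every clause-level constraint in $\Delta_E$ comes from exactly one clause and the symbol sets are disjoint, $w \models \Delta_E$ iff each $w_i$ satisfies its clause's constraints iff each $w_i \in L(D_i^{I_i})$ iff $w = \biguplus_i w_i \in L(E)$. I would note that correctness of the reduction rules (stated in the excerpt to preserve the language) is what licenses restricting attention to reduced DIMEs, so no generality is lost.
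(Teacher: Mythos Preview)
Your proposal is correct and follows essentially the same approach as the paper: decompose $w$ along the clauses using disjointness of symbol sets, dispatch the forward direction as immediate from the definitions, and for the converse direction perform a case analysis on the three reduced clause types, using $C_E$ to isolate a single atom, $P_E$ to guarantee non-emptiness when required, and $N_E$ together with $K_E$ to build the witnessing derivation. The only small point the paper makes explicit that you leave implicit is that $w\models N_E$ forces every symbol of $w$ to lie in some $\Sigma_{D_i}$ (so the decomposition $w=\biguplus_i w_i$ really exhausts $w$); you should state this when writing out the details.
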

\begin{proof}
The \emph{only if} part follows from the definition of the satisfiability of $\Delta_E$.
For the \emph{if} part, we take the tuple $\Delta_E$ corresponding to a DIME $E=D_1^{I_1}\shuffle\ldots\shuffle D_k^{I_k}$ and an unordered word $w$ such that $w\models \Delta_E$.
Let $w=w_1\uplus\ldots\uplus w_k\uplus w'$, where each $w_i$ contains all occurrences in $w$ of the symbols from $\Sigma_{D_i}$ (for $1\leq i\leq k$).
Since $w\models N_E$, we infer that there is no symbol $a\in\Sigma\setminus(\Sigma_{D_1}\cup\ldots\cup\Sigma_{D_k})$ such that $a\in w$, which implies $w'=\varepsilon$.
Thus, proving $w\models E$ reduces to proving that $w_i\models D_i^{I_i}$ (for $1\leq i\leq k$).
Since $E$ is a reduced DIME, each derivation can be constructed by reasoning on the three possible types of the $D_i^{I_i}$ (for $1\leq i\leq k$).

\rnew{
{\em Case 1.} 
Take $D_i^{I_i}=(A_1\mid\ldots\mid A_k)^+$ of type 1.
From the semantics of the UREs, we observe that proving $w_i\models D_i^{I_i}$ is equivalent to proving that (i) $w_i$ is non-empty and (ii) $w_i$ can be split as $w_i= w_1'\uplus\ldots\uplus w_p'$, where every $w_j'$ ($1\leq j\leq p$) satisfies an atom $A_l$ ($1\leq l\leq k$).
First, we point out that since $w$ satisfies the collections of required symbols $P_E$, we infer that $w_i$ is non-empty, which implies (i).
Then, since $w$ satisfies the extended cardinality map $N_E$ and the counting dependencies $K_E$, we infer that (ii) is also satisfied.

{\em Case 2.} 
Take $D_i^{I_i} = (A_1^{I_1}\mid\ldots\mid A_k^{I_k})$ of type 2.
From the semantics of UREs, we observe that proving $w_i\models D_i^{I_i}$ is equivalent to proving that (i) $w_i$ is non-empty and (ii) there exists an atom with interval $A_j^{I_j}$ ($1\leq j\leq k$) such that $w_i\models A_j^{I_j}$.
Since $w\models P_E$, we infer that $w_i$ is non-empty hence (i) is satisfied.
Then, since $w\models C_E$, we infer that only the symbols from one atom $A_j$ of $D_i$ are present in $w_i$.
Moreover, since  $w\models N_E$ and $w\models K_E$, we infer that the number of occurrences of each symbol from $\Sigma_{A_j}$ are such that $w_i\models A_j^{I_j}$.
Hence, the condition (ii) is also satisfied.

{\em Case 3.} 
Take $D_i^{I_i} = (A_1^{I_1}\mid\ldots\mid A_k^{I_k})$ of type 3.
The only difference w.r.t.\ the previous case is that  $w_i$ may be also empty, hence proving $w_i\models D_i^{I_i}$ is equivalent to proving only that there exists an atom with interval $A_j^{I_j}$ ($1\leq j\leq k$) such that $w_i\models A_j^{I_j}$, which follows similarly to the previous case.
}
\qed\end{proof}
Moreover, we define the \emph{subsumption} of two characterizing tuples, which captures the containment of DIMEs.
Given two DIMEs $E$ and $E'$, we write $\Delta_{E'}\preccurlyeq\Delta_{E}$ if $C_{E}\subseteq C_{E'}$, $N_{E'}\subseteq N_{E}$, $P_{E}\subseteq P_{E'}$, and $K_{E}\subseteq K_{E'}$. 
Then, we obtain the following.

\begin{lemma}\label{lemmacnttuples}
Given two DIMEs $E$ and $E'$, ${L}(E')\subseteq {L}(E)$ iff $\Delta_{E'}\preccurlyeq\Delta_{E}$.
\end{lemma}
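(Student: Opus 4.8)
The plan is to prove the two implications separately, in both cases leveraging Lemma~\ref{lemmawordtuple}, which reduces membership in $L(E)$ to satisfaction of the characterizing tuple $\Delta_E$. The easier direction is to show that $\Delta_{E'}\preccurlyeq\Delta_E$ implies $L(E')\subseteq L(E)$. First I would take any $w\in L(E')$; by Lemma~\ref{lemmawordtuple} we have $w\models\Delta_{E'}$, so $w$ satisfies all four conditions $C_{E'},N_{E'},P_{E'},K_{E'}$. The subsumption relation gives $C_E\subseteq C_{E'}$, $N_{E'}\subseteq N_E$, $P_E\subseteq P_{E'}$, and $K_E\subseteq K_{E'}$, and each of the four satisfaction conditions is monotone in the appropriate direction: a smaller set of conflicting pairs, a smaller set of counting dependencies, a smaller collection of required-symbol sets, or a larger extended cardinality map are all \emph{easier} to satisfy. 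Hence $w\models\Delta_E$, and applying Lemma~\ref{lemmawordtuple} again yields $w\in L(E)$.

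The harder direction is the contrapositive of the converse: assuming $\Delta_{E'}\not\preccurlyeq\Delta_E$, I would construct a witness word $w\in L(E')\setminus L(E)$, thereby showing $L(E')\not\subseteq L(E)$. The negation of subsumption means at least one of the four inclusions fails, so I would split into four cases. In each case I exhibit a word $w$ that lies in $L(E')$ (equivalently $w\models\Delta_{E'}$) but violates the corresponding condition of $\Delta_E$. For instance, if $C_E\not\subseteq C_{E'}$, there is a pair $(a,b)\in C_E\setminus C_{E'}$; since $(a,b)\notin C_{E'}$, by definition of $C_{E'}$ there exists a word in $L(E')$ using both $a$ and $b$, and that word violates the conflict $(a,b)\in C_E$, so it is not in $L(E)$. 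The cases for $N_E$, $P_E$, and $K_E$ are analogous: the failure of each inclusion directly produces, from the definition of the offending component of $\Delta_{E'}$, a word in $L(E')$ that falsifies the corresponding clause of $w\models\Delta_E$.

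The main obstacle is the $P_E$ case, because it requires a quantifier alternation that is more delicate than the others. If $P_E\not\subseteq P_{E'}$, there is a set $X\in P_E\setminus P_{E'}$; from $X\notin P_{E'}$ and the definition $P_{E'}=\{X\mid\forall w\in L(E').\ \exists a\in X.\ a\in w\}$, I get a word $w\in L(E')$ that contains \emph{no} symbol of $X$, which immediately violates the requirement $X\in P_E$ and so excludes $w$ from $L(E)$. Here I must be careful that the negation is read correctly --- $X\notin P_{E'}$ asserts the \emph{existence} of a word avoiding all of $X$, which is exactly what is needed --- and that such a word genuinely lies in $L(E')$, which is guaranteed by the defining quantifier. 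A secondary subtlety is that the four components are not independent: a single word must simultaneously lie in $L(E')$ while violating one chosen condition of $\Delta_E$. This is not actually a problem, since membership in $L(E')$ is equated with satisfying \emph{all four} conditions of $\Delta_{E'}$ by Lemma~\ref{lemmawordtuple}, and in each case the witness is drawn directly from $L(E')$, so no further consistency argument is required.

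Throughout, I would make essential use of the fact that $E$ and $E'$ are DIMEs so that Lemma~\ref{lemmawordtuple} applies to translate freely between $w\in L(E)$ and $w\models\Delta_E$; this is what makes the entire argument a routine manipulation of the four set-inclusions rather than a direct combinatorial analysis of the expressions themselves.
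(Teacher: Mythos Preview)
Your proposal is correct and follows essentially the same route as the paper. The paper factors through the intermediate equivalence $\Delta_{E'}\preccurlyeq\Delta_E$ iff $(\forall w.\ w\models\Delta_{E'}\Rightarrow w\models\Delta_E)$, proving its nontrivial direction by contraposition and then invoking Lemma~\ref{lemmawordtuple}; your four-case witness construction is exactly that contraposition spelled out, and your observation that each component of $\Delta_{E'}$ is defined in terms of $L(E')$ (so the witness automatically lies in $L(E')$) is precisely what makes the contraposition go through.
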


\begin{proof}
First, we claim that given two DIMEs $E$ and $E'$:
$\Delta_{E'}\preccurlyeq\Delta_E$ iff $w\models \Delta_{E'}$ implies $w\models\Delta_E$ for every $w$ (*).
The \emph{only if} part of (*) follows directly from the definitions while the \emph{if} part can be easily shown by contraposition.
From Lemma~\ref{lemmawordtuple} and (*) we infer the correctness of Lemma~\ref{lemmacnttuples}.
\qed\end{proof}
\begin{example}\normalfont
For the following DIMEs, it holds that $L(E')\subsetneq L(E)$ and $L(E)\varnot\subseteq L(E')$:
\begin{itemize}
\item Take $E=a^*\shuffle b^*$ and $E' = (a\shuffle b^?)^*$. 
Note that $K_E=\emptyset$ and $K_{E'} = \{(a,b)\}$. 
For instance, the unordered word $b$ belongs to $L(E)$, but does not belong to $L(E')$\rnew{.}
\item Take $E=a^{[3,6]^?}\mid b^*$ and $E' = a^{[3,6]}\mid b^+$. 
Note that $P_{E} = \emptyset$, and $P_{E'} = \{\{a,b\}\}$. 
For instance, the unordered word $\varepsilon$ belongs to $L(E)$, but does not belong to $L(E')$\rnew{.}
\item Take $E=(a\shuffle b^?)^*$ and $E' = (a\shuffle b^?)^{[0,5]}$.
Note that $(a,6)$ belongs to $N_{E}$, but not to $N_{E'}$.
For instance, the unordered word $a^6$ belongs to $L(E)$, but does not belong to $L(E')$\rnew{.}
\item Take $E=(a\mid b)^+$ and $E' = a^+\mid b^+$.  Note that $C_{E} = \emptyset$, and $C_{E'} = \{(a,b),(b,a)\}$. 
For instance, the unordered word $ab$ belongs to $L(E)$, but does not belong to $L(E')$.\qed
\end{itemize}
\end{example}
Lemma~\ref{lemmacnttuples} shows that two equivalent DIMEs yield the same characterizing tuple, and hence, the tuple $\Delta_E$ can be viewed as a ``canonical form'' for the language defined by a DIME $E$.
\rnew{Formally, we obtain the following.}
\begin{corollary}
Given two DIMEs $E$ and $E'$, $L(E) = L(E')$ iff $\Delta_{E} = \Delta_{E'}$.
\end{corollary}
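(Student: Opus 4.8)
The plan is to derive the corollary directly from Lemma~\ref{lemmacnttuples}, which already characterizes language containment of DIMEs in terms of subsumption of their characterizing tuples. Since $L(E) = L(E')$ is simply the conjunction $L(E)\subseteq L(E')$ and $L(E')\subseteq L(E)$, I would apply the lemma twice: once in each direction. Concretely, $L(E')\subseteq L(E)$ is equivalent to $\Delta_{E'}\preccurlyeq\Delta_E$, and symmetrically $L(E)\subseteq L(E')$ is equivalent to $\Delta_{E}\preccurlyeq\Delta_{E'}$.

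The first half of the proof is then to observe that $L(E)=L(E')$ holds iff both $\Delta_{E'}\preccurlyeq\Delta_E$ and $\Delta_E\preccurlyeq\Delta_{E'}$ hold. The remaining and only substantive step is to show that this mutual subsumption is equivalent to the outright equality $\Delta_E=\Delta_{E'}$. Recalling the definition, $\Delta_{E'}\preccurlyeq\Delta_E$ means $C_E\subseteq C_{E'}$, $N_{E'}\subseteq N_E$, $P_E\subseteq P_{E'}$, and $K_E\subseteq K_{E'}$, while $\Delta_E\preccurlyeq\Delta_{E'}$ gives the four reverse inclusions $C_{E'}\subseteq C_E$, $N_E\subseteq N_{E'}$, $P_{E'}\subseteq P_E$, and $K_{E'}\subseteq K_E$. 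Combining each matched pair of inclusions yields $C_E=C_{E'}$, $N_E=N_{E'}$, $P_E=P_{E'}$, and $K_E=K_{E'}$, which is precisely $\Delta_E=\Delta_{E'}$ since a characterizing tuple is determined by its four components. The converse is immediate: if $\Delta_E=\Delta_{E'}$ then all eight inclusions hold trivially, so both subsumptions hold and hence $L(E)=L(E')$.

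I do not anticipate a genuine obstacle here, as the statement is a routine symmetrization of the preceding lemma; the whole argument is an unfolding of the definition of $\preccurlyeq$ together with the antisymmetry of set inclusion. The only point requiring the slightest care is to note that $\preccurlyeq$ reverses the direction of the $C$ and $K$ inclusions relative to the $N$ and $P$ inclusions, so that one must track which direction of containment produces which inclusion; but since we ultimately combine each component with its own reverse inclusion, the mismatched directions cause no difficulty and all four components collapse to equalities.
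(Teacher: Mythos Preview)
Your proposal is correct and matches the paper's intended approach: the corollary is stated immediately after Lemma~\ref{lemmacnttuples} without a separate proof, precisely because it follows by applying that lemma in both directions and using antisymmetry of set inclusion on each component of the tuple. One tiny slip in your closing remark: in the definition of $\preccurlyeq$, it is only the $N$-component whose inclusion runs in the opposite direction (the $C$, $P$, and $K$ inclusions all go the same way), but as you yourself note this is immaterial since each component is paired with its own reverse inclusion.
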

In the next section, we show that the characterizing tuple has a compact representation that permits us to decide the problems of membership and containment in polynomial time.

\subsection{Tractability of DIMEs}\label{subsec:tractability-dime}

We now show that the characterizing tuple admits a compact representation that yields the tractability of deciding membership and containment of DIMEs.

Given a reduced DIME $E$, note that $C_E$ and $K_E$ are quadratic in $|\Sigma|$ and can be easily constructed. 
\rnew{The set} $C_E$ consists of all pairs of distinct symbols $(a,b)$ such that they appear in different atoms in the same clause of type 2 or 3.
Moreover, $K_E$ consists of all pairs of distinct symbols $(a,b)$ such that they appear in the same atom and $I^a = 1$.

While $N_E$ may be infinite, it can be easily represented in a compact manner using intervals: for every symbol $a$, the set $\{i\in \mathbb N_0\mid(a,i)\in N_E\}$ is representable by an interval. 
Given a symbol $a\in\Sigma$, by $\hat N_E(a)$ we denote the interval representing the set $\{i\in\mathbb N_0\mid (a,i)\in N_E\}$ that can be easily obtained from $E$:

\begin{itemize}
\item $\hat N_E(a) = [0,0]$ if $a$ appears in no clause in $E$,
\item $\hat N_E(a) = [0,\infty]$ (or simply $*$) if $a$ appears in a clause of type 1 in $E$,
\item $\hat N_E(a)= I^A$ if $I^a = 1$, $A$ is the atom containing $a$, and $A$ is the unique atom of a clause of type 2 or 3,
\item $\hat N_E(a) = {I^A}^?$ if $I^a = 1$, $A$ is the atom containing $a$, and $A$ appears in a clause of type 2 or 3 containing at least two atoms,
\item $\hat N_E(a)= [0,\max(I^A)]$ if $I^a = ?$, $A$ is the atom containing $a$, and $A$ appears in a clause of type 2 or 3.
\end{itemize}  
For example, for $E_0= a^+
\shuffle ((b \shuffle c^?)^+\mid d^{[5,\infty]})$, we obtain the following $\hat N_{E_0}$:
\begin{gather*}
 \hat N_{E_0}(a)=+, \qquad \hat N_{E_0}(b)\,=*, \qquad
 \hat N_{E_0}(c)\,=*, \qquad \hat N_{E_0}(d)=[5,\infty]^?.
\end{gather*}
Naturally, testing $N_{E'}\subseteq N_{E}$ reduces to a simple test on $\hat N_{E'}$ and $\hat N_{E}$.

\rnew{
Representing $P_E$ in a compact manner is more tricky. 
A natural idea would be to store only its $\subseteq$-minimal elements since $P_E$ is closed under supersets. 
Unfortunately, there exist DIMEs having an exponential number of $\subseteq$-minimal elements. 
For instance, for the DIME $E_1=((a\shuffle b) \mid (c\shuffle d))^+\shuffle ((e\shuffle f)^{[2,5]}\mid g^{[1,3]})\shuffle (h^*\shuffle i^{[0,9]})$, the set $P_{E_1}$ has 6 $\subseteq$-minimal elements: $\{a,c\}$, $\{a,d\}$, $\{b,c\}$, $\{b,d\}$, $\{e,g\}$, and $\{f,g\}$.
The example easily generalizes to arbitrary numbers of atoms used in the clauses.

However, we observe that the exponentially-many $\subseteq$-minimal elements may contain redundant information that is already captured by other elements of the characterizing tuple.
For instance, for the above DIME $E_1$, if we know that $\{a,c\}$ belongs to $P_E$, we can easily see that other $\subseteq$-minimal elements also belong to $P_E$.
More precisely, we observe that for every unordered word $w$ defined by $E$ it holds that $w(a) = w(b)$, $w(c)=w(d)$ and $w(e)=w(f)$, which is captured by the counting dependencies $K_E=\{(a,b),(b,a),(c,d),(d,c),(e,f),(f,e)\}$.
Hence, for the unordered words defined by $E$, the presence of an $a$ implies the presence of a $b$, the presence of a $c$ implies the presence of a $d$, etc.
Consequently, if $\{a,c\}$ belongs to $P_E$, then $\{b,c\}$, $\{a,d\}$, and $\{b,d\}$ also belong to $P_E$.
Similarly, if $\{e,g\}$ belongs to $P_E$, then $\{f,g\}$ also belongs to $P_E$.

Next, we use the aforementioned observation to define a compact representation of $P_E$.
For this purpose, we introduce the auxiliary notion of symbols \emph{implied by a DIME $E$ in the presence of a set of symbols $X$}, denoted $\impl_E(X)$:
\[
\impl_E(X)=X\cup\{a\in\Sigma\mid \exists b\in X.\ (a,b)\in K_E \text{ and } (b,a)\in K_E\}.
\]
For example, for the above $E_1$, we have $\impl_E(\{a,c\}) = \{a,b,c,d\}$.

Moreover, given a DIME $E$, by $\pmin$ we denote the set of all $\subseteq$-minimal elements of $P_E$.
Given a subset $P\subseteq\pmin$, we say that $P$ is:
\begin{itemize}
\item \emph{non-redundant} if $\forall X\in P.\ \not\exists Y\in P.\ X\subseteq\impl_E(Y)$,
\item \emph{covering} if $\forall X\in\pmin.\ \exists Y\in P.\ X\subseteq \impl_E(Y)$.
\end{itemize}
For example, take the above $E_1=((a\shuffle b) \mid (c\shuffle d))^+\shuffle ((e\shuffle f)^{[2,5]}\mid g^{[1,3]})\shuffle (h^*\shuffle i^{[0,9]})$ and recall that $\pminpp=\{$$\{a,c\}$,$\{a,d\}$,$\{b,c\}$,$\{b,d\}$,$\{e,g\}$,$\{f,g\}$$\}$. 
Then, we have the following:

\begin{itemize}
\item $\{\{b,c\},\{f,g\}\}$ is non-redundant and covering,
\item $\{\{b,c\}\}$ is non-redundant and it is not covering,
\item $\{\{a,c\},\{b,c\},\{f,g\}\}$ is redundant and covering,
\item $\{\{a,c\},\{b,c\}\}$ is redundant and not covering.
\end{itemize}
Given a DIME $E$, the \emph{compact representation of the collections of required symbols} $P_E$ is naturally a non-redundant and covering subset of $\pmin$.
Since there may exist many non-redundant and covering subsets of $\pmin$, we use the total order $<_\Sigma$ on the alphabet $\Sigma$ to propose a deterministic construction of the compact representation $\hat P_E$.
For this purpose, we define first some additional notations.

Given an atom $A$, by $\Phi(A)$ we denote the smallest label from $\Sigma$ w.r.t.\ $<_\Sigma$ that is present in $A$ and has interval $1$:
\[
 \Phi(A)=\min_{<_\Sigma}\{a\in\Sigma_A\mid I^a = 1\}.
\]
For example, $\Phi(a\shuffle b) = a$.
Then, given a clause with interval $D^I$, by $\Phi(D^I)$ we denote the set of all symbols $\Phi(A)$ for every atom $A$ in $D$:
\[
 \Phi(D^I)=\{\Phi(A)\mid A\text{ is an atom in }D\}.
\]
For example, $\Phi(((a\shuffle b) \mid (c\shuffle d))^+) = \{a,c\}$ and $\Phi(((e\shuffle f)^{[2,5]}\mid g^{[1,3]}))=\{e,g\}$.
Then, $\hat P(E)$ consists of all such sets for the clauses with intervals of type 1 or 2:
\[
 \hat P_E = \{\Phi(D^I)\mid D^I \text{ is a clause with interval of type 1 or 2 in }E\}.
\]
For example, $\hat P_{E_1}=\{\{a,c\},\{e,g\}\}$.
Notice that the set $\{a,c\}$ is due to the clause with interval $((a\shuffle b) \mid (c\shuffle d))^+$ of type 1 and the set $\{e,g\}$ is due to the clause with interval $((e\shuffle f)^{[2,5]}\mid g^{[1,3]})$ of type 2.
Also notice that the clause with interval $(h^*\shuffle i^{[0,9]})$ is of type 3, none of its symbols is required, and consequently, no set in $\hat P_E$ contains symbols from it.

We have introduced all elements to be able to define the compact representation of a characterizing tuple.
Given a DIME $E$, we say that $\hat\Delta=(C_E,\hat N_E,\hat P_E, K_E)$ is the \emph{compact representation} of its characterizing tuple $\Delta_E$.
Then, an unordered word $w$ satisfies $\hat\Delta_E$, denoted $w\models\hat \Delta_E$, if 
\begin{itemize}
\item $w\models C_E$ and $w\models K_E$ as previously defined when we have introduced $w\models\Delta_E$,
\item $w\models \hat N_E$ i.e., $\forall a\in\Sigma.\ w(a)\in\hat N_E(a)$,
\item $w\models \hat P_E$ i.e., $\forall X\in \hat P_E.\ \exists a\in X.\ a\in w$.
Notice that we use exactly the same definition as for $w\models P_E$ and recall that $\hat P_E$ is in fact a non-redundant and covering subset of $\pmin$.
\end{itemize}
Next, we show that given a DIME $E$, its compact characterizing tuple $\hat\Delta_E$ defines precisely the same set of unordered words as its characterizing tuple $\Delta_E$.

\begin{lemma}\label{lemma:satisfy:compact}
Given an unordered word $w$ and a DIME $E$, $w\models \Delta_E$ iff $w\models \hat\Delta_E$.
\end{lemma}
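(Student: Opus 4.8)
The plan is to exploit that the compact tuple $\hat\Delta_E=(C_E,\hat N_E,\hat P_E,K_E)$ shares its first and fourth components verbatim with $\Delta_E$, so the conditions $w\models C_E$ and $w\models K_E$ are literally identical on both sides. It therefore suffices to prove the two component-wise equivalences $w\models N_E\Leftrightarrow w\models\hat N_E$ and $w\models P_E\Leftrightarrow w\models\hat P_E$, and then assemble them into the conjunctions defining $w\models\Delta_E$ and $w\models\hat\Delta_E$. One subtlety to keep in mind from the outset is that the $P_E$-equivalence will only be fully symmetric \emph{given} $w\models K_E$; this causes no trouble, because in the direction $w\models\hat\Delta_E\Rightarrow w\models\Delta_E$ the condition $w\models K_E$ is available as part of $\hat\Delta_E$.

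First I would dispatch the cardinality map. By the very construction of $\hat N_E(a)$, the interval $\hat N_E(a)$ represents exactly the set $\{i\in\mathbb{N}_0\mid (a,i)\in N_E\}$, so $w(a)\in\hat N_E(a)$ iff $(a,w(a))\in N_E$ for every $a\in\Sigma$; quantifying over $a$ gives $w\models N_E$ iff $w\models\hat N_E$. This step is routine, amounting to matching the five cases defining $\hat N_E(a)$ against the three clause types of a reduced DIME.

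The heart of the proof is the equivalence for the collections of required symbols. The forward direction is immediate: each $\Phi(D^I)\in\hat P_E$ selects one interval-$1$ symbol per atom and is thus a $\subseteq$-minimal element of $P_E$, so $\hat P_E\subseteq P_E$ and $w\models P_E$ trivially implies $w\models\hat P_E$. For the converse I would use that $P_E$ is closed under supersets, reducing $w\models P_E$ to checking that $w$ meets every $X\in\pmin$. Here I rely on the concrete shape of the construction: each $X\in\pmin$ stems from a single type-$1$ or type-$2$ clause $D^I$ and contains exactly one interval-$1$ symbol $a_i$ from each atom $A_i$ of $D$, while the associated $Y=\Phi(D^I)\in\hat P_E$ selects $\Phi(A_i)$ from the same atom. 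Since $a_i$ and $\Phi(A_i)$ both carry interval $1$ inside the same atom, both $(a_i,\Phi(A_i))$ and $(\Phi(A_i),a_i)$ lie in $K_E$, whence $w\models K_E$ forces $w(a_i)=w(\Phi(A_i))$. As $w\models\hat P_E$ guarantees some $\Phi(A_j)\in Y$ with $\Phi(A_j)\in w$, the equality of counts yields $a_j\in w$ with $a_j\in X$, so $w$ meets $X$.

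I expect the main obstacle to be exactly this converse, and specifically the fact that the abstract covering property of $\hat P_E$ does \emph{not} by itself transfer a hit from $Y$ to $X$: from $X\subseteq\impl_E(Y)$ and a single symbol of $Y$ occurring in $w$ one cannot in general conclude that a symbol of $X$ occurs in $w$. What rescues the argument is the tighter, atom-by-atom correspondence supplied by the construction — $X$ and $\Phi(D^I)$ represent the \emph{same} classes of the equivalence ``$a\sim b$ iff $(a,b),(b,a)\in K_E$'' — combined with the observation that, when $w\models K_E$, whether $w$ meets a set depends only on the $\sim$-classes of its symbols. I would therefore phrase the final step through this $\sim$-invariance, invoking covering only to ensure that the relevant $Y=\Phi(D^I)$ actually belongs to $\hat P_E$, and then read off the lemma by combining the $N$- and $P$-equivalences with the shared $C_E$ and $K_E$ conditions.
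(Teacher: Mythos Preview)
Your proposal is correct and follows essentially the same approach as the paper: both arguments note that $C_E$ and $K_E$ are shared verbatim, that the $N_E$/$\hat N_E$ equivalence is immediate from the definition of $\hat N_E$, and that the only substantive work is the converse direction for $P_E$, which is handled via the covering property of $\hat P_E$ together with $w\models K_E$. Your write-up is in fact more careful than the paper's: the paper simply asserts ``since $w\models\hat P_E$ and $w\models K_E$, we infer that $(*)$ is satisfied,'' whereas you correctly point out that abstract covering ($X\subseteq\impl_E(Y)$) plus a single hit in $Y$ does not by itself yield a hit in $X$, and you supply the missing justification via the atom-by-atom bijection between $X$ and $\Phi(D^I)$ that forces the relevant symbols to be $K_E$-equivalent.
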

\begin{proof}
The \emph{only if} part follows directly from the definitions.
For the \emph{if} part, proving $w\models \Delta_E$ reduces to proving that $w\models P_E$, which moreover, reduces to proving that for every $X$ from $\pmin$ there is a symbol $a$ in $X$ that occurs in $w$ (*).
Since $\hat P_E$ is a covering subset of $\pmin$, we know that for every $X\in \pmin$ there exists a set $Y\in\hat P_E$ such that $X\subseteq\impl_E(Y)$.
Since $w\models\hat P_E$ and $w\models K_E$, we infer that (*) is satisfied.
\qed\end{proof}
Additionally, we define the \emph{subsumption} of the compact representations of two characterizing tuples.
Given two DIMEs $E$ and $E'$, we write $\hat\Delta_{E'}\preccurlyeq\hat\Delta_E$ if 

\begin{itemize}
\item $C_E\subseteq C_{E'}$ and $K_E\subseteq K_{E'}$ (as for the subsumption of characterizing tuples),
\item $\forall a\in \Sigma.\ \hat N_{E'}(a)\subseteq \hat N_E(a)$,
\item $\forall X\in\hat P_E.\ \exists Y\in \hat P_{E'}.\ Y\subseteq \impl_{E'}(X)$.
\end{itemize}
Next, we show that the subsumption of compact representations of characterizing tuples captures the subsumption of characterizing tuples.

\begin{lemma}\label{lemma:equivalent:compact}
Given two DIMEs $E$ and $E'$, $\Delta_{E'}\preccurlyeq \Delta_E$ iff $\hat\Delta_{E'}\preccurlyeq \hat\Delta_E$.
\end{lemma}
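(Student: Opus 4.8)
The plan is to prove the two implications separately, matching up the four components of the characterizing tuples. Recall that $\Delta_{E'}\preccurlyeq\Delta_E$ means $C_E\subseteq C_{E'}$, $N_{E'}\subseteq N_E$, $P_E\subseteq P_{E'}$, and $K_E\subseteq K_{E'}$, whereas $\hat\Delta_{E'}\preccurlyeq\hat\Delta_E$ keeps the $C$ and $K$ conditions verbatim, replaces the $N$ condition with a pointwise interval inclusion $\forall a.\ \hat N_{E'}(a)\subseteq\hat N_E(a)$, and replaces the $P$ condition with $\forall X\in\hat P_E.\ \exists Y\in\hat P_{E'}.\ Y\subseteq\impl_{E'}(X)$. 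Since the $C$ and $K$ conditions are literally identical in both subsumptions, the entire content of the lemma lies in showing that the $N$ conditions are equivalent and that the $P$ conditions are equivalent.

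For the extended cardinality map, the equivalence is immediate: by construction $\hat N_E(a)$ is exactly the interval representing the set $\{i\in\Nz\mid(a,i)\in N_E\}$, so $N_{E'}\subseteq N_E$ holds iff for every symbol $a$ we have $\{i\mid(a,i)\in N_{E'}\}\subseteq\{i\mid(a,i)\in N_E\}$, which is precisely $\hat N_{E'}(a)\subseteq\hat N_E(a)$. This is a one-line argument once I invoke the definition of $\hat N_E$ as a faithful compact encoding of the $a$-slices of $N_E$.

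The substance of the proof is the $P$ component. First I would record the two facts established earlier: $\hat P_E$ is a non-redundant and covering subset of $\pmin$ (the set of $\subseteq$-minimal elements of $P_E$), and $P_E$ is closed under supersets. The forward direction assumes $P_E\subseteq P_{E'}$ and fixes $X\in\hat P_E\subseteq\pmin\subseteq P_E\subseteq P_{E'}$; since $\hat P_{E'}$ is covering for $\pminp$ and $X\in P_{E'}$ dominates some minimal element of $P_{E'}$, I can extract $Y\in\hat P_{E'}$ with $Y\subseteq\impl_{E'}(X)$, using that the counting dependencies $K_{E'}$ (which drive $\impl_{E'}$) are exactly the redundancy captured by the covering condition. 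For the converse, assuming the compact $P$ condition holds, I would show $P_E\subseteq P_{E'}$ by taking an arbitrary $Z\in P_E$, passing to a minimal $X\in\pmin$ with $X\subseteq Z$, finding $Y\in\hat P_{E'}$ with $Y\subseteq\impl_{E'}(X)$, and arguing that $Y\in P_{E'}$ forces $X\in P_{E'}$ (hence $Z\in P_{E'}$ by upward closure), where the step from $Y$ to $X$ uses that symbols related by $\impl_{E'}$ must co-occur in every word of $L(E')$, a property guaranteed by $K_{E'}\subseteq K_{E'}$ and Lemma~\ref{lemmawordtuple}.

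The main obstacle will be the $P$ direction, specifically the bookkeeping that links the $\impl_{E'}$-closure to the membership of a required-symbol set in $P_{E'}$. The delicate point is that $\hat P_{E'}$ records only one representative $\Phi(A)$ per atom, so I must argue that if one such representative set lies in $P_{E'}$ then every set obtained by swapping a required symbol for an $\impl_{E'}$-equivalent one is also in $P_{E'}$; this is exactly the redundancy phenomenon illustrated by $E_1$ in the text, and its justification rests on the fact that two symbols $a,b$ with $(a,b),(b,a)\in K_{E'}$ satisfy $w(a)=w(b)$ for all $w\in L(E')$, so their presence is interchangeable inside any required set. Once this interchangeability is stated cleanly, both inclusions follow by routine manipulation, and combining the four component equivalences yields the lemma.

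\qed
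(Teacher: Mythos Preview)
Your overall shape is right and tracks the paper's proof: the $C$, $K$, and $N$ components are handled exactly as you say, and the real work is in the $P$ component. Two concrete gaps remain, and both concern the role of $K_E\subseteq K_{E'}$, which the paper isolates explicitly as a standing hypothesis before attacking the $P$-equivalence.

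First, in your converse direction you take an arbitrary $Z\in P_E$, pass to a minimal $X\in\pmin$, and then ``find $Y\in\hat P_{E'}$ with $Y\subseteq\impl_{E'}(X)$''. But the compact hypothesis only provides such a $Y$ for $X\in\hat P_E$, not for all $X\in\pmin$. You must first use that $\hat P_E$ is covering to obtain $X_0\in\hat P_E$ with $X\subseteq\impl_E(X_0)$ (and, by the symmetric nature of the $K_E$-equivalence between minimal elements of the same clause, also $X_0\subseteq\impl_E(X)$), then apply the hypothesis to $X_0$, and finally transport back to $X$. That last transport step needs $\impl_E(X)\subseteq\impl_{E'}(X)$, which is exactly where $K_E\subseteq K_{E'}$ enters; without it the argument does not close. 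Your ``$K_{E'}\subseteq K_{E'}$'' is presumably a typo for $K_E\subseteq K_{E'}$, but as written it is vacuous and does not supply the needed bridge between $\impl_E$ and $\impl_{E'}$.

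Second, in the forward direction the same structural fact is implicitly used: from $X\in P_{E'}$ you get a minimal $Y_0\in\pminp$ with $Y_0\subseteq X$, and covering of $\hat P_{E'}$ gives $Y\in\hat P_{E'}$ with $Y_0\subseteq\impl_{E'}(Y)$; to conclude $Y\subseteq\impl_{E'}(X)$ you need the symmetry $Y\subseteq\impl_{E'}(Y_0)$, which again comes from the atom-level structure of $K_{E'}$. The paper handles both directions by first rewriting $P_E\subseteq P_{E'}$ as $\forall X\in\pmin.\ \exists Y\in\pminp.\ Y\subseteq X$, then working entirely with minimal elements and explicit symbol-replacement along $K_E$ and $K_{E'}$ (using $K_E\subseteq K_{E'}$) to move between $\pmin$, $\hat P_E$, $\pminp$, and $\hat P_{E'}$. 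Making that assumption explicit and doing the replacement concretely, as the paper does, is what turns your sketch into a proof.
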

\begin{proof}
First, since $P_E$ is closed under supersets, we observe that 
\[
 P_E\subseteq P_{E'} \text{ iff } \forall X\in\pmin.\ \exists Y\in\pminp.\ Y\subseteq X.
\]  
Moreover, the conditions $C_E\subseteq C_{E'}$ and $K_E\subseteq K_{E'}$ are part of both $\Delta_{E'}\preccurlyeq \Delta_E$ and $\hat\Delta_{E'}\preccurlyeq \hat\Delta_E$.
Consequently, proving $\Delta_{E'}\preccurlyeq \Delta_E$ iff $\hat\Delta_{E'}\preccurlyeq \hat\Delta_E$ reduces to proving that, if $C_E\subseteq C_{E'}$ and $K_E\subseteq K_{E'}$, then
\[
\forall X\in\pmin.\ \exists Y\in\pminp.\ Y\subseteq X \text{ iff } \forall X\in \hat P_E.\ \exists Y \in\hat P_{E'}.\ Y\subseteq \impl_{E'}(X).
\]
For the \emph{only if} part, take a set $X$ from $\hat P_E$.
Since $X$ also belongs to $\pmin$, we know by hypothesis that there exists a set $Y$ in $\pminp$ such that $Y\subseteq X$.
Then, construct a set $Y'$ from $Y$ by replacing each symbol $b$ from $Y$ with the smallest $a$ w.r.t.\ $<_\Sigma$ such that $(a,b)$ and $(b,a)$ belong to $K_{E'}$.
Moreover, since $K_E\subseteq K_{E'}$, we infer that $Y'\subseteq \impl_{E'}(X)$.
For the \emph{if} part, take an $X$ from $\hat P_E$ and an $Y$ from $\hat P_{E'}$ s.t.\ $Y\subseteq \impl_{E'}(X)$.
To construct the corresponding $X'$ in $\pmin$ and $Y'$ in $\pminp$ such that $Y'\subseteq X'$, we replace symbols $a$ from $X$ and $a'$ from $Y$ with symbols $b$ in $X'$ and $b'$ in $Y'$ such that $(a,b)$ and $(b,a)$ belong to $K_E$, and $(a',b')$ and $(b',a')$ belong to $K_{E'}$.
Since $K_E\subseteq K_{E'}$, we know that such $X'$ and $Y'$ do exist.
\qed\end{proof}

\begin{example}

Take $E = a^*\shuffle (b\mid c)^+\shuffle d^*$ and $E'=(a\shuffle b)^+\mid (c\shuffle d)^+$.
Notice that $L(E')\subseteq L(E)$, $\Delta_{E'}\preccurlyeq \Delta_E$, and $\hat\Delta_{E'}\preccurlyeq\hat\Delta_E$.
In particular, we have the following. 
\begin{itemize}
\item $C_E=\emptyset$ is included in $C_{E'}=\{(a,c),(a,d),(b,c),(b,d),(c,a),(c,b),(d,a),(d,b)\}$,
\item $\hat N_E(a) = \hat N_{E'}(a) = *,\ldots, \hat N_E(d) = \hat N_{E'}(d) = *$, 
\item $K_E=\emptyset$ is included in $K_{E'} = \{(a,b),(b,a),(c,d),(d,c)\}$,
\item $\hat P_E = \{\{b,c\}\}$ and $\hat P_{E'} = \{\{a,c\}\}$ that compactly represent $P_E=\{\{b,c\},\ldots\}$ and $P_{E'}=\{\{a,c\}, \{a,d\}, \{b,c\}, \{b,d\},\ldots\}$, respectively (we have listed only the $\subseteq$-minimal sets).
Then, take $X=\{b,c\}$ from $\hat P_E$ and notice that there exists $Y=\{a,c\}$ in $\hat P_{E'}$ such that $
Y\subseteq \impl_{E'}(X)$ because $\impl_{E'}(\{b,c\}) = \{a,b,c,d\}$.\qed
\end{itemize}
\end{example}
Next, we show that the compact representation is of polynomial size.

\begin{lemma}\label{lemma:polynomial:compact}
Given a DIME $E$, the compact representation $\hat\Delta_E=(C_E,\hat N_E,\hat P_E,K_E)$ of its characterizing tuple $\Delta_E$ is of size polynomial in the size of the alphabet $\Sigma$.
\end{lemma}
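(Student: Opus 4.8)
The plan is to bound the size of each of the four components of $\hat\Delta_E=(C_E,\hat N_E,\hat P_E,K_E)$ separately, exploiting the defining restriction that a DIME contains every symbol at most once. This no-repetition property is the structural fact driving the whole argument: since each atom, each clause, and the whole expression must use distinct symbols, the number of symbols, atoms, and clauses occurring in $E$ is each bounded by $|\Sigma|$. I would state this observation first and then reuse it in every case.

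First I would dispatch the two easy components. By definition $C_E$ and $K_E$ are subsets of $\Sigma\times\Sigma$, so each has size at most $|\Sigma|^2$, hence quadratic. Next, for $\hat N_E$ I would observe that it assigns exactly one interval $\hat N_E(a)$ to each symbol $a\in\Sigma$, giving $|\Sigma|$ intervals in total. By the five-case definition of $\hat N_E(a)$, each such interval is either $[0,0]$, $*$, or obtained from an interval $I^A$ already occurring in $E$ (possibly with a $?$ appended or the lower bound lowered to $0$); in every case its endpoints are drawn from the input and it is representable in space polynomial in the size of $E$, and thus polynomial in $|\Sigma|$.

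The main obstacle, and the real content of the lemma, is bounding $\hat P_E$, since the underlying set $P_E$ and even its collection of $\subseteq$-minimal elements $\pmin$ may be exponential, as witnessed by $E_1$. Here I would invoke the deterministic construction $\hat P_E=\{\Phi(D^I)\mid D^I\text{ is a clause with interval of type 1 or 2 in }E\}$. Since symbols do not repeat, $E$ contains at most $|\Sigma|$ clauses, so $\hat P_E$ has at most $|\Sigma|$ elements. Moreover each set $\Phi(D^I)$ contains exactly one symbol $\Phi(A)$ per atom $A$ of $D$, and a clause has at most $|\Sigma|$ atoms (again by no repetition), so every element of $\hat P_E$ has cardinality at most $|\Sigma|$. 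Hence $\hat P_E$ is of size $O(|\Sigma|^2)$.

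Putting these bounds together, every component of $\hat\Delta_E$ is polynomial in $|\Sigma|$, which yields the claim. The only delicate point is $\hat P_E$: the key insight, already justified when establishing that $\hat P_E$ is a covering and non-redundant subset of $\pmin$ (cf.\ Lemma~\ref{lemma:satisfy:compact}), is that replacing the potentially exponentially many $\subseteq$-minimal elements of $P_E$ by a single representative set $\Phi(D^I)$ per clause collapses the exponential blowup to a linear number of polynomial-size sets.
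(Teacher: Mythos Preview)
Your proof is correct and follows essentially the same approach as the paper: bound each of the four components of $\hat\Delta_E$ separately. The paper's proof is a single sentence recording the same bounds you derive, except that it claims $\hat P_E$ is linear in $|\Sigma|$ rather than your $O(|\Sigma|^2)$. Your quadratic bound is of course sufficient for the lemma, but you can sharpen it with the same no-repetition observation you already use: since every symbol occurs in at most one atom of one clause of $E$, the sets $\Phi(D^I)$ are pairwise disjoint, so the sum of their cardinalities over all clauses is at most $|\Sigma|$, giving a total size linear in $|\Sigma|$ rather than merely quadratic.
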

\begin{proof}
By construction, the sizes of $C_E$ and $K_E$ are quadratic in $|\Sigma|$ while the sizes of $\hat P_E$ and $\hat N_E$ are linear in $|\Sigma|$.
\qed\end{proof}
The use of compact representation of characterizing tuples allows us to state the main result of this section.

\begin{theorem}\label{thdimeptime}
Given an unordered word $w$ and two DIMEs $E$ and $E'$:
\begin{enumerate}
\item deciding whether $w\in L(E)$ is in PTIME,
\item deciding whether $L(E')\subseteq L(E)$ is in PTIME.
\end{enumerate}
\end{theorem}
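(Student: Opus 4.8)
The plan is to assemble the characterization and subsumption lemmas established above into two polynomial-time decision procedures, one for each item, with the compact representation $\hat\Delta_E$ as the central data structure.

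The preliminary step I would carry out, shared by both items, is to show that for any DIME $E$ one can compute in polynomial time its reduced form and then the compact tuple $\hat\Delta_E=(C_E,\hat N_E,\hat P_E,K_E)$. Each rewriting rule producing a reduced DIME acts locally on a single clause or atom and strictly simplifies the expression (distributing a multiplicity and deleting $[0,0]$ parts), so only polynomially many steps are needed and each is executable in polynomial time. Once $E$ is reduced, the constructions of Section~\ref{subsec:tractability-dime} build $C_E$ and $K_E$ by a quadratic scan over pairs of symbols, build $\hat N_E$ by assigning to each symbol the interval dictated by its clause type, and build $\hat P_E$ by collecting the sets $\Phi(D^I)$ over clauses of type $1$ or $2$. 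By Lemma~\ref{lemma:polynomial:compact} the result has size polynomial in $|\Sigma|$, so all of this runs in polynomial time.

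For item~(1), I would chain Lemma~\ref{lemmawordtuple} and Lemma~\ref{lemma:satisfy:compact} to get that $w\in L(E)$ iff $w\models\Delta_E$ iff $w\models\hat\Delta_E$, and then simply test the four conditions defining $w\models\hat\Delta_E$. Verifying $w\models C_E$ and $w\models K_E$ amounts to checking a quadratic number of elementary conditions on the coordinates of $w$; verifying $w\models\hat N_E$ requires, for each symbol $a$, a single interval-membership test $w(a)\in\hat N_E(a)$; and verifying $w\models\hat P_E$ requires, for each of the linearly many $X\in\hat P_E$, that some symbol of $X$ occurs in $w$. Every test is polynomial in $|w|+|E|$. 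For item~(2), I would chain Lemma~\ref{lemmacnttuples} and Lemma~\ref{lemma:equivalent:compact} to get that $L(E')\subseteq L(E)$ iff $\Delta_{E'}\preccurlyeq\Delta_E$ iff $\hat\Delta_{E'}\preccurlyeq\hat\Delta_E$, and then decide the compact subsumption directly from its definition: the inclusions $C_E\subseteq C_{E'}$ and $K_E\subseteq K_{E'}$ of quadratic-size relations, the per-symbol interval inclusions $\hat N_{E'}(a)\subseteq\hat N_E(a)$, and finally $\forall X\in\hat P_E.\ \exists Y\in\hat P_{E'}.\ Y\subseteq\impl_{E'}(X)$. For the last condition one computes $\impl_{E'}(X)$ from $K_{E'}$ for each of the polynomially many $X$ and then searches the polynomially many $Y\in\hat P_{E'}$, so the whole check is polynomial.

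The step I expect to carry the real content, as opposed to bookkeeping, is the preliminary one: arguing that the reduction to a reduced DIME terminates after polynomially many steps and that each component of $\hat\Delta_E$ is both correctly and compactly computable. Once Lemma~\ref{lemma:polynomial:compact} guarantees the compact tuple is of polynomial size, everything downstream is a routine verification that a fixed, finite list of set-level and interval-level checks runs in polynomial time, and the correctness of those checks is exactly what the chain of lemmas above already furnishes.
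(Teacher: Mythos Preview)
Your proposal is correct and follows essentially the same approach as the paper: chain Lemmas~\ref{lemmawordtuple} and~\ref{lemma:satisfy:compact} (together with Lemma~\ref{lemma:polynomial:compact}) for item~(1), and Lemmas~\ref{lemmacnttuples} and~\ref{lemma:equivalent:compact} (together with Lemma~\ref{lemma:polynomial:compact}) for item~(2). You have merely spelled out in more detail the polynomial-time computability of the compact tuple and of each verification step, which the paper leaves implicit.
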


\begin{proof} 
The first part follows from Lemma~\ref{lemmawordtuple}, Lemma~\ref{lemma:satisfy:compact}, and Lemma~\ref{lemma:polynomial:compact}.
The second part follows from Lemma~\ref{lemmacnttuples}, Lemma~\ref{lemma:equivalent:compact}, and Lemma~\ref{lemma:polynomial:compact}.
\qed\end{proof}

}

\section{Interval multiplicity schemas}\label{sec:static}
In this section, we employ DIMEs to define schema languages and we present the related problems of interest.
\begin{definition}
A \emph{disjunctive interval multiplicity schema (DIMS)} is a tuple $S=(\root_S, R_S)$, where $\root_S\in\Sigma$ is a designated root label and $R_S$ maps symbols in $\Sigma$ to DIMEs.  
By $\dims$ we denote the set of all disjunctive interval multiplicity schemas.  
A \emph{disjunction-free interval multiplicity schema (IMS)} $S=(\root_S, R_S)$ is a restricted DIMS, where $R_S$ maps symbols in $\Sigma$ to IMEs.
By $\sims$ we denote the set of all disjunction-free interval multiplicity schemas.
\end{definition}
We define the language captured by a DIMS $S$ in the following way.
Given a tree $t$, we first define the unordered word $ch_t^n$ of children of a node $n\in N_t$ of
$t$ i.e., $\ch_t^n(a) = |\{m\in N_t\mid(n,m)\in\child_t\wedge
\lab_t(m)=a\}|$. Now, a tree $t$ \emph{satisfies} $S$, in symbols $t \models S$, if $\lab_t(\root_t) = \root_S$ and for every node $n\in N_t$, $\ch_t^n\in L(R_S(\lab_t(n)))$.
By $L(S)\subseteq \Tree$ we denote the set of all trees satisfying $S$.

In the sequel, we present a schema $S=(\root_S,R_S)$ as a set of rules of the form $a\rightarrow R_S(a)$, \rnew{for every} $a\in\Sigma$.
If $L(R_S(a)) = \varepsilon$, then we write $a\rightarrow \epsilon$ or we simply omit writing such a rule.

\begin{example}
\label{ex:1}
\normalfont
Take the content model of a semi-structured database storing information about a peer-to-peer file sharing system, having the following rules:
\emph{1)} a peer is allowed to download at most the same number of files that \rnew{it} uploads, and \emph{2)} peers are split into two groups: a peer is a \emph{vip} if \rnew{it} uploads at least 100 files, otherwise it is a simple \emph{user}:
\begin{flalign*}
\mathit{peers} \rightarrow~ & \mathit{user}^*\shuffle\mathit{vip}^*,\\
\mathit{user} \rightarrow~ & (\mathit{upload}\shuffle\mathit{download}^?)^{[0,99]},\\
\mathit{vip} \rightarrow~ &  (\mathit{upload}\shuffle\mathit{download}^?)^{[100,\infty]}.\tag*{\qed}
\end{flalign*}
\end{example}
\begin{example}\normalfont
\label{ex:2}
Take the content model of a semi-structured database storing information about two types of cultural events: plays and movies.
Every event has a date when it takes place.
If the event is a play, then it takes place in a theater while a movie takes place in a cinema.
\begin{flalign*}
\mathit{events}\rightarrow~ & \mathit{event^*},\\
\mathit{event}\rightarrow~ & \mathit{date\shuffle ((play\shuffle theater)\mid (movie\shuffle cinema))}.\tag*{\qed}
\end{flalign*}
\end{example}

\paragraph{\rnew{{\bf Problems of interest.}}}
We define next the problems of interest and we formally state the corresponding decision problems parameterized by the class of schema $\mathcal S$ and, when appropriate, by a class of queries $\mathcal Q$.
\begin{itemize}
\item {\em Schema satisfiability} -- checking if there exists a tree satisfying the given schema:
\[
 \SAT_{\mathcal{S}} = \{ S \in\mathcal{S} \mid \exists t\in \Tree.\ t\models S \}.
\]
\item {\em Membership} -- checking if the given tree satisfies the given schema:
\[
 \MEMB_{\mathcal{S}} = \{(S,t) \in\mathcal{S}\times\Tree\mid t\models S\}.
\]
\item {\em Schema containment} -- checking if every tree satisfying one given schema satisfies another given schema:
\[
 \CNT_{\mathcal{S}} = \{(S_1,S_2)\in\mathcal{S}\times\mathcal{S}\mid L(S_1)\subseteq L(S_2)\}.
\]
\item {\em Query satisfiability by schema} -- checking if there exists a tree that satisfies the given schema and the given query:
\[
 \SAT_{\mathcal{S},\mathcal{Q}} = \{(S,q)\in \mathcal{S}\times\mathcal{Q}\mid \exists t\in L(S).\ t\models q\}.
\]
\item {\em Query implication by schema} -- checking if every tree satisfying the given schema satisfies also the given query:
\[
 \IMPL_{\mathcal{S},\mathcal{Q}} = \{(S, q)\in \mathcal{S}\times\mathcal{Q}\mid \forall t\in L(S).\ t\models q\}.
\]
\item {\em Query containment in the presence of schema} -- checking if every tree satisfying the given schema and one given query also satisfies another given query: 
\[
 \CNT_{\mathcal{S},\mathcal{Q}}= \{(p, q, S) \in \mathcal{Q}\times\mathcal{Q}\times\mathcal{S} \mid \forall t\in L(S).\ t \models p\Rightarrow t\models q\}.
\]
\end{itemize}
\rnew{We study these problems for DIMSs and IMSs in Sections~\ref{dims} and~\ref{ms} of the paper.}

\section{Complexity of disjunctive interval multiplicity schemas (DIMSs)}\label{dims}
In this section, we present the complexity results for DIMSs.
First, we show the tractability of schema satisfiability and containment.
Then, we provide an algorithm for deciding membership in \emph{streaming} i.e., that processes an XML document in a single pass and using memory depending on the height of the tree and not on its size.
Finally, we point out that the complexity of query satisfiability, implication, and containment in the presence of the schema follow from existing results.

First, we show the tractability of schema satisfiability and schema containment.

\rnew{\begin{proposition}\label{dims:ptime}
$\SAT_\dims$ and $\CNT_{\dims}$ are in PTIME.
\end{proposition}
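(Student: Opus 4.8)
The plan is to reduce both problems to the tractable membership and containment tests for individual DIMEs established in Theorem~\ref{thdimeptime}, combined with a fixpoint computation of the productive labels that is standard for local, DTD-like schemas.

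For $\SAT_\dims$ I would first compute the set $\mathit{Prod}\subseteq\Sigma$ of \emph{productive} labels, i.e. those $a$ that can be the root of a finite tree all of whose nodes meet the schema. This set is the least fixpoint of the monotone operator that adds $a$ to the current set $U$ whenever $L(R_S(a))$ contains an unordered word whose every symbol lies in $U$, so at most $|\Sigma|$ iterations are needed. The only DIMS-specific ingredient is a polynomial test deciding, for a DIME $E$ and a set $U$, whether $L(E)$ contains a word over $U$: writing $E$ in reduced form as $D_1^{I_1}\shuffle\ldots\shuffle D_k^{I_k}$, such a word exists iff every clause that cannot produce $\varepsilon$ (type 1 or 2) contains an atom all of whose mandatory ($I^a=1$) symbols belong to $U$, the minimal witness being the union of those mandatory symbols while each type-3 clause contributes $\varepsilon$. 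Then $S$ is satisfiable iff $\root_S\in\mathit{Prod}$, and the whole procedure runs in polynomial time.

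For $\CNT_\dims$, given $S_1=(r_1,R_1)$ and $S_2=(r_2,R_2)$, I would first run the satisfiability test on $S_1$: if $L(S_1)=\emptyset$ containment holds vacuously, and otherwise every tree of $L(S_1)$ has root $r_1$, so I may reject immediately unless $r_1=r_2$. Next I would compute $\mathit{Prod}_{S_1}$ and the set $\mathit{Use}_{S_1}$ of labels that actually occur in some tree of $L(S_1)$, namely the productive labels reachable from $r_1$, again a polynomial fixpoint. The crucial point is that the content words an $a$-node can carry in a tree of $L(S_1)$, for $a\in\mathit{Use}_{S_1}$, are exactly the words of $L(R_1(a))$ whose support is contained in $\mathit{Prod}_{S_1}$: any such word is realizable by a grafting argument (complete each child into a finite subtree using productivity and attach the $a$-node inside a witnessing tree), and conversely the children of a node are necessarily productive. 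I would therefore form, for each $a\in\mathit{Use}_{S_1}$, the restriction $E_a$ of $R_1(a)$ to the alphabet $\mathit{Prod}_{S_1}$, obtained by deleting optional symbols outside $\mathit{Prod}_{S_1}$ and deleting atoms whose mandatory symbol is outside it, which keeps the expression a DIME, and test $L(E_a)\subseteq L(R_2(a))$ with the polynomial DIME-containment of Theorem~\ref{thdimeptime}. I would return \emph{true} iff all these tests succeed.

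Soundness of the containment test is immediate: for $t\in L(S_1)$ every node $n$ labeled $a$ satisfies $\ch_t^n\in L(E_a)\subseteq L(R_2(a))$, and the roots agree, so $t\in L(S_2)$. Completeness uses the realizability claim: if some test fails there is $w\in L(E_a)\setminus L(R_2(a))$, and grafting produces a tree of $L(S_1)$ with an $a$-node whose children word is $w$, witnessing non-containment. Since there are at most $|\Sigma|$ labels and each restricted containment is polynomial, the procedure is in PTIME. The main obstacle I anticipate is precisely this realizability/restriction step: one must not test the naive inclusion $L(R_1(a))\subseteq L(R_2(a))$, because words of $L(R_1(a))$ using non-productive symbols are never realized and would cause false negatives; getting the restriction to $\mathit{Prod}_{S_1}$ right, and verifying that it is still a DIME so that Theorem~\ref{thdimeptime} applies, is the delicate part, whereas the remaining bookkeeping is routine.
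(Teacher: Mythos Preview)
Your approach is essentially the one the paper takes: a fixpoint/dynamic-programming computation of productive labels for $\SAT_\dims$, and a per-symbol reduction to DIME containment (Theorem~\ref{thdimeptime}) for $\CNT_\dims$. The paper's proof is in fact much terser than yours---for containment it simply asserts that ``testing the containment of two DIMSs reduces to testing, for each symbol in the alphabet, the containment of the associated DIMEs''---without spelling out the root-label check, the unsatisfiable-$S_1$ case, or the restriction to productive symbols.

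Your more careful treatment of productivity is a genuine refinement: the naive per-symbol test $L(R_1(a))\subseteq L(R_2(a))$ can indeed give false negatives when $R_1(a)$ mentions unproductive symbols, and the paper is implicitly relying on a trimming preprocessing step (made explicit only later, for IMSs, in Section~\ref{subsec:dep}). Your observation that the restricted expression $E_a$ must remain a DIME for Theorem~\ref{thdimeptime} to apply is the right thing to worry about; it does work out, since dropping optional symbols and unrealizable atoms preserves the clause/atom structure, but the paper does not discuss this at all. In short: same proof, but you have filled in details the paper leaves to the reader.
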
}

\begin{proof}
A simple algorithm based on dynamic programming can decide the satisfiability of a DIMS.
\rnew{More precisely, given a schema $S=(\root_S,R_S)$, one has to determine for every symbol $a$ of the alphabet $\Sigma$ whether there exists a (finite) tree $t$ that satisfies $S'=(a,R_S)$.
Then, the schema $S$ is satisfiable if there exist such a tree for the root label $\root_S$.
}

Moreover, testing the containment of two DIMSs reduces to testing, for each symbol in the alphabet, the containment of the associated DIMEs, which is in PTIME (Theorem~\ref{thdimeptime}).
\qed\end{proof}
\noindent Next, we provide an algorithm for deciding membership in \emph{streaming} i.e., that processes an XML document in a single pass and uses memory depending on the height of the tree and not on its size.
\rnew{Our notion of streaming has been employed in~\cite{SeVi02} as a relaxation of the constant-memory XML validation against DTDs, which can be performed only for some DTDs~\cite{SeVi02,SeSi07}.
In general, validation against DIMSs cannot be performed with constant memory due to the same observations as in~\cite{SeVi02,SeSi07} w.r.t.\ the use of recursion in the schema. 
Hence, we have chosen our notion of streaming to be able to have an algorithm that works for the entire class of DIMSs.
}
We assume that the input tree is given in XML format, with arbitrary ordering of sibling nodes.
\rnew{
Moreover, the proposed algorithm has \emph{earliest rejection} i.e., if the given tree does not satisfy the given schema, the algorithm outputs the result as early as possible.
}
For a tree $t$, $\mathit{height}(t)$ is the height of $t$ defined in the usual way. 
We employ the standard RAM model and assume that subsequent natural numbers are used as labels in $\Sigma$. 

\rnew{\begin{proposition}\label{dims:ptime2}
$\MEMB_\dims$ is in PTIME.
There exists an earliest rejection streaming algorithm that checks membership of a tree $t$ in a DIMS $S$ in time $O(|t|\times|\Sigma|^2)$ and using space $O(\mathit{height}(t) \times|\Sigma|^2)$.
\end{proposition}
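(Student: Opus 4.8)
The plan is to design a single-pass streaming algorithm that maintains, for each currently-open node on the path from the root to the node being processed, a running summary of the children seen so far, and validates each node's content against its DIME the moment the node closes. The key enabling fact, established earlier, is that membership of an unordered word $w$ in a DIME $E$ reduces to checking $w\models\hat\Delta_E$ (Lemma~\ref{lemmawordtuple} and Lemma~\ref{lemma:satisfy:compact}), and that $\hat\Delta_E$ has size polynomial—indeed $O(|\Sigma|^2)$—in $|\Sigma|$ (Lemma~\ref{lemma:polynomial:compact}). So the heart of the algorithm is: as we read the XML serialization, we never need to store the actual multiset of children of an open node; it suffices to store enough information to test the four conditions $w\models C_E$, $w\models\hat N_E$, $w\models\hat P_E$, and $w\models K_E$ incrementally.

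First I would fix the data structure kept for each open node $n$ with label $a=\lab_t(n)$, whose content model is $E=R_S(a)$. The algorithm maintains a stack of \emph{frames}, one per open node, so that the stack height is at most $\mathit{height}(t)$. Each frame stores the counts $\ch_t^n(b)$ for every $b\in\Sigma$ seen so far as a direct child of $n$—this is a vector of $|\Sigma|$ numbers, which under the assumed unary RAM model costs $O(|\Sigma|)$ space. From these counts every condition of $\hat\Delta_E$ is checkable: $w\models C_E$ and $w\models K_E$ are tested by scanning the $O(|\Sigma|^2)$ pairs, while $w\models\hat N_E$ and $w\models\hat P_E$ are tested in $O(|\Sigma|)$ and $O(|\hat P_E|)=O(|\Sigma|)$ time respectively. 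Thus each frame occupies $O(|\Sigma|^2)$ space (dominated by the locally stored portion of the precomputed characterizing tuple, or simply $O(|\Sigma|)$ for the count vector), giving the claimed total space bound $O(\mathit{height}(t)\times|\Sigma|^2)$.

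Next I would specify the event handling. On reading an opening tag for a node $n$ with label $a$: if $n$ is the root, check $a=\root_S$; otherwise increment the parent frame's count for $a$. Then push a fresh frame for $n$ (after retrieving the precomputed $\hat\Delta_{R_S(a)}$, computed once in preprocessing for all symbols). On reading the closing tag for $n$: pop its frame and test its accumulated child-word $\ch_t^n$ against $\hat\Delta_{R_S(a)}$; if it fails, reject immediately. Since the test occurs exactly when the node closes—i.e., as soon as all its children have been read and no later input can change the verdict—this yields the \emph{earliest rejection} property. For the time bound, each node is opened and closed once; the per-open and per-child-increment work is $O(|\Sigma|)$, and the per-close validation is $O(|\Sigma|^2)$; summing over all $|t|$ nodes gives $O(|t|\times|\Sigma|^2)$. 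The overall PTIME claim for $\MEMB_\dims$ follows, and a non-streaming reading also follows directly from the first part of Theorem~\ref{thdimeptime}.

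The main obstacle I expect is not the correctness of the incremental checks—those are immediate from Lemma~\ref{lemma:satisfy:compact}—but rather pinning down the complexity accounting cleanly. Two points require care: justifying the space bound depends on the unary-representation convention (so that each child count fits in $O(1)$ space, as assumed in the preliminaries), and confirming that the characterizing tuples $\hat\Delta_{R_S(a)}$ for all $a$ can be precomputed within the stated bounds so they need not be recomputed at each close. I would also verify that the \emph{earliest rejection} claim is genuinely honored by the closing-tag test: since a DIME's conditions (conflicts, cardinalities, required symbols, counting dependencies) can all be violated or satisfied only once the full child-multiset is known, and the serialization reveals that multiset exactly at the closing tag, validating at close is the earliest point at which a definitive per-node verdict is possible, so the algorithm cannot defer rejection unnecessarily.
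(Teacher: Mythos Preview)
Your algorithm is correct for the PTIME claim and achieves the stated time and space bounds, and the overall structure (stack of per-node count vectors, validation against $\hat\Delta_{R_S(a)}$) matches the paper's approach. However, your treatment of \emph{earliest rejection} contains a genuine error.

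You assert that ``a DIME's conditions (conflicts, cardinalities, required symbols, counting dependencies) can all be violated or satisfied only once the full child-multiset is known,'' and conclude that checking only at the closing tag is the earliest possible point. This is false. Two of the four conditions are \emph{monotone} in the sense that once violated they stay violated as more children arrive: if $(b,c)\in C_E$ and you have already seen both a $b$-child and a $c$-child, no further input can repair this; likewise, if the count of $b$-children already exceeds $\max(\hat N_E(b))$, the node is doomed regardless of what follows. An earliest-rejection algorithm must therefore test these two conditions \emph{eagerly}, immediately after each child's opening tag, and reject there. The paper's Algorithm~\ref{alg1} does exactly this (lines 7--10), deferring to the closing tag only the checks for $\hat N_E$ lower bounds, $\hat P_E$, and $K_E$, which genuinely cannot be decided earlier. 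Your close-only validation would, for instance, read an arbitrarily large subtree rooted at a conflicting sibling before rejecting, so it is not earliest-rejecting. The fix is straightforward: move the conflict check and the upper-bound cardinality check into the open-tag handler, right after incrementing the parent's count.
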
}
\begin{proof}
We propose Algorithm~\ref{alg1} for deciding the membership of a tree $t$ to the language of a DIMS $S$.
The input tree $t$ is given in XML format, with some arbitrary ordering of sibling nodes.
We assume a well-formed stream $\widetilde{t} \subset\{\open,\close\}\times\Sigma$ representing a tree $t$ and a procedure $\readb(\widetilde{t})$ that returns the next pair $(\theta,b)$ in the stream, where $\theta\in\{\open,\close\}$ and $b\in\Sigma$.
The algorithm works for every arbitrary ordering of sibling nodes.
To validate a tree $t$ against a DIMS $S = (\root_S, R_S)$, one has to run Algorithm~\ref{alg1} after reading the opening tag of the root.

\rnew{
For a given node, the algorithm constructs the compact representation of the characterizing tuple of its label (line 1), which requires space $O(|\Sigma|^2)$ (cf.\ Lemma~\ref{lemma:polynomial:compact}).
The algorithm also stores for a given node the number of occurrences of each label in $\Sigma$ among its children.
This is done using the array $\counta$, which requires space $O(\Sigma)$.
Initially, all values in the array $\counta$ are set at 0 (lines 2-3) and they are updated after reading the open tag of the children (lines 4-6).
During the execution, the algorithm maintains a stack whose height is the depth of the currently visited node.
Naturally, the bound on space required is $O(\mathit{height}(t)\times|\Sigma|^2)$.

The algorithm has earliest rejection since it rejects a tree as early as possible.
More precisely, this can be done after reading the opening tag for nodes that violate the maximum value for the allowed cardinality for their label (lines 7-8) or violate some conflicting pair of siblings (lines 9-10).
If it is not the case, the algorithm recursively validates the corresponding subtree (lines 11-12).
After reading all children of the current node, the algorithm checks whether the components of the characterizing tuple are satisfied: the extended cardinality map (lines 14-15), the collections of required symbols (lines 16-17), and the counting dependencies (lines 18-19).
Notice that since we have checked the conflicting pairs of siblings after reading each opening tag, we do not need to check them again after reading all children.
However, we still need to check the extended cardinality map at this moment to see whether the number of occurrences of each label is in the allowed interval.
When we have read the opening tag, we were able to reject only if the maximum value for the allowed number of occurrences has been already violated.
As for the collections of required symbols and the counting dependencies, we are able to establish whether they are satisfied or not after reading all children.
If none of the constraints imposed by the characterizing tuple is violated, the algorithm returns true (line 20).
As we have already shown with Lemma~\ref{lemmawordtuple} and Lemma~\ref{lemma:satisfy:compact}, the compact representation of the characterizing tuple captures precisely the language of a given DIME.
Consequently, the algorithm returns true after reading the root node iff the given tree satisfies the given schema.
}
\qed \end{proof}

\begin{algorithm}\caption{Streaming algorithm for testing membership.\label{alg1}}
\ALGORITHM \emph{validate}$(a)$\\
{\bf Parameters:} DIMS $S$, stream $\widetilde{t}$\\
\INPUT: the label $a\in \Sigma$ of the current node\\
\OUTPUT: $\true$ if the subtree rooted at the current node is valid w.r.t.\ $S$, $\false$ otherwise\\
\LN \LET $(C,\hat N,\hat P,K)$ be the compact representation of the characterizing tuple of $R_S(a)$\\
\LN \FOR $b\in\Sigma$ \DO\\
\LN \TAB \LET $\counta[b] = 0$\\
\LN $(\theta,b) = \readb(\widetilde{t})$\\
\LN \WHILE $\theta=\open$ \DO\\
\LN \TAB $\counta[b]\colonequals \counta[b] + 1$\\
\LN \TAB \IF $\counta[b] > \max(\hat N(b))$ \THEN\\
\LN \TAB \TAB \RETURN $\false$\\
\LN \TAB \IF $\exists c\in\Sigma.\ (b,c)\in C\wedge \counta[c]\neq 0$ \THEN \\
\LN \TAB \TAB \RETURN $\false$\\
\LN \TAB \IF \emph{validate}$(b) = \false$ \THEN\\
\LN \TAB \TAB \RETURN $\false$\\
\LN \TAB $(\theta,b) = \readb(\widetilde{t})$\\
\LN \IF $\exists b\in\Sigma.\ \counta[b]\notin \hat N(b)$ \THEN\\
\LN \TAB \RETURN $\false$\\
\LN \IF $\exists X\in \hat P.\ \forall b\in X.\ \counta[b] = 0$ \THEN \\
\LN \TAB \RETURN $\false$\\
\LN \IF $\exists (b,c)\in K.\ \counta[b]<\counta[c]$ \THEN\\
\LN \TAB \RETURN $\false$\\
\LN \RETURN $\true$
\end{algorithm}
\noindent We continue with complexity results that follow from known facts.
Query satisfiability for DTDs is NP-complete~\cite{BeFaGe08} and we adapt the result for DIMSs.

\begin{proposition}\label{propsatqnphard}
$\SAT_{\dims,\Twig}$ is NP-complete.
\end{proposition}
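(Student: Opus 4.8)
The statement claims that $\SAT_{\dims,\Twig}$ is NP-complete. To prove membership in NP, I would exhibit a polynomial-size witness and a polynomial-time verification procedure. To prove NP-hardness, I would reduce from a known NP-complete problem, most naturally the corresponding result for DTDs, namely that $\SAT_{\DTD,\Twig}$ is NP-complete (cited as~\cite{BeFaGe08}), or directly from a canonical NP-complete problem such as $\satt$ or 3SAT.

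**Membership in NP.** First I would argue that if a pair $(S,q)$ is satisfiable, then there is a witnessing tree $t\in L(S)$ with $t\models q$ whose size is polynomially bounded. The key observation is that a twig query $q$ has a fixed number of nodes, so any embedding touches only polynomially many tree nodes; the schema constraints along the embedded path and at its branching points can be satisfied by a tree whose relevant portion is of size polynomial in $|S|$ and $|q|$. The subtle point is that DIMEs use interval multiplicities with bounds that may be written with large (or infinite) values, so I must be careful that satisfying, say, a lower bound like $a^{[100,\infty]}$ does not force an exponential number of children in unary-encoded trees. The right move here is to argue that the witness can be described succinctly (e.g.\ by recording multiplicities rather than listing every node), or to bound the minimal satisfying tree by choosing, for each node, the \emph{smallest} number of children consistent with the characterizing tuple $\Delta_E$ of its rule. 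The nondeterministic algorithm then guesses the shape of this succinct witness together with the embedding $\lambda$ and verifies in polynomial time both $t\models S$ (using the membership test from Theorem~\ref{thdimeptime}, which runs in PTIME at each node) and $t\models q$ (checking the four embedding conditions). The hardest part of the NP upper bound is establishing the polynomial bound on the succinct witness in the presence of possibly large interval bounds and the interaction between the required symbols $P_E$, the conflicting pairs $C_E$, and the counting dependencies $K_E$.

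**NP-hardness.** For the lower bound, the cleanest route is to adapt the reduction that establishes NP-hardness of $\SAT_{\DTD,\Twig}$. Since DIMSs can express the over-permissive content models used in typical hardness reductions for DTDs (in particular, a DIME of the form $(a_1\mid\ldots\mid a_k)^*$ or with option operators can encode the disjunctive choices needed to model variable assignments), I would take the instance built for DTDs and replace each regular expression by an equivalent or suitably chosen DIME. The branching structure of the twig query encodes clause satisfaction, while the schema's disjunctions encode the choice of truth values; satisfiability of the query by some tree in $L(S)$ then corresponds to existence of a satisfying assignment. Care is needed because DIMEs forbid symbol repetition and restrict the nesting of disjunction and interval multiplicities, so the DTD-based reduction must be checked to use only content models that fall inside the DIME grammar defined in Section~\ref{subsec:definition}; I expect that the disjunctive clauses $(A_1\mid\ldots\mid A_k)$ available in DIMEs suffice to carry the encoding.

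**Main obstacle.** I anticipate the principal difficulty to be on the NP-membership side: showing that a satisfiable instance admits a witness tree of polynomial size despite interval multiplicities with large lower bounds and the constraints imposed simultaneously by $C_E$, $N_E$, $P_E$, and $K_E$. The counting dependencies $K_E$ are especially delicate, since a dependency $(a,b)\in K_E$ forces $w(a)\geq w(b)$ and chains of such dependencies could in principle propagate multiplicity requirements upward; I would handle this by exploiting the compact representation $\hat\Delta_E$ (Lemma~\ref{lemma:polynomial:compact}) to argue that a minimal satisfying child-multiset at each node can be computed and represented succinctly, so that the overall witness, even if it nominally has many nodes, is describable and verifiable in polynomial time.
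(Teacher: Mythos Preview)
Your NP-hardness argument is essentially the paper's: both adapt the reduction from~\cite{BeFaGe08} (the paper spells it out as a reduction from CNF-SAT, with the root rule $r\rightarrow(t_1\mid f_1)\shuffle\ldots\shuffle(t_m\mid f_m)$, rules $t_j,f_j\rightarrow c_{j_1}\shuffle\ldots\shuffle c_{j_k}$ encoding which clauses each literal satisfies, and the query $r[\dblslash c_1]\ldots[\dblslash c_n]$). Your concern about whether the content models fit the DIME grammar is the right one, and indeed they do.

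For NP-membership, however, you take a genuinely different route from the paper, and a much harder one. The paper does \emph{not} build a polynomial-size witness tree directly. Instead it reduces $\SAT_{\dims,\Twig}$ to $\SAT_{\DTD,\Twig}$: given a DIMS $S$, form a DTD $D$ by replacing every unordered concatenation $\shuffle$ by ordered concatenation $\cdot$, and observe that an unordered tree satisfying $S$ and $q$ exists iff some ordered tree satisfies $D$ and $q$ (the forward direction uses that any ordering of a tree in $L(S)$ witnesses $D$; the backward direction simply forgets the order). Since twig queries ignore sibling order, this equivalence is immediate, and the NP upper bound is inherited from~\cite{BeFaGe08} for free.

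What this buys the paper is precisely avoidance of the obstacle you flag as principal: you never have to reason about large interval lower bounds, counting dependencies $K_E$, or succinct tree representations, because all of that is absorbed into the known DTD result. Your direct approach is not wrong in spirit---a succinct witness with multiplicities recorded rather than nodes listed could be made to work---but you have not actually closed the gap (you say ``I would handle this by\ldots'' without carrying it out), and the machinery needed is disproportionate to the task. The reduction is a one-paragraph argument; the direct bound would be a technical lemma in its own right.
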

\begin{proof}
Proposition 4.2.1 from~\cite{BeFaGe08} implies that satisfiability of twig queries in the presence of DTDs is NP-hard.
We adapt the proof and we obtain the following reduction from SAT to $\SAT_{\dims, \Twig}$: we take a CNF formula $\varphi=\bigwedge_{i=1}^nC_i$ over the variables $x_1,\ldots,x_m$, where each $C_i$ is a disjunction of literals. 
We take $\Sigma=\{r, t_1,f_1,\ldots,t_m,f_m,c_1,\ldots,c_n\}$ and we construct:
\begin{itemize}
\item The DIMS $S$ having the root label $r$ and the rules:
\begin{itemize}
\item $r \rightarrow (t_1 \mid f_1) \shuffle \dots\shuffle(t_m\mid f_m)$,
\item $t_j \rightarrow c_{j_1}\shuffle\ldots\shuffle c_{j_k}$, where $c_{j_1},\ldots,c_{j_k}$ correspond to the clauses using $x_j$ (for $1\leq j\leq m$),
\item $f_i  \rightarrow c_{j_1}\shuffle\ldots\shuffle c_{j_k}$, where $c_{j_1},\ldots,c_{j_k}$ correspond to the clauses using $\neg x_j$ (for $1\leq j\leq m$).
\end{itemize}
\item The twig query $q = r[\dblslash c_1]\dots[\dblslash c_n]$.
\end{itemize}
For example, for the formula $\varphi_0=(x_1\vee \neg x_2 \vee x_3)\wedge(\neg x_1\vee x_3\vee \neg x_4)$ we obtain the DIMS $S$ containing the rules:
\begin{gather*}
r \rightarrow (t_1\mid f_1) \shuffle (t_2\mid f_2) \shuffle (t_3\mid f_3) \shuffle (t_4\mid f_4), \\
t_1 \rightarrow c_1, \qquad f_1\rightarrow c_2, \qquad t_2\rightarrow \epsilon, \qquad f_2\rightarrow c_1,\\
t_3\rightarrow c_1\shuffle c_2, \qquad f_3 \rightarrow \epsilon, \qquad t_4 \rightarrow \epsilon, \qquad f_4 \rightarrow c_2.
\end{gather*}
and the query $q = /r[\dblslash c_1][\dblslash c_2]$.
The formula $\varphi$ is satisfiable iff $(S, q)\in \SAT_{\dims, \Twig}$.
The described reduction works in polynomial time in the size of the input formula.

\rnew{
For the NP upper bound, we reduce $\SAT_{\dims, \Twig}$ to $\SAT_{\mathit{DTD}, \Twig}$ (i.e., the problem of satisfiability of twig queries in the presence of DTDs), known to be in NP (Theorem 4.4 from~\cite{BeFaGe08}).
Given a DIMS $S$, we construct a DTD $D$ having the same root label as $S$ and whose rules are obtained from the rules of $S$ by replacing the unordered concatenation with standard (ordered) concatenation.
Then, take a twig query $q$.
We claim that there exists an (unordered) tree satisfying $q$ and $S$ iff there exists an (ordered) tree satisfying $q$ and $D$.
For the \emph{if} part, take an ordered tree $t$ satisfying $q$ and $D$, remove the order to obtain an unordered tree $t'$, and observe that $t'$ satisfies $S$.
For the \emph{only if} part, take an unordered tree $t$ satisfying $q$ and $S$.
From the construction of $D$, we infer that there exists an ordered tree $t'$ (obtained via some ordering of the sibling nodes of $t$) satisfying both $q$ and $D$.
We recall that the twig queries disregard the relative order among the siblings.
}
\qed\end{proof}
The complexity results for query implication and query containment in the presence of DIMSs follow from the EXPTIME-completeness proof from \cite{NeSc06} for twig query containment in the presence of DTDs.

\begin{proposition}\label{propexptimec}
$\IMPL_{\dims,\Twig}$ and $\CNT_{\dims,\Twig}$ are EXPTIME-complete.
\end{proposition}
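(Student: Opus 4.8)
The goal is to establish that $\IMPL_{\dims,\Twig}$ and $\CNT_{\dims,\Twig}$ are EXPTIME-complete, and the natural strategy is to sandwich the complexity between a matching lower and upper bound, both obtained by transfer to the corresponding results for DTDs proved in~\cite{NeSc06}. The plan is to handle containment first and then derive implication, since implication is essentially the special case of containment where one of the two queries is trivial (e.g.\ $\CNT$ with $p$ being a single-node query matching the root forces $\IMPL$). So I would first argue the two reductions (to and from DTDs) for $\CNT_{\dims,\Twig}$, and then remark that the same machinery yields the bounds for $\IMPL_{\dims,\Twig}$.

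For the \textbf{lower bound} (EXPTIME-hardness), I would reuse the construction from~\cite{NeSc06} that shows EXPTIME-hardness of twig query containment in the presence of DTDs. The key observation is that the DTD produced in that hardness proof can be taken (or easily massaged) to be a disjunction-free, order-insensitive DTD whose content models are captured by DIMEs; concretely, any DTD rule of the form $a\to w$ for a fixed \emph{word} $w$ (as typically arises in such reductions) has the same commutative closure as the DIME obtained by replacing ordered concatenation with $\shuffle$. The crucial point to verify is that twig queries ignore sibling order, so that a tree satisfies the query relative to the ordered DTD iff the corresponding unordered tree satisfies the query relative to the DIMS. This is exactly the order-irrelevance argument already used in the proof of Proposition~\ref{propsatqnphard} for query satisfiability, and I would invoke it verbatim.

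For the \textbf{upper bound} (membership in EXPTIME), I would run the reduction in the opposite direction: given a DIMS $S$ and twig queries $p,q$, build a DTD $D$ by replacing each $\shuffle$ in the rules of $S$ by ordered concatenation (and expanding each DIME into an ordinary regular expression, which is possible since DIMEs are a subclass of UREs and UREs correspond to commutative closures of regular expressions). Because twig queries disregard sibling order, for every tree we have the equivalence ``$t\models S$ and $t\models p$ (resp.\ $q$)'' on the unordered side matching ``some ordering of $t$ satisfies $D$ and the query'' on the ordered side. Hence $\CNT_{\dims,\Twig}(p,q,S)$ holds iff the corresponding DTD-containment instance holds, and the latter is decidable in EXPTIME by~\cite{NeSc06}. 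The only subtlety to check is that the DTD $D$ is computable in time that does not blow up the complexity class: a DIME translates to a regular expression whose size may be exponential in $|\Sigma|$ (as noted in the introduction), but since the target bound is already EXPTIME this blow-up is harmless; alternatively one keeps the regular expressions succinct and relies on the fact that~\cite{NeSc06} tolerates such encodings.

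The \textbf{main obstacle} I anticipate is making the two reductions genuinely faithful with respect to sibling order. On the hardness side, I must ensure the DTD from~\cite{NeSc06} really is realizable as a DIMS — i.e.\ that its content-model regular expressions have DIME-expressible commutative closures, or else adapt the reduction so they do — and that collapsing order does not create spurious satisfying trees that break the query (non)containment. On the upper-bound side, the delicate step is the bidirectional tree-ordering argument: given an unordered witness tree for the DIMS side I must exhibit an ordering satisfying $D$ while preserving embeddings of both $p$ and $q$, and conversely strip the order without disturbing those embeddings. Both directions hinge on the descendant/child semantics of twig embeddings being invariant under reordering of siblings, which holds because embeddings constrain only ancestor-descendant and parent-child relationships, never horizontal order; I would state this invariance explicitly as the linchpin lemma and then the rest is routine.
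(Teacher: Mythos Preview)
Your proposal is correct and follows essentially the same route as the paper: both bounds are obtained by syntactically swapping $\shuffle$ and ordered concatenation between a DIMS and a DTD, then invoking the order-obliviousness of twig embeddings together with the EXPTIME-completeness result of~\cite{NeSc06} (the paper additionally notes that the union-to-single-query trick of~\cite{MiSu04} carries over to DIMS, a detail implicit in your ``reuse the construction'' step). One minor point: your worry about an exponential blow-up when building $D$ is unfounded---the replacement of $\shuffle$ by $\cdot$ is purely syntactic and yields a DTD of the same size as $S$; what is needed is only that $L(S)$ is the commutative closure of $L(D)$, not that the two languages coincide, and this is precisely what the ordering/unordering argument exploits.
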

\begin{proof}
The EXPTIME-hardness proof of twig containment in the presence of DTDs (Theorem 4.5 from~\cite{NeSc06}) has been done using a reduction from the \emph{Two-player corridor tiling} problem and a technique introduced in \cite{MiSu04}.
In the proof from \cite{NeSc06}, when testing the containment $p\subseteq_Sq$, $p$ is chosen such that it satisfies every tree in $S$, hence $\IMPL_{\DTD,\Twig}$ is also EXPTIME-complete.
Furthermore, Lemma 3 in \cite{MiSu04} can be adapted to twig queries and DIMS: for every $S\in\dims$ and twig queries $q_0,q_1,\ldots,q_m$ there exists $S'\in\dims$ and twig queries $q$ and $q'$ such that $q_0 \subseteq_S q_1\cup\ldots\cup q_m$  iff $q \subseteq_{S'} q'$.
Moreover, the DTD in \cite{NeSc06} can be \rnew{captured with a DIMS constructible in polynomial time: take the same reduction as in \cite{NeSc06} and then replace the standard concatenation with unordered concatenation.
Hence, we infer that $\CNT_{\dims,\Twig}$ and $\IMPL_{\dims,\Twig}$ are also EXPTIME-hard. 

For the EXPTIME upper bound, we reduce $\CNT_{\dims,\Twig}$ to $\CNT_{\mathit{DTD},\Twig}$ (i.e., the problem of twig query containment in the presence of DTDs), known to be in EXPTIME (Theorem 4.4 from~\cite{NeSc06}).
Given a DIMS $S$, we construct a DTD $D$ having the same root label as $S$ and whose rules are obtained from the rules of $S$ by replacing the unordered concatenation with standard (ordered) concatenation.
Then, take two twig queries $p$ and $q$.
We claim that $p\subseteq_S q$ iff $p\subseteq_D q$ and show the two parts by contraposition.
For the \emph{if} part, assume $p\not\subseteq_S q$, hence there exists an unordered tree $t$ that satisfies $q$ and $S$, but not $p$.
From the construction of $D$, we infer that there exists an ordered tree $t'$ (obtained via some ordering of the sibling nodes of $t$) that satisfies $q$ and $D$, but not $p$.
For the \emph{only if} part, assume $p\not\subseteq_D q$, hence there exists an ordered tree $t$ that satisfies $q$ and $D$, but not $p$.
By removing the order of $t$, we obtain an unordered tree $t'$ that satisfies $q$ and $S$, but not $p$.
We recall that the twig queries disregard the relative order among the siblings.
The membership of $\CNT_{\dims,\Twig}$ to EXPTIME yields that $\IMPL_{\dims,\Twig}$ is also in EXPTIME (it suffices to take as $p$ the universal query).
}
\qed\end{proof}

\section{Complexity of disjunction-free interval multiplicity schemas (IMSs)}\label{ms}
Although \emph{query satisfiability} and \emph{query implication} in the presence of schema are intractable for DIMSs, we prove that they become tractable for IMSs (Section~\ref{subsec:complex}).
We also show a considerably lower complexity for \emph{query containment} in the presence of schema: coNP-completeness for IMSs instead of EXPTIME-completeness for DIMSs (Section~\ref{subsec:complex}).
Additionally, we point out that our results for IMSs allow also to characterize the complexity of query implication and
query containment in the presence of \emph{disjunction-free DTDs} \rnew{(i.e., restricted DTDs using regular expressions without disjunction operator)}, which, to the best of our knowledge, have not been previously studied (Section~\ref{subsec:dtd}).
To prove our results, we develop a set of tools that we present next: \emph{dependency graphs} (Section~\ref{subsec:dep}), \emph{generalized definition of embedding} (Section~\ref{subsec:embed}), \emph{family of characteristic graphs} (Section~\ref{subsec:family}).

\subsection{Dependency graphs}\label{subsec:dep}
Recall that IMSs use IMEs, which are essentially expressions of the form $A_1^{I_1}\shuffle\ldots\shuffle A_k^{I_k}$, where $A_1,\ldots, A_k$ are atoms, and $I_1,\ldots,I_k$ are intervals.
Given an IME $E$, let $\simbu(E)$ be the set of symbols present in all unordered words in $L(E)$, and $\simbe(E)$ the set of symbols present in at least one unordered word in $L(E)$:
\begin{gather*}
\simbu(E) = \{a\in\Sigma\mid \forall w\in L(E).\ a\in w\},\\
\simbe(E) = \{a\in\Sigma\mid \exists w\in L(E).\ a\in w\}.
\end{gather*}
Given an IME $E$, notice that $\simbu(E)\subseteq\simbe(E)$, and moreover, the sets $\simbu(E)$ and $\simbe(E)$ can be easily constructed from $E$. 
For example, given $E_0 = (a\shuffle b^?)^{[5,6]}\shuffle c^+$, we have $\simbu(E_0) = \{a,c\}$ and $\simbe(E_0) = \{a,b,c\}$.

\begin{definition}
\rnew{Given an IMS $S=(\root_S, R_S)$, the \emph{existential dependency graph} of $S$ is the directed rooted graph $G_S^\exists = (\Sigma,\root_S, E_S^\exists)$ with the node set $\Sigma$, the distinguished root node $\root_S$, and the set of edges $E_S^\exists$ such that $(a,b)\in E_S^\exists$ if $b\in\simbe(R_S(a))$.
Furthermore, the \emph{universal dependency graph} of $S$ is the directed rooted graph $G_{S}^{\forall} = (\Sigma,\root_S, E_S^{\forall})$ such that $(a,b)\in E_S^{\forall}$ if $b\in\simbu(R_S(a))$.}
\end{definition}
\begin{example}\label{ex:graphs}\normalfont
Take the IMS $S$ containing the rules:
\begin{gather*}
r \rightarrow (a^?\shuffle b)^{[1,10]}\shuffle c, \qquad
a \rightarrow d^?, \qquad
b \rightarrow a^{[2,3]}\shuffle c^*\shuffle d^+.
\end{gather*}
In Figure~\ref{figgraphs} we present the existential dependency graph of $S$ and the universal dependency graph of $S$.
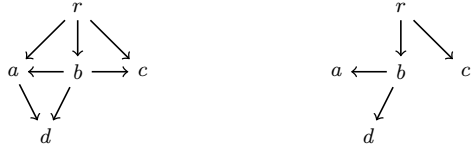
\begin{figure}[htb]
  \centering
  \begin{tikzpicture}[yscale=0.85, xscale=0.85]
    \begin{scope}
      \node at (0,0) (m99) {$r$};
      \node at (-1,-1) (m0) {$\mathit{a}$};
      \node at (0,-1) (m10) {$\mathit{b}$};
      \node at (1,-1) (m20) {$\mathit{c}$};
      \node at (-0.5,-2) (m3) {$\mathit{d}$};
      \draw[->,semithick] (m99) -- (m0);			
      \draw[->,semithick] (m99) -- (m10);
      \draw[->,semithick] (m99) -- (m20);
      \draw[->,semithick] (m10) -- (m0);
      \draw[->,semithick] (m10) -- (m20);
      \draw[->,semithick] (m10) -- (m3);
      \draw[->,semithick] (m0) -- (m3);
    \end{scope}
    \begin{scope}[xshift=5cm]
      \node at (0,0) (m99) {$r$};
      \node at (-1,-1) (m0) {$\mathit{a}$};
      \node at (0,-1) (m10) {$\mathit{b}$};
      \node at (1,-1) (m20) {$\mathit{c}$};
      \node at (-0.5,-2) (m3) {$\mathit{d}$};
      \draw[->,semithick] (m99) -- (m10);			
      \draw[->,semithick] (m99) -- (m20);
      \draw[->,semithick] (m10) -- (m0);
      \draw[->,semithick] (m10) -- (m3);
    \end{scope}
  \end{tikzpicture}
  \caption{\label{figgraphs}Existential dependency graph $G_{S}^\exists$ and universal dependency graph $G_{S}^{\forall}$ for Example~\ref{ex:graphs}.}
\end{figure}
\qed\end{example}
\noindent
Given an IMS $S$ and a symbol $a$, we say that $a$ is \emph{\rnew{reachable}} (or \rnew{\em useful}) in $S$ if there exists a tree in $L(S)$ which has a node labeled by $a$.
Moreover, we say that an IMS is \emph{\rnew{trimmed}} if it contains rules only for the reachable symbols.
For every satisfiable IMS $S$, there exists an equivalent trimmed version which can be obtained by removing the rules for the symbols involved in \rnew{\em unreachable components} in $G_S^{\forall}$ (\rnew{in the spirit of~\cite{AlGiWo01}}).
Notice that the unreachable components of $G_S^{\forall}$ correspond in fact to cycles in $G_S^{\forall}$.
In the sequel, we assume w.l.o.g.\ that all IMSs that we manipulate are satisfiable and trimmed.

\subsection{Generalizing the embedding}\label{subsec:embed}
We generalize the notion of embedding previously defined in Section~\ref{sec:prelim}.
Note that in the rest of the section we use the term \rnew{\emph{dependency graphs}} when we refer to both existential and universal dependency graphs.
First, an \emph{embedding} of a query $q$ in a dependency graph $G=(\Sigma, \root,E)$ is a function $\lambda : N_q \rightarrow \Sigma$ such that:
\begin{enumerate}
\itemsep0pt
\item[$1$.] $\lambda(\root_q)=\root$,
\item[$2$.] for every $(n,n')\in \child_q$, $(\lambda(n),\lambda(n'))\in E$,
\item[$3$.] for every $(n,n')\in \desc_q$, $(\lambda(n),\lambda(n'))\in E^+$ (the transitive closure of $E$),
\item[$4$.] for every $n\in N_q$, $\lab_q(n) = \wc$ or $\lab_q(n) = \lambda(n)$.
\end{enumerate}
If there exists an embedding of $q$ in $G$, we write $G\preccurlyeq q$.
Next, a \emph{simulation} of a dependency graph $G=(\Sigma,\root,E)$ in a tree $t$ is a relation $R\subseteq \Sigma\times N_t$ such that: 
\begin{enumerate}
\item $(\root, \root_t)\in R$,
\item for every $(a, n)\in R,\ (a, a')\in E$, there exists $n'\in N_t$ such that $(n,n')\in\child_t$ and $(a',n')\in R$,
\item for every $(a, n)\in R.\ \lab_t(n) = a$.
\end{enumerate}
\noindent
Note that $R$ is a total relation for the nodes of the graph reachable from the root i.e., for every $a\in \Sigma$ reachable from $\root$ in $G$, there exists a node $n\in N_t$ such that $(a,n)\in R$. 
If there exists a simulation from $G$ to $t$, we write $t\preccurlyeq G$.
\rnew{
Additionally, note that given a graph containing cycles reachable from the root, there does not exist any (finite) tree where it can be simulated.
However, we point out that in the remainder we use the notion of simulation only for universal dependency graphs that are supposed to come from trimmed IMSs, hence they do not have such cycles.}

Given two dependency graphs $G_1=(\Sigma,\root,E_1)$ and $G_2=(\Sigma,\root,E_2)$, $G_1$ is a \emph{subgraph} of $G_2$ if $E_1\subseteq E_2$.
For a dependency graph $G=(\Sigma,\root,E)$, we define the partial order $\leq_G$ on the subgraphs of $G$: given $G_1$ and $G_2$ two subgraphs of $G$, $G_1\leq_G G_2$ if $G_1$ is a subgraph of $G_2$.
Note that the relation $\leq_G$ is reflexive, antisymmetric, and transitive, thus being an ordering relation.
Moreover, it is well-founded and it has a minimal element $G_0 = (\Sigma,\root, \emptyset)$.
The following result can be easily shown by a structural induction using the order $\leq_G$.
\begin{lemma}\label{L1}
For every IMS $S$, its universal dependency graph can be simulated in every tree $t$ which belongs to the language of $S$.
\end{lemma}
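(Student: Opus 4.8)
The plan is to exhibit an explicit simulation relation $R\subseteq\Sigma\times N_t$ obtained by following, from the root, exactly the edges of $G_S^\forall$, and then to check the three defining conditions of a simulation. Everything hinges on the observation that a universal edge records a \emph{mandatory} child: by definition $(a,b)\in E_S^\forall$ means $b\in\simbu(R_S(a))$, i.e.\ every unordered word in $L(R_S(a))$ contains $b$. Consequently, whenever a node $n$ of $t$ satisfies $\lab_t(n)=a$, the hypothesis $t\models S$ gives $\ch_t^n\in L(R_S(a))$, so $\ch_t^n(b)\geq 1$ and $n$ has at least one child labelled $b$. This single translation, from $\simbu$ to ``a guaranteed child,'' is what makes the edges of $G_S^\forall$ realizable inside any $t\in L(S)$.

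First I would define $R$ as the least relation containing $(\root_S,\root_t)$ and closed under the rule: whenever $(a,n)\in R$ and $(a,b)\in E_S^\forall$, pick one child $n'$ of $n$ with $\lab_t(n')=b$ (which exists by the remark above) and add $(b,n')$ to $R$. Condition~$1$ holds since $(\root_S,\root_t)\in R$ and $\lab_t(\root_t)=\root_S$ because $t\models S$. Condition~$3$, namely $\lab_t(n)=a$ for every $(a,n)\in R$, follows by a trivial induction on the construction, since each pair is inserted carrying its matching label. Condition~$2$ is exactly the closure rule, and the totality remark (every symbol reachable from $\root_S$ in $G_S^\forall$ is related to some node) is immediate from the same closure. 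The cleanest way to organize this is a well-founded induction: I would prove the slightly more general statement that, for \emph{any} tree whose every node conforms locally to $R_S$ and whose root carries a label $a$, the subgraph of $G_S^\forall$ reachable from $a$ simulates into that tree starting at $(a,\root)$. One then descends into the subtree rooted at each chosen child $n'$ (which still conforms locally and has root label $b$), applies the induction hypothesis there, and assembles the partial simulations together with $(a,n)$.

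The only delicate point is the well-foundedness of this recursion, and this is where the order $\leq_G$ (equivalently, induction on subtrees of $t$) is used. Getting \emph{stuck} is impossible thanks to the mandatory-child property; what must be ruled out is an infinite chain of mandatory descendants. Here I would invoke the standing assumption that $S$ is trimmed, so that $G_S^\forall$ has no cycle reachable from $\root_S$, exactly as noted in the excerpt, which guarantees that following universal edges from the root eventually reaches symbols with no outgoing universal edges (the leaves of the construction correspond to nodes $n$ whose conforming children word may be $\varepsilon$, forcing $\simbu(R_S(\lab_t(n)))=\emptyset$). Thus the base case of the induction is a node with no mandatory children, where the simulation is just the singleton $\{(a,\root)\}$ and conditions hold vacuously. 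I expect this termination bookkeeping to be the only real obstacle; once the acyclicity of the reachable part of $G_S^\forall$ is in hand, correctness of the three conditions is routine, which matches the assertion that the result is ``easily shown,'' yielding $t\preccurlyeq G_S^\forall$ for every $t\in L(S)$.
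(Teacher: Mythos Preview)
Your proposal is correct and matches the paper's approach; the paper itself gives no detailed proof, only the one-line remark that the lemma ``can be easily shown by a structural induction using the order $\leq_G$,'' and what you have written is a faithful fleshing-out of exactly that sketch, with the key observation (universal edges encode mandatory children) made explicit. One minor simplification: you do not strictly need the trimmed/acyclicity assumption on $G_S^\forall$ to argue termination of your closure construction, since every step descends to a strict child in the finite tree $t$; acyclicity of $G_S^\forall$ follows \emph{a posteriori} from the existence of any $t\in L(S)$, but either route is fine.
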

\noindent
A \rnew{\emph{path in a dependency graph}} $G =(\Sigma, \root, E)$ is a non-empty sequence of vertices starting at $\root$ such that for every two consecutive vertices in the sequence, there is a directed edge between them in $G$. 
By $\Paths(G)\subseteq\Sigma^+$ we denote the set of all paths in $G$. 
The set of paths is finite only for graphs without cycles reachable from the root. 
For instance, the paths of the graph $G_1$ in Figure~\ref{fig:unfolding} are $\Paths(G_1) = \{r, ra, rb, rc, rbd, rcd, rbde, rcde\}$. 

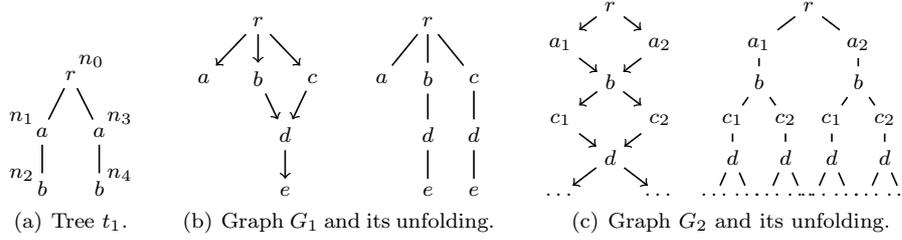
\begin{figure}[htb]
	\centering
	\subfigure[Tree $t_1$.]{\label{fig:LabPaths}
  \begin{tikzpicture}[yscale=0.75,xscale=0.75]
    \begin{scope}
      \node at (0,1) (m99) {$r$};
      \node at (0.5,0) (m0) {$a$};
      \node at (-0.5,0) (m1) {$a$};
      \node at (0.5,-1) (m2) {$b$};
      \node at (-0.5,-1) (m23) {$b$};
      \draw[-,semithick] (m99) -- (m0);	
      \draw[-,semithick] (m99) -- (m1);		
      \draw[-,semithick] (m0) -- (m2);
      \draw[-,semithick] (m1) -- (m23);
      \path (m99) node[above right] {$n_0$};
      \path (m1) node[above left] {$n_1$};
      \path (m23) node[above left] {$n_2$};
      \path (m2) node[above right] {$n_4$};
      \path (m0) node[above right] {$n_3$};
    \end{scope}
  \end{tikzpicture}
	}~~~~~~
\subfigure[Graph $G_1$ and its unfolding.]{\label{fig:unfolding}
  \begin{tikzpicture}[yscale=0.75,xscale=0.95]
    \node at (0,0) (n0) {$r$};
    \node at (-0.75,-1) (n1) {$a$};
    \node at (0,-1) (n2) {$b$};
    \node at (0.75,-1) (n3) {$c$};
    \node at (0.375,-2) (n4) {$d$};
    \node at (0.375,-3) (n5) {$e$};
    \draw[->,semithick] (n0) -- (n1);
    \draw[->,semithick] (n0) -- (n2);
    \draw[->,semithick] (n0) -- (n3);
    \draw[->,semithick] (n2) -- (n4);
    \draw[->,semithick] (n3) -- (n4);
    \draw[->,semithick] (n4) -- (n5);
    \begin{scope}[xshift=2.35cm,xscale=0.85]
    \node at (0,0) (n0) {$r$};
    \node at (-0.75,-1) (n1) {$a$};
    \node at (0,-1) (n2) {$b$};
    \node at (0.75,-1) (n3) {$c$};
    \node at (0,-2) (n4) {$d$};
    \node at (0.75,-2) (n5) {$d$};
    \node at (0,-3) (n6) {$e$};
    \node at (0.75,-3) (n7) {$e$};
    \draw[-,semithick] (n0) -- (n1);
    \draw[-,semithick] (n0) -- (n2);
    \draw[-,semithick] (n0) -- (n3);
    \draw[-,semithick] (n2) -- (n4);
    \draw[-,semithick] (n3) -- (n5);
    \draw[-,semithick] (n4) -- (n6);
    \draw[-,semithick] (n5) -- (n7);
    \end{scope}
  \end{tikzpicture}
}~~~~~~
\subfigure[Graph $G_2$ and its unfolding.]{  \label{fig:exponential unfolding}
  \begin{tikzpicture}[yscale=0.5,xscale=0.87]
    \node at (0,0) (n0) {$r$};
    \node at (-0.75,-1) (n1) {$a_1$};
    \node at (0.75,-1) (n2) {$a_2$};
    \node at (0,-2) (n3) {$b$};
    \node at (-0.75,-3) (n4) {$c_1$};
    \node at (0.75,-3) (n5) {$c_2$};
    \node at (0,-4) (n6) {$d$};
    \node at (-0.75,-5) (n7) {$\dots$};
    \node at (0.75,-5) (n8) {$\dots$};
    \draw[->,semithick] (n0) -- (n1);
    \draw[->,semithick] (n0) -- (n2);
    \draw[->,semithick] (n1) -- (n3);
    \draw[->,semithick] (n2) -- (n3);
    \draw[->,semithick] (n3) -- (n4);
    \draw[->,semithick] (n3) -- (n5);
    \draw[->,semithick] (n4) -- (n6);
    \draw[->,semithick] (n5) -- (n6);
    \draw[->,semithick] (n6) -- (n7);
    \draw[->,semithick] (n6) -- (n8);
    \begin{scope}[xshift=3cm]
    \node at (0,0) (n0) {$r$};
    \node at (-0.75,-1) (n1) {$a_1$};
    \node at (0.75,-1) (n2) {$a_2$};
    \node at (-0.75,-2) (n3) {$b$};
    \node at (0.75,-2) (n30) {$b$};
    \node at (-1.15,-3) (n4) {$c_1$};
    \node at (-0.35,-3) (n5) {$c_2$};
    \node at (1.15,-3) (n40) {$c_2$};
    \node at (0.35,-3) (n50) {$c_1$};

    \node at (-1.15,-4) (n6) {$d$};
    \node at (-0.35,-4) (n60) {$d$};
    \node at (0.35,-4) (n61) {$d$};
    \node at (1.15,-4) (n62) {$d$};

    \node at (-1.4,-5) (n7) {$\dots$};    
   \node at (-0.9,-5) (n70) {$\dots$};

    \node at (-0.6,-5) (n71) {$\dots$};
    \node at (-0.1,-5) (n72) {$\dots$};

    \node at (0.1,-5) (n73) {$\dots$};
    \node at (0.6,-5) (n74) {$\dots$};

    \node at (0.9,-5) (n75) {$\dots$};
    \node at (1.4,-5) (n76) {$\dots$};
    \draw[-,semithick] (n0) -- (n1);
    \draw[-,semithick] (n0) -- (n2);
    \draw[-,semithick] (n1) -- (n3);
    \draw[-,semithick] (n2) -- (n30);
    \draw[-,semithick] (n3) -- (n4);
    \draw[-,semithick] (n3) -- (n5);
    \draw[-,semithick] (n30) -- (n40);
    \draw[-,semithick] (n30) -- (n50);
    \draw[-,semithick] (n60) -- (n5);
    \draw[-,semithick] (n6) -- (n4);
    \draw[-,semithick] (n61) -- (n50);
    \draw[-,semithick] (n62) -- (n40);
    \draw[-,semithick] (n6) -- (n7);
    \draw[-,semithick] (n6) -- (n70);
    \draw[-,semithick] (n60) -- (n71);
    \draw[-,semithick] (n60) -- (n72);
    \draw[-,semithick] (n61) -- (n73);
    \draw[-,semithick] (n61) -- (n74);
    \draw[-,semithick] (n62) -- (n75);
    \draw[-,semithick] (n62) -- (n76);
    \end{scope}
  \end{tikzpicture}
}
\caption{A tree and two graphs with their corresponding unfoldings.}
\end{figure}
\noindent
Similarly, a \emph{path in a tree} $t$ is a non-empty sequence of nodes starting at $\root_t$ such that every two consecutive nodes in the sequence are in the relation $child_t$. 
By $\Paths(t)\subseteq N_t^+$ we denote the set of all paths in $t$. 
Then, we define $\LabPaths(t)\subseteq \Sigma^+$ as the set of sequences of labels of nodes from all paths in $t$. 
For instance, for the tree $t_1$ from Figure~\ref{fig:LabPaths} we have $\Paths(t_1)=\{n_0,n_0n_1,n_0n_1n_2,n_0n_3,n_0n_3n_4\}$ and
$\LabPaths(t_1)=\{r$, $ra,rab\}$. Note that $|\LabPaths(t)|\leq|\Paths(t)|$.
The \rnew{\emph{unfolding} of a dependency graph} $G=(\Sigma,\root,E)$, denoted $u_G$, is a tree $u_G = (N_{u_G}, \root_{u_G}, \lab_{u_G}, \child_{u_G})$ such that:
\begin{itemize}
\item $N_{u_G} = \Paths(G)$,
\item $\root_{u_G}\in N_{u_G}$ is the root of $u_G$,
\item $(p, p\cdot a)\in\child_{u_G}$, for every path $p, p\cdot a\in \Paths(G)$ (note that ``$\cdot$'' stands for standard ordered concatenation),
\item $\lab_{u_G}(\root_{u_G}) = \root$, and $\lab_{u_G}(p\cdot a) = a$, for every path $p\cdot a\in\Paths(G)$. 
\end{itemize}
\noindent
The unfolding of a graph is finite only when the graph has no cycle reachable from the root, because otherwise $\Paths(G)$ is infinite, hence $u_G$ is infinite.
\rnew{In the remainder, we use the unfolding only for graphs having no cycle reachable from the root (in order to have finite unfoldings).}
In such a case, the unfolding can be seen as the \emph{smallest} tree ${u_G}$ (w.r.t.\ the number of nodes) having $\LabPaths({u_G}) = \Paths(G)$.
The idea of the unfolding is to transform the dependency graph $G$ into a tree having the $\child$ relation instead of directed edges. 
There are nodes duplicated in order to avoid nodes with more than one incoming edge.
For instance, in Figure~\ref{fig:unfolding} we take the graph $G_1$ and construct its unfolding $u_{G_1}$. 
Moreover, notice that the size of the unfolding may be exponential in the size of the graph, for example for the graph $G_2$ from Figure \ref{fig:exponential unfolding}.

We also extend the definition of embedding and propose the embedding from a tree to another tree i.e., given two trees $t$ and $t'$, we say that $t'$ can be embedded in $t$ (denoted $t\preccurlyeq t'$) if the query $(N_{t'},\root_{t'},\lab_{t'},\child_{t'},\emptyset)$ can be embedded in $t$.
Similarly, we can define the embedding from a tree to a dependency graph.
Note that two embeddings can be \emph{composed}, for example:
\begin{itemize}
\item $\forall t,t'\in\Tree.\ \forall q\in\Twig.\ (t\preccurlyeq t'\wedge t'\preccurlyeq q\Rightarrow t\preccurlyeq q)$,
\item $\forall S\in\sims.\ \forall t\in\Tree.\ \forall q\in\Twig.\ (G_S^{\forall/\exists}\preccurlyeq t\wedge t\preccurlyeq q\Rightarrow G_S^{\forall/\exists}\preccurlyeq q)$.
\end{itemize}
We state next two auxiliary lemmas that can be easily proven by structural induction on the dependency graphs (using the order $\leq_G$):
\noindent
\begin{lemma}\label{L2}
A \rnew{dependency graph} $G$ can be simulated in a tree $t$ iff its unfolding $u_G$ can be embedded in $t$.
\end{lemma}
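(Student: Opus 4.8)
The plan is to prove the two directions separately by setting up an explicit correspondence between simulations of $G$ in $t$, which are \emph{relations} $R\subseteq\Sigma\times N_t$, and embeddings of $u_G$ in $t$, which are \emph{functions} $\lambda\colon N_{u_G}\to N_t$, exploiting the fact that by construction the nodes of $u_G$ are exactly the paths $\Paths(G)$. Before starting I would dispose of the degenerate case: if $G$ has a cycle reachable from $\root$, then $\Paths(G)$ is infinite, so $u_G$ is an infinite but finitely branching tree (finitely branching because $\Sigma$ is finite) and cannot embed into the finite tree $t$, since any infinite path of $u_G$ would map, via condition~2 of embedding, to an infinite descending $\child_t$-chain. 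For the same reason no simulation can exist, as condition~2 of simulation would also force an infinite descending chain. Hence both sides of the equivalence are false and the lemma holds vacuously, so I may assume $G$ is acyclic and $u_G$ finite.

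For the direction $t\preccurlyeq G\Rightarrow t\preccurlyeq u_G$, I would fix a simulation $R$ and construct $\lambda$ by induction on the length of paths, maintaining the invariant that for every path $p\in\Paths(G)$ ending in a symbol $a$ one has $(a,\lambda(p))\in R$. The base case sets $\lambda((\root))=\root_t$, which satisfies the invariant by condition~1 of simulation. For a path $p\cdot a'$, the prefix $p$ (ending in some $a$) has already been assigned with $(a,\lambda(p))\in R$, and since $p\cdot a'\in\Paths(G)$ we have $(a,a')\in E$; condition~2 of simulation then supplies a node $n'$ with $(\lambda(p),n')\in\child_t$ and $(a',n')\in R$, and I set $\lambda(p\cdot a')=n'$, which preserves the invariant. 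It is then routine to check the embedding conditions: condition~1 holds by the base case, condition~2 holds edge-by-edge by construction, condition~3 holds vacuously since the query obtained from $u_G$ has an empty set of descendant edges, and condition~4 (label agreement) is exactly the invariant combined with condition~3 of simulation, giving $\lab_t(\lambda(p))=a=\lab_{u_G}(p)$.

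For the converse $t\preccurlyeq u_G\Rightarrow t\preccurlyeq G$, I would take an embedding $\lambda$ and simply define $R=\{(\lab_{u_G}(p),\lambda(p))\mid p\in N_{u_G}\}$, then verify the three simulation conditions directly: condition~1 from $\lambda(\root_{u_G})=\root_t$, condition~3 from the label-agreement condition~4 of the embedding, and condition~2 from the observation that whenever $(a,n)=(\lab_{u_G}(p),\lambda(p))\in R$ and $(a,a')\in E$, the extension $p\cdot a'$ is again a path of $G$, hence a $\child_{u_G}$-child of $p$, so $\lambda(p\cdot a')$ witnesses the required $n'$. The main obstacle, and the only place that genuinely needs care, is the forward direction: a simulation is merely a relation and need not single out a unique child, so extracting a bona fide function $\lambda$ forces the inductive choice along the unfolding together with the bookkeeping invariant above. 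This inductive construction is precisely the structural induction along $\leq_G$ alluded to for Lemma~\ref{L1}, here reorganized as an induction along the prefix order on $\Paths(G)$.
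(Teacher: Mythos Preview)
Your proposal is correct. The paper itself does not spell out a proof of this lemma; it merely states that Lemmas~\ref{L2} and~\ref{L3} ``can be easily proven by structural induction on the dependency graphs (using the order~$\leq_G$)''. Your argument is a fully worked-out version of the intended routine correspondence: you organize the induction along the prefix order on $\Paths(G)$ rather than along the subgraph order~$\leq_G$, but this is only a cosmetic reorganization (as you yourself note), and arguably the cleaner one for this particular statement since the nodes of~$u_G$ \emph{are} the paths. Your explicit treatment of the cyclic case is also consistent with the paper, which observes just before the lemma that simulation cannot exist when the graph has a cycle reachable from the root.
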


\begin{lemma}\label{L3}
A query $q$ can be embedded in a \rnew{dependency graph} $G$ iff $q$ can be embedded in the unfolding tree of $G$.
\end{lemma}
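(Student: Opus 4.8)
The statement \ref{L3} asserts $G\preccurlyeq q$ iff $u_G\preccurlyeq q$, and I plan to prove the two directions separately, both resting on a single structural observation about the unfolding. Recall that a node of $u_G$ is a path $p=v_0\cdots v_\ell$ of $G$ with $v_0=\root$, that $\lab_{u_G}(p)=v_\ell$, and that $(p,p\cdot a)\in\child_{u_G}$ holds exactly when $p\cdot a\in\Paths(G)$, i.e.\ when $(v_\ell,a)\in E$. The key observation is therefore that the labelling function $\lab_{u_G}$ is itself an embedding of the tree $u_G$ into the graph $G$: it sends $\root_{u_G}$ to $\root$, sends every child edge $(p,p\cdot a)$ of $u_G$ to the edge $(\lab_{u_G}(p),\lab_{u_G}(p\cdot a))=(v_\ell,a)\in E$, and preserves labels trivially. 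In particular $G\preccurlyeq u_G$. Dually, if $(p,p')\in(\child_{u_G})^+$ then $p'$ is a proper extension of $p$, so $\lab_{u_G}$ traces a nonempty $G$-path from $\lab_{u_G}(p)$ to $\lab_{u_G}(p')$, whence $(\lab_{u_G}(p),\lab_{u_G}(p'))\in E^+$.

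For the \emph{if} direction I would simply compose embeddings. Given an embedding $\mu\colon N_q\to N_{u_G}$ witnessing $u_G\preccurlyeq q$, I set $\lambda=\lab_{u_G}\circ\mu$. The four conditions for $\lambda$ being an embedding of $q$ into $G$ follow immediately from the observation above: $\lambda(\root_q)=\lab_{u_G}(\root_{u_G})=\root$; a child edge of $q$ is sent by $\mu$ into $\child_{u_G}$ and then by $\lab_{u_G}$ into $E$; a descendant edge of $q$ is sent into $(\child_{u_G})^+$ and then into $E^+$; and the label condition carries over because $\lab_{u_G}(\mu(n))=\lambda(n)$. Equivalently, this is just the composition-of-embeddings fact $G\preccurlyeq u_G\wedge u_G\preccurlyeq q\Rightarrow G\preccurlyeq q$ stated earlier, instantiated with the embedding $\lab_{u_G}$.

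The \emph{only if} direction is the substantive one and I would prove it by a top-down path construction. Starting from an embedding $\lambda\colon N_q\to\Sigma$ of $q$ into $G$, I build $\mu\colon N_q\to\Paths(G)=N_{u_G}$ by recursion on the tree structure of $q$, maintaining the invariant that the last vertex of $\mu(n)$ equals $\lambda(n)$. At the root I put $\mu(\root_q)=(\root)$, which is legal since $\lambda(\root_q)=\root$. If $n'$ is the child of $n$ via a $\child_q$-edge, I extend by a single symbol, $\mu(n')=\mu(n)\cdot\lambda(n')$; this is a genuine path of $G$ because $(\lambda(n),\lambda(n'))\in E$, and it yields a $\child_{u_G}$-edge $(\mu(n),\mu(n'))$. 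If instead $n'$ is joined to $n$ by a $\desc_q$-edge, then $(\lambda(n),\lambda(n'))\in E^+$, so I fix a witnessing $G$-path $\lambda(n)=u_0,u_1,\ldots,u_j=\lambda(n')$ with $j\geq1$ and set $\mu(n')=\mu(n)\cdot u_1\cdots u_j$; this makes $\mu(n')$ a proper extension of $\mu(n)$, hence $(\mu(n),\mu(n'))\in(\child_{u_G})^+$. Since $q$ is a tree in which every non-root node has a unique parent, the recursion is well defined, and the invariant guarantees both that each extension is a legal path and that $\lab_{u_G}(\mu(n))=\lambda(n)$, so the label condition for $\mu$ follows from that of $\lambda$. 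Checking conditions $1$--$4$ for $\mu$ being an embedding of $q$ into the tree $u_G$ is then routine.

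The main obstacle I anticipate is entirely in this forward direction: making the path construction consistent. The delicate points are choosing, for each descendant edge, a witnessing path realising the $E^+$ relation, and verifying that extending the parent's path keeps the last-vertex invariant so that every $\mu(n)$ is indeed a node of $u_G$ (a valid path of $G$) carrying the right label. I would also note, following the remark preceding the lemma, that $G$ is assumed to have no cycle reachable from $\root$, so $u_G$ is a finite tree and the statement is about a genuine (finite) embedding target; the construction itself, however, only ever produces finite paths and does not otherwise depend on finiteness. An alternative, equally workable route suggested by the text is a structural induction along the well-founded order $\leq_G$ on subgraphs, peeling off one edge at a time, but I find the explicit path construction more transparent.
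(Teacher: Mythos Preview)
Your proof is correct. The paper does not actually give a proof of this lemma; it only remarks that Lemmas~\ref{L2} and~\ref{L3} ``can be easily proven by structural induction on the dependency graphs (using the order $\leq_G$)'' and leaves it at that. Your argument takes a different, more explicit route: for the \emph{if} direction you observe directly that $\lab_{u_G}$ is an embedding $G\preccurlyeq u_G$ and compose, and for the \emph{only if} direction you build the witness $\mu$ by a top-down path extension along the tree structure of $q$. Both approaches work, but yours has the advantage of producing the embedding witness constructively and of making transparent why finiteness of $u_G$ is not essential to the argument (only finitely many paths are ever constructed, one per node of $q$). The paper's suggested induction on $\leq_G$ would instead peel off edges of $G$ one at a time and argue that neither side of the equivalence changes; that is slightly less direct but has the minor aesthetic benefit of treating Lemmas~\ref{L2} and~\ref{L3} uniformly. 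You even acknowledge this alternative at the end, so nothing is missing.
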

In Figure \ref{fuseadd} we present the operations \emph{fuse} and \emph{add}. 
Given two trees $t$ and $t'$, we say that $t\lhd_0 t'$ if $t'$ is obtained from $t$ by applying one of the operations from Figure \ref{fuseadd}.
The \emph{fuse} operation takes two siblings with the same label and creates only one node having below it the subtrees corresponding to each of the siblings.
The \emph{add} operation consists simply in adding a subtree at some place in the tree.
By $\unlhd$ we denote the transitive and reflexive closure of $\lhd_0$.

\begin{figure}[htb] 
  \centering
  \begin{tikzpicture}[yscale=0.75,xscale=0.8]
   \node at (0.375,0) (n0) {$.$};
   \node at (-0.75,-1) (n1) {$a$};
   \node at (0,-1) (n2) {$b$};
   \node at (0.75,-1) (n3) {$b$};
   \node at (1.5,-1) (n4) {$c$};
   \draw[-,semithick] (n0) -- (n1);
   \draw[-,semithick] (n0) -- (n2);
   \draw[-,semithick] (n0) -- (n3);
   \draw[-,semithick] (n0) -- (n4);
   \draw (-0.75,-1.2) -- (-1,-2) -- (-0.5,-2)-- (-0.75,-1.2);
   \node at (-0.7,-1.75) (t1) {\scriptsize $t_1$};
   \draw (0,-1.2) -- (-0.25,-2) -- (0.25,-2)-- (0,-1.2);
   \node at (0.05,-1.75) (t2) {\scriptsize $t_2$};
   \draw (0.75,-1.2) -- (0.5,-2) -- (1,-2)-- (0.75,-1.2);
   \node at (0.8,-1.75) (t3) {\scriptsize $t_3$};
   \draw (1.5,-1.2) -- (1.25,-2) -- (1.75,-2)-- (1.5,-1.2);
   \node at (1.55,-1.75) (t3) {\scriptsize $t_4$};
  \begin{scope}[xshift=2cm,yshift=1cm]
   \node at (0,-1) (n0) {$\underrightarrow{\mathit{fuse}}$};
  \end{scope}
  \begin{scope}[xshift=3.5cm,xscale=0.95]
   \node at (0.375,0) (n0) {$.$};
   \node at (-0.75,-1) (n1) {$a$};
   \node at (0.375,-1) (n2) {$b$};
   \node at (1.5,-1) (n4) {$c$};
   \draw[-,semithick] (n0) -- (n1);
   \draw[-,semithick] (n0) -- (n2);
   \draw[-,semithick] (n0) -- (n4);
   \draw (-0.75,-1.2) -- (-1,-2) -- (-0.5,-2)-- (-0.75,-1.2);
   \node at (-0.7,-1.75) (t1) {\scriptsize $t_1$};
   \draw (0.36,-1.2) -- (-0.25,-2) -- (0.36,-2)-- (0.36,-1.2);
   \node at (0.2,-1.75) (t2) {\scriptsize $t_2$};
   \draw (0.39,-1.2) -- (0.39,-2) -- (1,-2)-- (0.39,-1.2);
   \node at (0.65,-1.75) (t3) {\scriptsize $t_3$};
   \draw (1.5,-1.2) -- (1.25,-2) -- (1.75,-2)-- (1.5,-1.2);
   \node at (1.55,-1.75) (t3) {\scriptsize $t_4$};
  \end{scope}
  \begin{scope}[xshift=8cm,xscale=1]
    \node at (0,0) (n0) {$.$};
    \node at (-0.75,-1) (n1) {$a$};
    \node at (0,-1) (n2) {$b$};
    \node at (0.75,-1) (n3) {$c$};
    \draw[-,semithick] (n0) -- (n1);
    \draw[-,semithick] (n0) -- (n2);
    \draw[-,semithick] (n0) -- (n3);
    \draw (-0.75,-1.2) -- (-1,-2) -- (-0.5,-2)-- (-0.75,-1.2);
    \node at (-0.7,-1.75) (t1) {\scriptsize $t_1$};
    \draw (0,-1.2) -- (-0.25,-2) -- (0.25,-2)-- (0,-1.2);
    \node at (0.05,-1.75) (t2) {\scriptsize $t_2$};
    \draw (0.75,-1.2) -- (0.5,-2) -- (1,-2)-- (0.75,-1.2);
    \node at (0.8,-1.75) (t3) {\scriptsize $t_3$};
  \end{scope}
  \begin{scope}[xshift=9.5cm,yshift=1cm]
   \node at (0,-1) (n0) {$\underrightarrow{\mathit{add}}$};
  \end{scope}
  \begin{scope}[xshift=11cm,xscale=1]
    \node at (0.375,0) (n0) {$.$};
    \node at (-0.75,-1) (n1) {$a$};
    \node at (0,-1) (n2) {$b$};
    \node at (0.75,-1) (n3) {$c$};
    \node at (1.5,-1) (n4) {$d$};
    \draw[-,semithick] (n0) -- (n1);
    \draw[-,semithick] (n0) -- (n2);
    \draw[-,semithick] (n0) -- (n3);
    \draw[-,semithick] (n0) -- (n4);
   \draw (-0.75,-1.2) -- (-1,-2) -- (-0.5,-2)-- (-0.75,-1.2);
   \node at (-0.7,-1.75) (t1) {\scriptsize $t_1$};
   \draw (0,-1.2) -- (-0.25,-2) -- (0.25,-2)-- (0,-1.2);
   \node at (0.05,-1.75) (t2) {\scriptsize $t_2$};
   \draw (0.75,-1.2) -- (0.5,-2) -- (1,-2)-- (0.75,-1.2);
   \node at (0.8,-1.75) (t3) {\scriptsize $t_3$};
   \draw (1.5,-1.2) -- (1.25,-2) -- (1.75,-2)-- (1.5,-1.2);
   \node at (1.55,-1.75) (t3) {\scriptsize $t_4$};
  \end{scope}
 \end{tikzpicture}
 \caption{\label{fuseadd}Operations \emph{fuse} and \emph{add}.}
\end{figure}
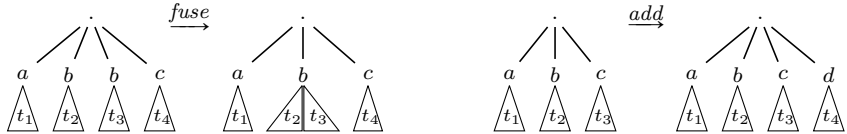
\noindent Note that the fuse and add operations preserve the embedding i.e., given a twig query $q$ and two trees $t$ and $t'$, if $t\preccurlyeq q$ and $t\unlhd t'$, then $t'\preccurlyeq q$.
Furthermore, if we can embed a query $q$ in a tree $t$ which can be embedded in the existential dependency graph of an IMS $S$, we can perform a sequence of operations such that $t$ is transformed into another tree $t'$ satisfying $S$ and $q$ at the same time. 
\rnew{Formally, we have the following.}
\begin{lemma}\label{L6}
Given an IMS $S$, a query $q$ and a tree $t$, if $G_S^\exists\preccurlyeq t$ and $t\preccurlyeq q$, then there exists a tree $t'\in L (S)\cap L(q)$.
The tree $t'$ can be constructed after a sequence of fuse and add operations (consistently with the schema $S$) from the tree $t$ and we denote $t\unlhd_S t'$.
\end{lemma}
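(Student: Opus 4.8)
The plan is to exploit the fact, recorded just before the statement, that both \emph{fuse} and \emph{add} preserve query embeddings: since $t\preccurlyeq q$ and the final tree $t'$ will be obtained from $t$ by such operations, $t'\preccurlyeq q$ follows automatically, so $t'\in L(q)$. Hence the entire burden is to turn $t$ into a tree $t'\in L(S)$ using only fuse and add, where every \emph{add} introduces a subtree that already satisfies $S$ (this is what ``consistently with $S$'' means). First I would unfold the hypothesis $G_S^\exists\preccurlyeq t$: it says $\lab_t(\root_t)=\root_S$ and that for every $(n,n')\in\child_t$ we have $\lab_t(n')\in\simbe(R_S(\lab_t(n)))$. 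Call such a tree \emph{$S$-compatible}. The point is that every label occurring as a child of a node $n$ is \emph{permitted} by $R_S(\lab_t(n))$, so its allowed cardinality interval $\hat N_{R_S(\lab_t(n))}$ has maximum at least $1$; in particular no label will ever need to be \emph{deleted}, only rescaled.

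Second, I would run a structural induction on the height of an $S$-compatible tree, proving the slightly stronger statement that every $S$-compatible tree rooted at a label $a$ can be transformed by fuse/add into a tree of $L((a,R_S))$. At a node labelled $a$ I first \emph{fuse} maximally all children sharing a label, obtaining at most one child per label; each resulting child subtree is again $S$-compatible (the union of children of the fused nodes still follows $E_S^\exists$) and of strictly smaller height, so the induction hypothesis makes every child subtree valid. The children word $w$ of the node now satisfies $w(b)\in\{0,1\}$ with every present label permitted. It then remains to reach, using \emph{add} only, a word $w^*\in L(R_S(a))$ with $w^*\geq w$ componentwise, adding extra copies of an already-valid child subtree, or of a fresh minimal valid subtree $t_b^{\min}$ for a required-but-absent $b$. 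Here the key simplification is that for IMEs the characterizing tuple collapses: $C_{R_S(a)}=\emptyset$ and every $\subseteq$-minimal element of $P_{R_S(a)}$ is a singleton, so $P_{R_S(a)}$ reduces to $\simbu(R_S(a))$ and is already subsumed by the lower bounds of $\hat N_{R_S(a)}$. Thus by Lemma~\ref{lemmawordtuple} the valid words are exactly $\{w:\ \forall b\ w(b)\in\hat N_{R_S(a)}(b)\ \wedge\ \forall (b,c)\in K_{R_S(a)}\ w(b)\geq w(c)\}$, and taking $w^*(b)=\max(1,\min\hat N_{R_S(a)}(b))$ for present or required $b$ and $w^*(b)=0$ otherwise, then raising counts to repair each dependency, produces a valid $w^*\geq w$.

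The step I expect to be the main obstacle is showing that this target $w^*$ really exists and is reachable by \emph{add} alone, i.e. that repairing the counting dependencies $K_{R_S(a)}$ never forces a count above its upper bound and never forces a count \emph{down}. The observation I would use is that if $(b,c)\in K_{R_S(a)}$ then $w(b)\geq w(c)$ holds in \emph{every} valid word, whence $\max\hat N_{R_S(a)}(b)\geq\max\hat N_{R_S(a)}(c)\geq w^*(c)$, so raising $w^*(b)$ up to $w^*(c)$ stays inside $\hat N_{R_S(a)}(b)$; iterating this repair (the relation $K_{R_S(a)}$ being a partial order except on mutually-dependent classes, on which it merely forces equality) terminates in a valid word dominating $w$. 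Two lighter technical points remain: the minimal subtrees $t_b^{\min}$ exist and are finite because $S$ is assumed trimmed and satisfiable, so $G_S^\forall$ has no cycle reachable from the root and the recursive construction ``include every universally required child at its minimum count'' terminates; and the whole sequence of operations is an instance of $\unlhd$, so $t\unlhd_S t'$, and by preservation of embeddings $t'\preccurlyeq q$, giving $t'\in L(S)\cap L(q)$.
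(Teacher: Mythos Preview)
The paper states Lemma~\ref{L6} without proof, so there is no original argument to compare against; your sketch is essentially what is needed and the overall strategy---fuse siblings with equal labels down to multiplicity one, repair each subtree by induction on height, then \emph{add} copies (or fresh minimal valid subtrees) to reach a valid children word at every node---is correct. Your reductions of the IME characterizing tuple ($C_E=\emptyset$, the $\subseteq$-minimal elements of $P_E$ are the singletons $\{b\}$ for $b\in\simbu(E)$) are accurate, and your observation that fuse and add preserve $S$-compatibility and do not increase height makes the induction go through.

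There is one genuine glitch in your explicit formula for $w^*$. The intervals $\hat N_E(b)$ arising from reduced IMEs can be \emph{gapped}: e.g.\ from $(a^{[3,5]})^?$ one gets $\hat N_E(a)=\{0,3,4,5\}$, and from $(a\shuffle b^?)^{[3,5]^?}$ one gets $\hat N_E(a)=\{0,3,4,5\}$, $\hat N_E(b)=[0,5]$, $K_E=\{(a,b)\}$. If $b$ is present and $a$ is not, your rule sets $w^*(b)=1$, $w^*(a)=0$, then the $K_E$-repair raises $w^*(a)$ to $1\notin\hat N_E(a)$. The fix is immediate: both the initial assignment and each $K_E$-repair step must round \emph{up inside} $\hat N_E(b)$, i.e.\ set $w^*(b)$ to the least element of $\hat N_E(b)$ that is $\geq$ the current target. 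With that correction your termination/consistency argument still works: if $(b,c)\in K_E$ then $b,c$ lie in the same atom $A$ with $I^b=1$, hence $\hat N_E(b)=I^A$ (or $(I^A)^?$) and $\hat N_E(c)\subseteq[0,\max I^A]$, so rounding $w^*(b)$ up to at least $w^*(c)$ always lands inside $\hat N_E(b)$; and since atoms are disjoint, repairs in one atom never disturb another.
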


\subsection{Family of characteristic graphs}\label{subsec:family}
Given a schema $S$ and a query $q$, we can capture all trees satisfying both $S$ and $q$ with the characteristic graphs that we introduce next.

\rnew{
More formally, a \emph{characteristic graph $G$} is a tuple $(V_G,\root_G,\lab_G,E_G)$, where $V_G$ is a finite set of vertices, $\root_G\in V_G$ is the root of the graph, $\lab_G:V_G\rightarrow\Sigma$ is a labeling function (with $\lab_G(\root_G) = \root_S$), and $E_G\subseteq V_G\times V_G$ is the set of edges.
Let us assume that $G_S^\exists\preccurlyeq q$ and take such an embedding $\lambda:N_q\rightarrow\Sigma$.
By $\Lambda(q,S,\lambda)$ we denote the set of all \emph{characteristic graphs for $q$ and $S$ w.r.t.\ $\lambda$}.
To construct such a graph, let us start with $G = (V_G,\root_G,\lab_G,E_G)$ where $V_G$ and $E_G$ are empty, and perform the four steps described below.
\begin{enumerate}
\item For \rnew{every} $n$ in $N_q$, add a node $n'$ to $V_G$ such that $\lab_G(n')=\lambda(n)$.
Let $\root_G$ be the node such that $\lab_G(\root_G)=\root_S$.

\item For \rnew{every} $(n_1,n_2)$ in $\child_q$, add $(n_1', n_2')$ to $E_G$, where $n_1'$ and $n_2'$ are the nodes corresponding to $n_1$ and $n_2$, respectively, as constructed at step 1.

\item For \rnew{every} $(n_1, n_2)$ in $\desc_q$, \emph{choose an acyclic path} $a_0,\ldots,a_k$ in $G_S^\exists$ where $\lambda(n_1)=a_0$ and $\lambda(n_2) = a_k$.
Notice that, since $n_1$ and $n_2$ belong to $N_q$, we have already added in $V_G$ two nodes $n_1'$ and $n_2'$, respectively, corresponding to them at step 1.
Then, for every $a_i$ (with $1\leq i\leq k-1$), we add in $V_G$ a node $n_i''$ such that $\lab_G(n_i'') = a_i$.
Also, add in $E_G$ the edges $(n_1',n_1''), (n_1'',n_2''),\ldots, (n_{k-1}'',n_2')$.

\item For \rnew{every} $n$ in $V_G$, take from $G_S^\forall$ the subgraph $(V',\lab_G(n),E')$ rooted at $\lab_G(n)$.
Then, for every $a\neq\lab_G(n)$ in $V'$ add a node $n'$ in $V'$ such that $\lab_G(n')=a$.
Also, for every $(a_1,a_2)\in E'$, add in $E_G$ an edge $(n_1,n_2)$ where $n_1$ and $n_2$ are the nodes corresponding to $a_1$ and $a_2$, respectively.
\end{enumerate}
}
\noindent The following example illustrates the construction of such a graph.

\begin{example}
Take in Figure \ref{fig:lambda} an existential dependency graph $G_S^\exists$, a twig query $q$, and an embedding $\lambda:N_q\rightarrow G_S^\exists$.
Notice that in $G_S^\exists$ we have drawn the universal edges with a full line and those that are existential without being universal with a dotted line.
Then, in Figure \ref{fig:graph} we present an example \rnew{of a graph} $G$ from $\Lambda(q,S,\lambda)$.
Notice that in $G$ we have represented in boxes the nodes corresponding to the images $\lambda(n)$ for the nodes of the query $n\in N_q$.
\qed
\end{example}
\begin{figure}[htb]
\centering
\subfigure[Embedding $\lambda:N_q\rightarrow G_S^\exists$.]{\label{fig:lambda}
\centering
\begin{tikzpicture}
\node at (-2,0) (s0) {$r$};
\node at (-2,-1) (s1) {$c$}edge[<-] (s0);
\node at (-3,-2) (s2) {$a_1$}edge[<-] (s1);
\node at (-1,-2) (s3) {$a_2$}edge[<-] (s1);
\node at (-2,-3) (s4) {$b$}edge[<-] (s2)edge[<-] (s3)edge[->,densely dotted] (s1);;
\node at (1,0) (n0) {$r$}edge[->,bend right, densely dotted] (s0);
\node at (0.5,-1) (n1) {$\wc$}edge[-,double] (n0)edge[->,bend right, densely dotted] (s3);
\node at (0.5,-2) (n2) {$\wc$}edge[-,double] (n1)edge[->,bend right, densely dotted] (s4);
\node at (0.5,-3) (n3) {$\wc$}edge[-,double] (n2)edge[->,bend left, densely dotted] (s3);
\node at (1.5,-1) (n4) {$b$}edge[-,double] (n0)edge[->,bend left, densely dotted] (s4);
\node at (1.5,-2) (n5) {$c$}edge[-] (n4)edge[->,bend right, densely dotted] (s1);
\end{tikzpicture}
}
\subfigure[Graph $G\in\Lambda(q,S,\lambda)$.]{\label{fig:graph}~~~~~~~~~~~~~
\centering
\begin{tikzpicture}
\node at (0,0) (g0) {$\framebox{$r$}$};
\node at (1,0) (g00) {$c$} edge[<-] (g0);
\node at (1.9,0.25) (g01) {$a_1$} edge [<-] (g00);
\node at (1.9,-0.25) (g02) {$a_2$} edge [<-] (g00);
\node at (2.8,0) (g03) {$b$} edge [<-] (g01)edge [<-] (g02);
\node at (-2,-1) (g1) {$c$} edge[<-] (g0);
\node at (-1,-0.75) (g10) {$a_1$} edge [<-] (g1);
\node at (-1,-1.25) (g11) {$a_2$} edge [<-] (g1);\node at (-0.1,-1) (g12) {$b$} edge [<-] (g10)edge [<-] (g11);
\node at (-2,-2) (g2) {$\framebox{$a_2$}$} edge[<-] (g1);
\node at (-1.1,-2) (g20) {$b$} edge[<-] (g2);
\node at (-2,-3) (g3) {$\framebox{$b$}$} edge[<-] (g2);
\node at (-2,-4) (g4) {$c$} edge[<-] (g3);
\node at (-1,-3.75) (g40) {$a_1$} edge [<-] (g4);
\node at (-1,-4.25) (g41) {$a_2$} edge [<-] (g4);
\node at (-0.1,-4) (g42) {$b$} edge [<-] (g40)edge [<-] (g41);
\node at (-2,-5) (g5) {$\framebox{$a_2$}$} edge[<-] (g4);
\node at (-1,-5) (g50) {$b$} edge[<-] (g5);
\node at (1,-1) (g6) {$c$} edge[<-] (g0);
\node at (2,-0.75) (g60) {$a_1$} edge [<-] (g6);
\node at (2,-1.25) (g61) {$a_2$} edge [<-] (g6);
\node at (2.9,-1) (g62) {$b$} edge [<-] (g60)edge [<-] (g61);
\node at (1,-2) (g7) {$a_1$} edge[<-] (g6);
\node at (1.9,-2) (g70) {$b$} edge[<-] (g7);
\node at (1,-3) (g8) {$\framebox{$b$}$} edge[<-] (g7);
\node at (1,-4) (g9) {$\framebox{$c$}$} edge[<-] (g8);
\node at (2,-3.75) (g90) {$a_1$} edge [<-] (g9);
\node at (2,-4.25) (g91) {$a_2$} edge [<-] (g9);
\node at (2.9,-4) (g92) {$b$} edge [<-] (g90)edge [<-] (g91);
\end{tikzpicture}
}
\caption{\label{conp-alg}An embedding from a query $q$ to an existential dependency graph $G_S^\exists$ and a graph $G\in \mathcal G(q,S)$.
In $G_S^\exists$, the universal edges are drawn with a full line and those that are existential without being universal with a dotted line.}
\end{figure}
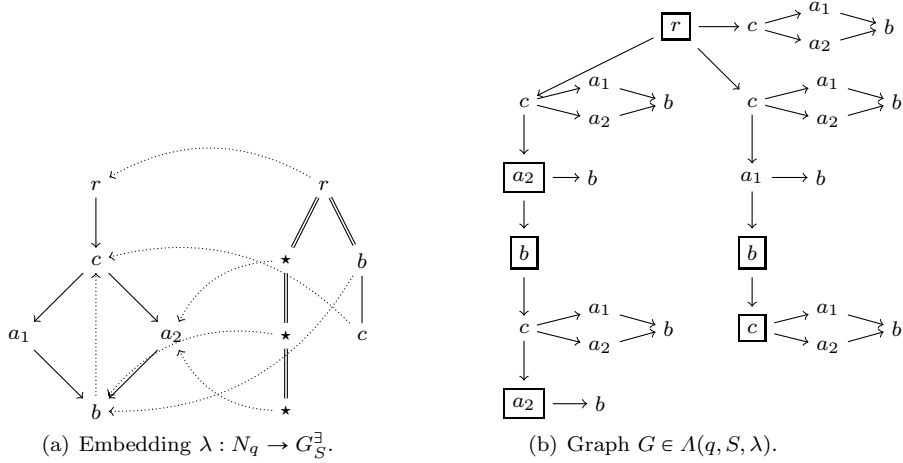
Next, we define the set of all characteristic graphs for $q$ and $S$ w.r.t.\ the all embeddings $\lambda$ of $q$ in $G_S^\exists$:

\[
\mathcal G(q,S)=\{G\in\Lambda(q,S,\lambda)\mid \lambda \text{ is an embedding of } q \text{ in } G_S^\exists\}.
\]

\noindent
\rnew{Note that $G\preccurlyeq q$ and the size of $G$ is polynomially bounded by $|q|\times |\Sigma|^2$ for every $G$ in $\mathcal G(q,S)$.
Indeed, after step 1 of the construction, a characteristic graph $G$ has $|q|$ nodes.
Then, after steps 2 and 3, since at step 3 we allow only acyclic paths of $G_S^\exists$, we add at most $|\Sigma|$ nodes for each already existing node, hence $G$ has at most $|q|\times|\Sigma|$ nodes.
Finally, after 4, since we add at most $|\Sigma|$ nodes for each already existing node, $G$ has at most $|q|\times|\Sigma|^2$ nodes.}

Furthermore, let $\Lambda^*(q,S,\lambda)$ and $\mathcal G^*(q,S)$ be sets of characteristic graphs constructed similarly to $\Lambda(q,S,\lambda)$ and $\mathcal G(q,S)$, respectively, the only difference being that we allow cyclic paths at step 3 of the aforementioned construction.
While the size of the graphs in $\mathcal G(q,S)$ is polynomial , notice that the size of the graphs in $\mathcal G^*(q,S)$ is not necessary polynomial since the possible cyclic paths chosen at step 3 can be arbitrarily long.
Additionally, note that $|\mathcal G(q,S)|$ is finite and may be exponential while $|\mathcal G^*(q,S)|$ may be infinite if the existential dependency graph $G_S^\exists$ contains cycles reachable from the root.

Next, we extend the previous definition of the unfolding to the characteristic graphs.
Given an IME $E$ and a symbol $a$, by $\minnb(E,a)$ we denote the minimum number of occurrences of the symbol $a$ in every unordered word defined by $E$.
Next, we define the \emph{unfolding of a characteristic graph}.
Given a query $q$, an IMS $S$, and a characteristic graph $G\in\mathcal G^*(q,S)$, we construct its unfolding as follows:

\begin{itemize}
\item Let $u_{G}$ be the unfolding of $G$ obtained as defined in Section~\ref{subsec:embed}\rnew{.}
\item Update $u_{G}$ such that for every $n\in N_{u_{G}}$, for every $a\in\Sigma$, 
let $t_a$ the subtree having as root the child of $n$ labeled by $a$.
Next, add copies of $t_a$ as children of $n$ until $n$ has $\minnb(R_S(\lab_{u_{G}}(n)), a)$ children labeled by $a$.
\end{itemize}

\noindent
\rnew{
Notice that every graph $G$ in $\mathcal G^*(q,S)$ is acyclic.
Indeed, when constructing such a graph $G$, after steps 1, 2 and 3, $G$ is basically shaped as a tree.
Then, the subgraphs that we fuse at step 4 are all acyclic since they are subgraphs of the universal dependency graph $G_S^\forall$ that we assume trimmed (cf.\ Section~\ref{subsec:dep}).
Since every graph $G$ in $\mathcal G^*(q,S)$ is acyclic, it has a finite unfolding, which naturally belongs to the language of $S$.
}

\subsection{Complexity results}\label{subsec:complex}
In this section, we use the above defined tools to show the complexity results for IMSs.
First, the dependency graphs and embeddings capture satisfiability and
implication of queries by IMSs.
\begin{lemma}\label{lemmagraph}
Given a twig query $q$ and an IMS $S$: 
\begin{enumerate} 
\item $q$ is satisfiable by $S$ iff $G_S^\exists\preccurlyeq q$, 
\item $q$ is implied by $S$ iff $G_S^\forall\preccurlyeq q$.
\end{enumerate}
\end{lemma}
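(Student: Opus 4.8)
The plan is to prove both equivalences by reducing them to the auxiliary results on dependency graphs (Lemmas~\ref{L1},~\ref{L2},~\ref{L3},~\ref{L6}) together with the composition properties of embeddings stated above. The guiding principle is that $G_S^\exists$ records, through its existential edges, \emph{all} child labels that may occur below a given label in some tree of $L(S)$, whereas $G_S^\forall$ records, through its universal edges, the child labels that \emph{must} occur; hence an embedding of $q$ into $G_S^\exists$ (resp.\ $G_S^\forall$) should witness the existence (resp.\ the unavoidability) of a suitable host tree.

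For the first item, the ($\Rightarrow$) direction is immediate by composition: if some $t\in L(S)$ satisfies $q$, i.e.\ $t\preccurlyeq q$, then the map sending each node of $t$ to its label is an embedding of $t$ into $G_S^\exists$, since the root of $t$ carries $\root_S$ and for every child edge $(n,m)$ of $t$ the label $\lab_t(m)$ occurs in $\ch_t^n\in L(R_S(\lab_t(n)))$, so $\lab_t(m)\in\simbe(R_S(\lab_t(n)))$ and $(\lab_t(n),\lab_t(m))\in E_S^\exists$. This gives $G_S^\exists\preccurlyeq t$, and composing with $t\preccurlyeq q$ yields $G_S^\exists\preccurlyeq q$. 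For ($\Leftarrow$), I start from an embedding $\lambda$ of $q$ into $G_S^\exists$ and turn it into a finite host tree: resolving every descendant edge of $q$ along an \emph{acyclic} path of $G_S^\exists$ (one exists whenever any path does) produces a finite tree $t$ with $t\preccurlyeq q$ and $G_S^\exists\preccurlyeq t$ — this is exactly steps~1--3 of the characteristic-graph construction. Applying Lemma~\ref{L6} to $t$ then delivers a tree $t'\in L(S)\cap L(q)$, i.e.\ $q$ is satisfiable by $S$.

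For the second item, the ($\Leftarrow$) direction uses Lemmas~\ref{L1},~\ref{L2},~\ref{L3}. I fix an arbitrary $t\in L(S)$. By Lemma~\ref{L1}, $G_S^\forall$ is simulated in $t$, so by Lemma~\ref{L2} its unfolding $u_{G_S^\forall}$ embeds into $t$, giving $t\preccurlyeq u_{G_S^\forall}$ (the unfolding is finite because $S$ is trimmed, hence $G_S^\forall$ has no cycle reachable from the root). On the other hand, from $G_S^\forall\preccurlyeq q$ Lemma~\ref{L3} yields $u_{G_S^\forall}\preccurlyeq q$. Composing these two embeddings gives $t\preccurlyeq q$, i.e.\ $t\models q$; as $t$ was arbitrary, $q$ is implied by $S$. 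The ($\Rightarrow$) direction is where the real work lies: I build a single canonical \emph{minimal} tree $t^\ast\in L(S)$ and read the conclusion off it. Concretely, $t^\ast$ is the unfolding of $G_S^\forall$ padded so that every node labeled $c$ has exactly $\minnb(R_S(c),a)$ children labeled $a$ for each $a\in\Sigma$. This tree is finite (trimmedness again), its root carries $\root_S$, and every label and edge of $t^\ast$ comes from $G_S^\forall$, so the label map embeds $t^\ast$ into $G_S^\forall$, i.e.\ $G_S^\forall\preccurlyeq t^\ast$. Since $q$ is implied by $S$ we get $t^\ast\models q$, i.e.\ $t^\ast\preccurlyeq q$, and composing gives $G_S^\forall\preccurlyeq q$.

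The step I expect to be the main obstacle is verifying that the padded unfolding $t^\ast$ indeed belongs to $L(S)$: one must check that the \emph{minimal} child multiset at each node — taking each atom the least admissible number of times and each atom copy with only its mandatory ($I^a=1$) symbols — is a word of the corresponding IME $R_S(c)$. This holds because for an IME $A_1^{I_1}\shuffle\ldots\shuffle A_k^{I_k}$ the multiset obtained by taking $\min(I_i)$ mandatory-only copies of each $A_i$ lies in $L(A_i^{I_i})$ for every $i$, and because $\minnb(R_S(c),a)$ is exactly this minimal count (in particular $\minnb(R_S(c),a)>0$ forces $a\in\simbu(R_S(c))$, so padding only duplicates subtrees whose labels already occur as children, creating no edge outside $G_S^\forall$ and not disturbing the embedding of $q$). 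Throughout, I must also track carefully the two readings of $\preccurlyeq$ — simulation of a graph in a tree versus embedding of a query into a tree or a graph — so that each appeal to the composition rules matches the correct variant.
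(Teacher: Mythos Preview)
Your proof is correct and follows essentially the same approach as the paper. The only noteworthy variation is in part~1: for ($\Rightarrow$) you argue directly via the label map $n\mapsto\lab_t(n)$ to obtain $G_S^\exists\preccurlyeq t$ and then compose, whereas the paper routes the argument through the family $\mathcal G^*(q,S)$; and for ($\Leftarrow$) you build the witness tree from steps~1--3 of the characteristic-graph construction and invoke Lemma~\ref{L6}, while the paper takes the unfolding of a full characteristic graph in $\mathcal G(q,S)$ (which already lies in $L(S)$). These are equivalent packagings of the same idea, and your version is arguably a bit more direct; part~2 is identical to the paper's argument.
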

\begin{proof}
\emph{1)} For the \emph{if} part, we know that $G_S^\exists\preccurlyeq q$, thus the family of graphs $\mathcal G(q, S)$ is not empty.
The unfolding of every graph from $\mathcal G(q, S)$ satisfies $S$ and $q$ at the same time, hence $q$ is satisfiable by $S$.
For the \emph{only if} part, we know that there exists a tree $t\in L(S)\cap L(q)$, and we assume w.l.o.g.\ that it is the unfolding of a graph $G$ from $\mathcal G^*(q,S)$.
Since $t\preccurlyeq q$, we obtain $u_{G}\preccurlyeq q$, hence $G\preccurlyeq q$ (by Lemma~\ref{L3}), which, from the construction of $G$, implies that $G_S^\exists\preccurlyeq q$.

\emph{2)} For the \emph{if} part, we know that $G_S^{\forall}\preccurlyeq q$, which implies by Lemma~\ref{L3} that $u_{G_S^{\forall}}\preccurlyeq q$.
On the other hand, take a tree $t\in L(S)$.
By Lemma~\ref{L1} we have $t\preccurlyeq G_{S}^{\forall}$, which implies by Lemma~\ref{L2} that $t\preccurlyeq u_{G_S^{\forall}}$.
From the last embedding and $u_{G_S^{\forall}}\preccurlyeq q$ we infer that $t\preccurlyeq q$.
Since $t$ can be every tree in the language of $S$, we conclude that $q$ is implied by $S$.
For the \emph{only if} part, we know that for every $t\in L(S)$, $t\preccurlyeq q$.
Consider the tree $t$ obtained as follows: we take $u_{G_S^{\forall}}$ and we duplicate some subtrees in order to have, for each node $n\in N_t$, $\minnb(R_S(\lab_t(n)),a)$ children labeled by $a$.
Naturally, $t$ is in the language of $S$, hence $t\preccurlyeq q$ from the hypothesis.
From the definition of the unfolding, we infer that $G_S^{\forall}\preccurlyeq t$, which implies that $G_S^{\forall}\preccurlyeq q$.
\qed\end{proof}
For instance, the twig query $q=r[a]/b\dblslash d$ can be embedded in the existential dependency graph of the IMS $S$ from Example~\ref{ex:graphs}, thus $q$ is satisfiable by $S$.
In Figure~\ref{fig:implication} we present embeddings of $q$ in $G_S^\exists$ and in a tree $t$ satisfying both $S$ and $q$.
Additionally, notice that the twig query $q=r[a]/b\dblslash d$ cannot be embedded in $G_S^{\forall}$ from Example~\ref{ex:graphs}, and therefore, $q$ is not implied by $S$.
On the other hand, the twig query $q'= r/b\dblslash d$ can be embedded in $G_S^{\forall}$, thus $q'$ is implied by $S$.

\begin{figure}[htb]
\center
  \begin{tikzpicture}[yscale=0.85, xscale=0.85]
    \begin{scope}
      \node at (0,0) (m99) {$r$};
      \node at (-1,-1) (m0) {$\mathit{a}$};
      \node at (0,-1) (m10) {$\mathit{b}$};
      \node at (1,-1) (m20) {$\mathit{c}$};
      \node at (-0.5,-2) (m3) {$\mathit{d}$};
      \draw[->,semithick] (m99) -- (m0);			
      \draw[->,semithick] (m99) -- (m10);
      \draw[->,semithick] (m99) -- (m20);
      \draw[->,semithick] (m10) -- (m0);
      \draw[->,semithick] (m10) -- (m20);
      \draw[->,semithick] (m10) -- (m3);
      \draw[->,semithick] (m0) -- (m3);
    \end{scope}
		\begin{scope}[xshift=3cm, yscale=0.95, xscale=0.85]
      \node at (0,0) (n0) {$r$};
      \node at (-0.5,-1) (n1) {$a$};
      \node at (0.5,-1) (n2) {$b$};
      \node at (0.5,-2) (n3) {$d$};
      \draw[-,semithick] (n0) -- (n1);
      \draw[-,semithick] (n0) -- (n2);
      \draw[-,double,semithick] (n2) -- (n3);
    \end{scope}
    \draw (m99) edge[<-,bend left,densely dotted] (n0);
    \draw (m0) edge[<-,bend left,densely dotted] (n1);
    \draw (m10) edge[<-,bend right,densely dotted] (n2);
    \draw (m3) edge[<-,bend right,densely dotted] (n3);
    \begin{scope}[xshift = 6cm, yscale=0.95, xscale=0.85]
      \node at (0,0) (x0) {$r$};
      \node at (-1,-1) (x1) {$a$};
      \node at (0,-1) (x2) {$b$};
      \node at (1,-1) (x3) {$c$};
      \node at (-1, -2) (x4) {$a$};
      \node at (1, -2) (x5) {$a$};
      \node at (0, -2) (x6) {$d$};
      \draw[-,semithick] (x1) -- (x0);
      \draw[-,semithick] (x2) -- (x0);
      \draw[-,semithick] (x3) -- (x0);
      \draw[-,semithick] (x4) -- (x2);
      \draw[-,semithick] (x5) -- (x2);
      \draw[-,semithick] (x6) -- (x2);
    \end{scope}
    \draw (x0) edge[<-,bend right,densely dotted] (n0);
    \draw (x1) edge[<-,bend right,densely dotted] (n1);
    \draw (x2) edge[<-,bend left,densely dotted] (n2);
    \draw (x6) edge[<-,bend left,densely dotted] (n3);
  \end{tikzpicture}
\caption{\label{fig:implication}Embeddings of $q$ in $G_S^\exists$ and in a tree $t$ which satisfies $S$ and $q$ at the same time.}
\end{figure}
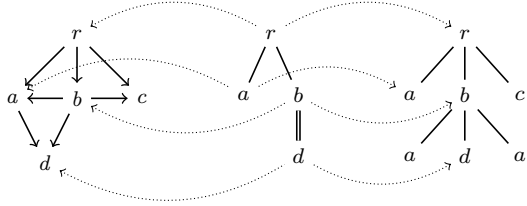
Moreover, we point out that testing the embedding of a query in a dependency graph can be done in polynomial time with a simple bottom-up algorithm. 
From this observation and Lemma~\ref{lemmagraph} we obtain the following.

\begin{theorem}\label{sat-impl-ptime}
$\SAT_{\sims,\Twig}$ and $\IMPL_{\sims,\Twig}$ are in PTIME.
\end{theorem}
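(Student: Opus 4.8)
The plan is to observe that Theorem~\ref{sat-impl-ptime} follows almost immediately from Lemma~\ref{lemmagraph} together with the tractability of embedding a twig query into a dependency graph. By Lemma~\ref{lemmagraph}, deciding $\SAT_{\sims,\Twig}$ on an instance $(S,q)$ amounts to deciding whether $G_S^\exists\preccurlyeq q$, and deciding $\IMPL_{\sims,\Twig}$ on $(S,q)$ amounts to deciding whether $G_S^\forall\preccurlyeq q$. Hence it suffices to show that (i) the dependency graphs $G_S^\exists$ and $G_S^\forall$ can be constructed in polynomial time from $S$, and (ii) given a dependency graph $G=(\Sigma,\root,E)$ and a query $q$, testing $G\preccurlyeq q$ is in PTIME. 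Since both reductions are polynomial, establishing (i) and (ii) yields the theorem.

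For (i), I would build each dependency graph directly from its definition. For every symbol $a\in\Sigma$ I compute the sets $\simbe(R_S(a))$ and $\simbu(R_S(a))$; as already noted in Section~\ref{subsec:dep}, these two sets can be read off an IME in time linear in its size by inspecting the interval of each atom and of each symbol within it. The edge set is then $E_S^\exists=\{(a,b)\mid b\in\simbe(R_S(a))\}$, respectively $E_S^\forall=\{(a,b)\mid b\in\simbu(R_S(a))\}$, each containing at most $|\Sigma|^2$ edges, so both graphs are produced in time polynomial in $|S|$ and $|\Sigma|$.

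For (ii), I would use the standard bottom-up dynamic-programming algorithm for pattern matching. First I precompute the transitive closure $E^+$ of $E$ (e.g.\ by a breadth-first search from each vertex), which is polynomial in $|\Sigma|$. Then I process the nodes of $q$ from the leaves towards $\root_q$, computing for each node $n$ the set $\mathit{cand}(n)\subseteq\Sigma$ of vertices $a$ such that the subquery rooted at $n$ embeds into $G$ with $n$ mapped to $a$. For a leaf $n$, $\mathit{cand}(n)$ consists of all $a$ compatible with $\lab_q(n)$, i.e.\ all of $\Sigma$ if $\lab_q(n)=\wc$ and just $\lab_q(n)$ otherwise. For an internal node $n$, a label $a$ compatible with $\lab_q(n)$ belongs to $\mathit{cand}(n)$ iff for every child $n'$ of $n$ there exists $b\in\mathit{cand}(n')$ with $(a,b)\in E$ when $(n,n')\in\child_q$, respectively $(a,b)\in E^+$ when $(n,n')\in\desc_q$. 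Finally $G\preccurlyeq q$ holds iff $\root\in\mathit{cand}(\root_q)$. Each query node contributes at most $|\Sigma|$ candidate labels, and checking a single candidate inspects each child against at most $|\Sigma|$ reachable vertices, so the whole procedure runs in time polynomial in $|q|$ and $|\Sigma|$.

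The correctness of the bottom-up pass follows by a routine induction on the structure of $q$ along the order from leaves to root, mirroring the inductive definition of embedding. I do not expect a genuine obstacle here; the only point worth a remark is that embeddings need not be injective (as observed in Section~\ref{sec:prelim}), but this causes no difficulty for the candidate-set recurrence, since the images of the children of a node are chosen independently and may coincide. The real mathematical content of the tractability lies in Lemma~\ref{lemmagraph}, which I am free to assume; once it is in place, Theorem~\ref{sat-impl-ptime} is a direct corollary.
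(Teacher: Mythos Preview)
Your proposal is correct and follows exactly the approach the paper takes: reduce via Lemma~\ref{lemmagraph} to testing the embedding of a twig query in a dependency graph, and observe that this test is polynomial by a simple bottom-up algorithm. You have merely fleshed out what the paper states in one sentence (``testing the embedding of a query in a dependency graph can be done in polynomial time with a simple bottom-up algorithm''), so there is nothing to add.
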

Next, we present the complexity of query containment in the presence of IMSs. 
The coNP-completeness of the containment of twig queries~\cite{MiSu04} implies the coNP-hardness of the containment of twig queries in the presence of IMSs. 
Proving the membership of the problem to coNP is, however, not trivial. Given an instance $(p,q,S)$, the set of all trees satisfying $p$ and $S$ can be characterized with a set $\mathcal{G}({p,S})$ containing an exponential number of polynomially-sized graphs and $p$ is contained in $q$ in the presence of $S$ iff the query $q$ can be embedded into all graphs in $\mathcal{G}({p,S})$. This condition is easily checked by a non-deterministic Turing machine.

\begin{theorem}\label{thconp}
$\CNT_{\sims,\Twig}$ is coNP-complete.
\end{theorem}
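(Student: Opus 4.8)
The plan is to prove the two bounds separately. For \textbf{coNP-hardness} I would reduce from containment of twig queries, which is coNP-complete by~\cite{MiSu04}. The reduction uses a trivial schema: over the alphabet $\Sigma$ of labels occurring in the two input queries $p,q$ (which I may assume share a concrete root label $r$), take the IMS $S_\top$ with root $r$ that maps every symbol to the IME $a_1^*\shuffle\cdots\shuffle a_{|\Sigma|}^*$; this is a legal IME since it is disjunction-free and no symbol repeats, and it accepts every unordered word. Hence $L(S_\top)$ is exactly the set of all trees rooted at $r$, so $p\subseteq q$ iff $p\subseteq_{S_\top}q$, and the polynomial-time construction of $S_\top$ yields coNP-hardness.

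For \textbf{membership in coNP} I would show that the complement is in NP, resting on the characterization that $p\subseteq_S q$ iff $G\preccurlyeq q$ for every characteristic graph $G\in\mathcal G(p,S)$. Granting this, a nondeterministic procedure for $p\not\subseteq_S q$ guesses an embedding $\lambda$ of $p$ in $G_S^\exists$ together with an \emph{acyclic} path in $G_S^\exists$ for each descendant edge of $p$, builds the corresponding $G\in\mathcal G(p,S)$, and checks that $q\not\preccurlyeq G$. By the bound established when $\mathcal G(p,S)$ was introduced, each such $G$ has at most $|p|\times|\Sigma|^2$ nodes, so it is a polynomial-size certificate, and testing whether $q$ embeds in $G$ is polynomial (a simple bottom-up procedure, as used for the embedding tests in Lemma~\ref{lemmagraph}). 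Thus $\CNT_{\sims,\Twig}\in\mathrm{coNP}$.

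The routine half of the characterization is the soundness of this procedure. Each $G\in\mathcal G(p,S)$ is acyclic, so its unfolding, augmented with the minimal required numbers of occurrences, is a \emph{finite} tree $t_G\in L(S)\cap L(p)$ (it lies in $L(S)$ by construction of the unfolding, and in $L(p)$ because $G$ was built from an embedding of $p$ and $G\preccurlyeq p$, so Lemma~\ref{L3} gives $u_G\preccurlyeq p$). If $q\not\preccurlyeq G$ for some $G$, then by Lemma~\ref{L3} $q$ does not embed in $u_G$, so $t_G$ witnesses $p\not\subseteq_S q$; conversely, if $p\subseteq_S q$ then $t_G\models q$ for every $G$, whence $q\preccurlyeq G$ by Lemma~\ref{L3}. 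This settles one direction and the soundness of the algorithm.

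The \textbf{main obstacle} is the completeness half: from an arbitrary counterexample tree one must extract a counterexample already living in the \emph{acyclic, polynomially bounded} family $\mathcal G(p,S)$, rather than only in the larger family $\mathcal G^*(p,S)$, whose graphs may use cyclic paths and be of unbounded (indeed infinitely varied) size. Given $t\in L(S)\cap L(p)$ with $t\not\models q$, reading off the labels along the paths of $t$ that realize an embedding of $p$ yields a graph $G^*\in\mathcal G^*(p,S)$ whose unfolding embeds into $t$; since the fuse and add operations only help a query embed (if $t_1\preccurlyeq q$ and $t_1\unlhd t_2$ then $t_2\preccurlyeq q$), the fact $t\not\models q$ forces $q\not\preccurlyeq G^*$. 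I would then replace each cyclic descendant path of $G^*$ by the acyclic path obtained by deleting its cycles, producing some $G\in\mathcal G(p,S)$, and argue that this shortening cannot create a fresh embedding of $q$, so $q\not\preccurlyeq G$ persists. The delicate point is a child edge of $q$ whose image would fall strictly inside a shortened segment: there one must show that the minimal realization already avoids superfluous repeated labels, so no $/$-edge of $q$ genuinely needs the removed intermediate nodes. Pinning down this shortening argument — equivalently, justifying that acyclic paths suffice and hence that the certificate stays polynomial — is the crux of the proof.
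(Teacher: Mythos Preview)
Your proposal is correct and matches the paper's proof: hardness from twig containment~\cite{MiSu04} (the paper does not spell out the trivial schema $S_\top$, but your construction is the intended one), and membership via the criterion that $p\not\subseteq_S q$ iff some $G\in\mathcal G(p,S)$ has $G\not\preccurlyeq q$, with the passage from $\mathcal G^*(p,S)$ to $\mathcal G(p,S)$ dispatched---in the paper as well---by what it calls ``a simple pumping argument'' without further detail. The child-edge subtlety you flag in that pumping step is real, but the paper does not address it explicitly either; your sketch and the paper's are at the same level of rigor on this point.
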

\begin{proof}
The coNP-completeness of the containment of twig queries (Theorem 4 in~\cite{MiSu04}) implies that $\CNT_{\sims,\Twig}$ is coNP-hard.
Next, we prove the membership of the problem to coNP.
Given an instance $(p,q,S)$, a witness is a function $\lambda:N_p\rightarrow \Sigma$.
Testing whether $\lambda$ is an embedding from $p$ to $G_S^\exists$ requires polynomial time.
If $\lambda$ is an embedding, a non-deterministic polynomial algorithm chooses a graph $G$ from $\Lambda(p,S,\lambda)$ and checks whether $q$ can be embedded in $G$.
We claim that $p\varnot\subseteq_S q$ iff there exists a graph $G$ in $\mathcal G(p,S)$ such that $G\not\preccurlyeq q$.

For the \emph{if} case, we assume that there exists a graph $G\in\mathcal G(p,S)$ such that $G\not\preccurlyeq q$.
We know that $G\preccurlyeq p$, thus $u_G\preccurlyeq p$ (by Lemma~\ref{L3}), hence there exists a tree $t\in L(S)$ such that $t\preccurlyeq p$ and $u_G\unlhd_S t$ (by Lemma~\ref{L6}).
If we assume by absurd that $t\preccurlyeq q$, we have $u_G\preccurlyeq q$, thus $G\preccurlyeq q$, which is a contradiction.
We infer thus that there exists a tree $t\in L(S)\cap  L(p)$, such that $t\notin L(q)$, and consequently, $p\varnot\subseteq_S q$.

For the \emph{only if} case, we assume that $p\varnot\subseteq_S q$, hence there exists a tree $t\in L(S)\cap  L(p)$ such that $t\notin L(q)$.
Because $t\in L(S)\cap  L(p)$, we know that there exists a graph $G\in\mathcal G^* (p,S)$, such that $u_G\unlhd_S t$.
We know that $t\not\preccurlyeq q$, thus $u_G\not\preccurlyeq q$ (by Lemma~\ref{L3}), that yields $G\varnot\preccurlyeq q$.
Furthermore, by using a simple pumping argument, we have
$\forall q\in\Twig.\ \forall G\in\mathcal G^*(q,S).\ (G\not\preccurlyeq q\Rightarrow  \exists G'\in\mathcal G(q,S).\ G'\not\preccurlyeq q)$, which implies that there exists a graph $G'\in\mathcal G(p,S)$ such that $G'\not\preccurlyeq q$.
\qed\end{proof}

\subsection{Extending the complexity results to disjunction-free DTDs}\label{subsec:dtd}
We also point out that the complexity results for implication and containment of twig queries in the presence of IMSs can be adapted to disjunction-free DTDs. 
This allows us to state results which, to the best of our knowledge, are novel.
Similarly to the IMSs, we represent a \emph{disjunction-free DTD} as a tuple $S=(\root_S, R_S)$, where $\root_S$ is a designated root label and $R_S$ maps symbols to regular expressions using no disjunction, basically regular expressions of the grammar:
\[
E ::= \epsilon\mid a\mid E^*\mid E^?\mid E^+\mid (E\cdot E),
\]
\noindent
where $a\in\Sigma$ and ``$\cdot$'' stands for the standard concatenation operator. 
Given such an expression $E$, let $\simbu(E)$ be the set of symbols present in all words from $ L(E)$, and $\simbe(E)$ the set of symbols present in at least one word from $ L(E)$:
\begin{gather*}
\simbu(E) = \{a\in\Sigma\mid \forall w\in L(E).\ \exists w_1,w_2.\ w = w_1\cdot a\cdot w_2\},\\
\simbe(E) = \{a\in\Sigma\mid \exists w\in L(E).\ \exists w_1,w_2.\ w = w_1\cdot a\cdot w_2\}.
\end{gather*}
\noindent\rnew{As pointed out for the IMEs, note that the sets $\simbu(E)$ and $\simbe(E)$ can be easily constructed from $E$.}
Next, we adapt the notions of dependency graph and universal dependency graph for disjunction-free DTDs.
The \emph{existential dependency graph} of a disjunction-free DTD $S$ is a directed rooted graph $G_S^\exists=(\Sigma, \root_S, E_S^\exists)$, where
\[
E_S^\exists = \{(a,a')\mid a'\in\simbe(R_S(a))\}.
\]
Similarly, the \emph{universal dependency graph} of a disjunction-free DTD $S$ is a directed rooted graph $G_S^{\forall}=(\Sigma, \root_S, E_S^{\forall})$, where
\[
E_S^{\forall} = \{(a,a')\mid a'\in\simbu(R_S(a))\}.
\]
Analogously to the IMSs, we assume w.l.o.g.\ that we manipulate only disjunction-free DTDs having no cycle reachable from the root in the universal dependency graph.
Otherwise, if there is a cycle in the universal dependency graph, this means that there is no tree consistent with the schema and containing at least one of the symbols implied in that cycle.
Moreover, similarly to IMSs, for a symbol $a\in\Sigma$ and a disjunction-free regular expression $E$, by $\minnb(E,a)$ we denote the minimum number of occurrences of the symbol $a$ in every word defined by $E$.

Next, we state our complexity results for disjunction-free DTDs.

\begin{theorem}\label{thcor}
$\IMPL_{\mathit{disj\text{-}free}\text{-}\DTD,\Twig}$ is in PTIME and $\CNT_{\mathit{disj\text{-}free}\text{-}\DTD,\Twig}$ is coNP-complete.
\end{theorem}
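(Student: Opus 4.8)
The plan is to transfer, essentially verbatim, the machinery developed for IMSs and to re-run the proofs of Theorem~\ref{sat-impl-ptime} and Theorem~\ref{thconp}, now on top of the existential and universal dependency graphs $G_S^\exists$ and $G_S^\forall$ that we have just defined for disjunction-free DTDs. First I would check that the auxiliary results survive the move from unordered content models to disjunction-free regular expressions. Lemmas~\ref{L2} and~\ref{L3} speak only about graphs, their unfoldings, and embeddings of queries, so they hold unchanged once $G_S^\exists$ and $G_S^\forall$ are defined. Lemma~\ref{L1} carries over because $\simbu(R_S(a))$ still collects exactly the labels that must occur among the children of every $a$-node of any $t\in L(S)$; the simulation of $G_S^\forall$ in $t$ is built as before, the order among the siblings being irrelevant. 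With these in place, the analogue of Lemma~\ref{lemmagraph} holds: a twig query $q$ is implied by $S$ iff $G_S^\forall\preccurlyeq q$, and $q$ is satisfiable by $S$ iff $G_S^\exists\preccurlyeq q$.

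For $\IMPL_{\mathit{disj\text{-}free}\text{-}\DTD,\Twig}$ I would then conclude exactly as for Theorem~\ref{sat-impl-ptime}: testing $G_S^\forall\preccurlyeq q$ is a polynomial bottom-up embedding check, and the \emph{only if} direction produces the canonical witness obtained by unfolding $G_S^\forall$ and duplicating subtrees until each node $n$ has $\minnb(R_S(\lab(n)),a)$ children labelled $a$. For $\CNT_{\mathit{disj\text{-}free}\text{-}\DTD,\Twig}$ the coNP lower bound is inherited from the containment of twig queries~\cite{MiSu04}, and membership in coNP follows the proof of Theorem~\ref{thconp}: guess an embedding $\lambda\colon N_p\to\Sigma$ of $p$ into $G_S^\exists$, pick a characteristic graph $G\in\Lambda(p,S,\lambda)$, and verify $G\not\preccurlyeq q$; the family $\mathcal G(p,S)$ consists of polynomially many graphs each of size $O(|p|\times|\Sigma|^2)$, and $p\not\subseteq_S q$ iff some $G\in\mathcal G(p,S)$ satisfies $G\not\preccurlyeq q$. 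The only ingredient that needs re-proving here is the analogue of Lemma~\ref{L6}, i.e.\ that from a tree $t$ with $G_S^\exists\preccurlyeq t$ and $t\preccurlyeq p$ one can reach, by fuse and add steps performed consistently with the schema, a tree $t'\in L(S)\cap L(p)$.

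The main obstacle, and the one place where the disjunction-free DTD setting genuinely differs from the IMS setting, is exactly this consistency of the witness constructions with the schema. Unlike IMEs, the disjunction-free regular expressions of the grammar above may repeat a symbol, and so they can impose exact-count and counting-dependency constraints such as $a\cdot a$ (requiring precisely two children) or $(a\cdot b)^+$ (requiring equally many $a$- and $b$-children). I expect to dispose of this as follows. First, a disjunction-free language always contains its \emph{minimal word}, the one obtained by taking zero copies at every $*$ and every $?$ and one copy at every $+$; because there is no disjunction, these choices never conflict, so this single word simultaneously realises $\minnb(R_S(a),a')$ occurrences of every symbol $a'$. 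Hence the $\minnb$-completed unfoldings used as witnesses in both proofs indeed satisfy the disjunction-free DTD. Second, twig embeddings are not required to be injective and are insensitive to sibling multiplicities, so a query branching into several identically labelled children can always be mapped onto a single child; consequently these minimal witnesses already suffice to embed (or refuse) the query, and the fuse and add operations---being monotone for embeddings---can be steered to produce exactly the minimal counts, hence to stay within $L(S)$. Combining these two observations re-establishes Lemma~\ref{L6} for disjunction-free DTDs and closes both statements.
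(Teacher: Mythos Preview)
Your approach is essentially the paper's: both reduce implication to the embedding test $G_S^\forall\preccurlyeq q$ via the analogue of Lemma~\ref{lemmagraph}, take the coNP lower bound from~\cite{MiSu04}, and obtain coNP membership by rerunning the proof of Theorem~\ref{thconp} over the characteristic-graph family. Your identification of the ``minimal word'' as the key adaptation point matches the paper's justification, which observes that for a disjunction-free regular expression $E$ the word realising exactly $\minnb(E,a)$ occurrences of each symbol $a$ lies in $L(E)$, and that twig queries are order-oblivious.

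One slip to correct: you write that ``the family $\mathcal G(p,S)$ consists of polynomially many graphs.'' This is false; the paper explicitly notes that $|\mathcal G(q,S)|$ may be exponential (there can be exponentially many embeddings $\lambda$ and exponentially many choices of acyclic paths at step~3). What the coNP argument needs, and what you also state, is that each individual $G\in\mathcal G(p,S)$ has size $O(|p|\times|\Sigma|^2)$, so a nondeterministic machine can guess one such $G$ together with $\lambda$ and check $G\not\preccurlyeq q$ in polynomial time. Your algorithm is correct; only the cardinality claim is wrong and should be dropped.
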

\begin{proof}
We claim that a query $q$ is implied by a disjunction-free DTD $S$ iff $G_S^{\forall}\preccurlyeq q$ and since the embedding of a query in a graph can be computed in polynomial time, this implies that $\IMPL_{\mathit{disj\text{-}free}\text{-}\DTD,\Twig}$ is in PTIME.
The proof follows from the proof of Lemma~\ref{lemmagraph}.2.
The coNP-completeness of the containment of twig queries (Theorem 4 in~\cite{MiSu04}) implies that $\CNT_{\mathit{disj\text{-}free}\text{-}\DTD,\Twig}$ is coNP-hard.
Theorem~\ref{thconp} states the coNP-completeness of the query containment in the presence of IMSs and an easy adaptation of its proof technique yields the membership of $\CNT_{\mathit{disj\text{-}free}\text{-}\DTD,\Twig}$ to coNP.
The mentioned proofs can be adapted because given a disjunction-free regular expression $E$ and a word $u\in L(E)$, $u$ \rnew{can in fact be }obtained as an ordering of the unordered word $w = \biguplus_{a\in\Sigma} a^{\minnb(E,a)}$.
Moreover, the order imposed by the DTD on the siblings is not important because the twig queries are order-oblivious.
\qed\end{proof}

\section{Expressiveness of DIMS}\label{sec:expressiveness}
First, we compare the expressive power of DIMSs with yardstick languages of unordered trees. 
We begin with FO logic that uses only the binary $\child$ predicate and the unary label predicates $P_a$ with $a\in\Sigma$. 
It is easy to show that DIMSs are not comparable with FO. With a simple rule $a\rightarrow (b\shuffle c)^*$ a DIMS can express the language of trees where every node labeled by $a$ has as children only nodes labeled by $b$ and $c$ such that the number of $b$'s is equal to the number of $c$'s. Such language cannot be captured with FO for reasons similar to those for which it cannot be expressed in FO whether the cardinality of the universe is even. 
There are languages of unordered trees expressible by FO, but not expressible by DIMSs e.g., the language of trees that contain exactly two nodes labeled $b$.
\rnew{Such languages are not expressible by DIMSs for reasons similar to those for which they cannot be expressed by DTDs, more precisely they are not closed under substitution of subtrees with the same root type (cf.\ Lemma 2.10 in~\cite{PaVi00}).}
By using exactly the same examples, note that DIMSs and MSO are also incomparable.
MSO with Presburger constraints~\cite{SeScMu03,SeScMu08,BT05,BTT05} is essentially an extension of MSO that additionally allows elements of arithmetic (numerical variables and value comparisons) and unary functions $\#a$ that return the number of children of a node having a given label $a\in\Sigma$. 
This extension is very powerful and can express Parikh images of arbitrary regular languages. 
DIMSs are strictly less expressive than Presburger MSO as they use a strict restriction of unordered regular expressions.

Next, we compare the expressive power of DIMSs and DTDs. 
For this purpose, we introduce a simple tool for comparing regular expressions with DIMEs. 
Given a regular expression $R$, the language $L(R)$ of unordered words is obtained by removing the relative order of symbols from every ordered word defined by $R$. 
A DIME $E$ \emph{captures} $R$ if $L(E)=L(R)$. 
This tool is equivalent to considering DTDs under \emph{commutative closure}~\cite{BeMi99,NeSc99}. 
We believe that this simple comparison is adequate because if a DTD is to be used in a data-centric application, then supposedly the order between siblings is not important. 
Therefore, a DIME that captures a regular expression defines basically the same admissible content model of a node, without
imposing an order among the children.

Naturally, by using the above notion to compare the expressive powers of DTDs and DIMSs, DTDs are strictly more expressive than DIMSs.
For example, the commutative closure of the regular expression $(a\cdot (b\mid c))^*$ cannot be expressed by a DIME.
Various classes of regular expressions have been reported in widespread use in real-world schemas and have been studied in the
literature: \emph{simple regular expressions}~\cite{BeNeVa04,MaNeSc04}, \emph{single occurrence regular expressions}
(SOREs)~\cite{BNSV10}, \emph{chain regular expressions} (CHAREs)~\cite{BNSV10}. DIMEs are strictly more expressive than CHAREs and incomparable to the other mentioned classes of regular expressions.

Finally, we investigate how many real-life DTDs can be captured with DIMSs and use the comparison on the XMark benchmark~\cite{SWKCMB02} and the University of Amsterdam XML Web Collection~\cite{GrMa11}. 
All 77 regular expressions of the XMark benchmark are captured by DIMEs, and among them 76 by IMEs. 
As for the DTDs from the University of Amsterdam XML Web Collection, $92\%$ of regular expressions are captured by DIMEs and among them $78\%$ by IMEs. 
We also point out that CHAREs, captured by DIMEs, are reported to represent up to 90\% of regular expressions used in real-life DTDs~\cite{BNSV10}. 
These numbers give a generally positive coverage, but should be interpreted with caution, as we do not know which of the considered DTDs were indeed intended for data-centric applications.

\section{Related work}\label{sec:related:work}
Languages of unordered trees can be expressed by \emph{logic formalisms} or by \emph{tree automata}. 
Boneva et al.\ \cite{BT05,BTT05} make a survey on such formalisms and compare their expressiveness. 
The fundamental difference resides in the kind of constraints that can be expressed for the allowed collections of children for some node. 
We mention here only formalisms introduced in the context of XML. 
{\em Presburger automata} \cite{SeScMu03}, {\em sheaves automata} \cite{DaLu03}, and the {\em TQL logic} \cite{CG04} allow to express {\em Presburger constraints} on the numbers of occurrences of the different symbols among the children of some node.
Suitable restrictions allow to obtain the same expressiveness as the \emph{Presburger MSO logic} on unordered trees \cite{BT05,BTT05}, strictly more expressive than DIMSs. 
Additionally, we believe that DIMSs are more appropriate to be used as schema languages, as they were designed as such, in particular regarding the more user-friendly DTD-like syntax.

\rnew{Languages of unordered trees can be also expressed by considering DTDs under \emph{commutative closure}~\cite{BeMi99,NeSc99}.
We assume DTDs using arbitrary regular expressions, not necessarily one-unambiguous~\cite{BuWo98} as required by the W3C.
We also point out that it has been recently shown that it is PSPACE-complete to decide whether a given regular expression can be rewritten as an equivalent one-unambiguous one~\cite{CDLM13}.
Given a DTD using arbitrary regular expressions under commutative closure, we say that an (ordered) tree matches such a DTD iff every tree obtained by reordering of sibling nodes also matches the DTD.
}
However, it is PSPACE-complete to test whether a DTD defines a commutatively-closed set of trees~\cite{NeSc99} and, moreover, such a DTD may be of exponential size w.r.t.\ the size of the alphabet, which makes such DTDs unfeasible.
Another consequence of the high expressive power of DTDs under commutative closure is that the membership problem is
NP-complete~\cite{KoTo10}. 
Therefore, these formalisms were not extensively used in practice. 
From a different point of view, Martens et al.~\cite{MaNe05,MaNeGy08} investigate \rnew{DTDs equipped with} formulas from the $\mathcal{SL}$ logic that specifies unordered languages and obtain complexity improvements for typechecking XML transformations.

The unordered concatenation operator ``$\shuffle$'' should not be confused with the \emph{shuffle} (\emph{interleaving}) operator ``$\&$'' used in a restricted form in XML Schema and RELAX NG to define order-oblivious, \rnew{yet still ordered, content}.
On the one hand, $a^*\&b$ defines all ordered words with an arbitrary number of $a$'s and exactly one occurrence of $b$, and analogously, $a^*\shuffle b$ defines all unordered words with exactly the same characteristic. 
On the other hand, $(a\& b)^*$ defines ordered words of the form $w_1\cdot\ldots\cdot w_n$, where the factors $w_1,\ldots,w_n$ are either $ab$ or $ba$, while $(a\shuffle b)^*$ defines unordered words having the same number of $a$'s and $b$'s. 
For instance, $(a\& b)^*$ does not accept the ordered word $aabb$ while it has the same number of $a$'s and $b$'s.
Adding the shuffle and interval multiplicities to the regular expressions increases the computational complexity of fundamental
decision problems such as: membership~\cite{BeBjHo11,Hovland12}, inclusion, equivalence, and intersection~\cite{GeMaNe09}. 
Colazzo et al.~\cite{CGPS13,CoGhSa09,GhCoSa08} propose efficient algorithms for membership and inclusion of \emph{conflict-free types}, a class of regular expressions with shuffle and numerical constraints using intervals. 
Their approach is based on capturing a language with a set of constraints, similar to our characterizing tuples for DIMEs. 
While conflict-free types and DIMEs both forbid repetitions of symbols, they differ on the restrictions imposed on the use of the operators and the interval multiplicities. 
Consequently, they are incomparable.

\rnew{We finally point out that the static analysis problems involving twig queries i.e., twig query satisfiability~\cite{BeFaGe08}, implication~\cite{HKIF11,BjMaSc13}, and containment~\cite{NeSc06} in the presence of schema have been extensively studied in the context of DTDs.
However, to the best of our knowledge, these problems have not been previously studied neither for the mentioned unordered schema languages, nor for DTDs using classes of regular expressions extended with counting and interleaving.
}

\section{Conclusions and future work}\label{sec:conclusions}
We have studied schema languages for unordered XML.
First, we have investigated languages of unordered words and we have proposed disjunctive interval multiplicity expressions (DIMEs), a subclass of unordered regular expressions for which two fundamental decision problems, membership of an unordered word to the language of a DIME and containment of two DIMEs, are tractable.
Next, we have employed DIMEs to define languages of unordered trees and have proposed disjunctive interval multiplicity schema (DIMS) and its restriction, disjunction-free interval multiplicity schema (IMS).
DIMSs and IMSs can be seen as DTDs using restricted classes of regular expressions and interpreted under commutative closure to define unordered content models.
These restrictions allow to maintain a relatively low computational complexity of basic static analysis problems while allowing to capture a significant part of the expressive power of practical DTDs.

\rnew{As future work, we want to study whether the restrictions imposed by the grammar of DIMEs can be relaxed while maintaining the tractability of the problems of interest.}
Moreover, we would like to investigate learning algorithms for the unordered schema languages proposed in this paper.
We have already proposed learning algorithms for restrictions of DIMSs and IMSs~\cite{CiSt13} and we want to extend them to take into account all the expressive power.
We also aim to apply the unordered schemas to query minimization~\cite{ACLS02} i.e., given a query and a schema, find a smaller yet equivalent query in the presence of the schema. 
Furthermore, we want to use unordered schemas and optimization techniques to boost the learning algorithms for twig queries~\cite{StWi12}.

\bibliographystyle{plain}
\bibliography{dime}
\end{document}